\newcommand*{\gammapoor}{\ensuremath{\gamma_{a=0}}} 
\newenvironment{proof}{\textit{Proof.}}{\hfill$\square$}
\newcommand{\EE}{\mathbb{E}}
\newcommand{\eps}{\varepsilon}
\newcommand{\R}{\mathbb{R}}
\newcommand{\1}{\mathbb{1}}
\newcommand{\tmax}{\tau}
\newcommand\independent{\protect\mathpalette{\protect\independenT}{\perp}}\def\independenT#1#2{\mathrel{\rlap{$#1#2$}\mkern2mu{#1#2}}}
\newtheorem{thm}{Theorem}[section]
\newtheorem{lemma}{Lemma}%[thm]
\newtheorem{example}{Example}
\tikzset{every picture/.style=remember picture}
\tiny\color{gray},
\numberwithin{equation}{section}
\title{Nonparametric estimation of the interventional disparity indirect effect among the exposed}%interventional indirect disparity effect among the exposed}
\author{Helene C. W. Rytgaard \& Amalie Lykkemark Møller \& Thomas A. Gerds}
\begin{document}

\maketitle

\begin{abstract}
  In situations with non-manipulable exposures, interventions can be
  targeted to shift the distribution of intermediate variables between
  exposure groups to define interventional disparity indirect effects.
  In this work, we present a theoretical study of identification and
  nonparametric estimation of the interventional disparity indirect
  effect among the exposed. The targeted estimand is intended for
  applications examining the outcome risk among an exposed population
  for which the risk is expected to be reduced if the distribution of
  a mediating variable was changed by a (hypothetical) policy or
  health intervention that targets the exposed population
  specifically.  We derive the nonparametric efficient influence
  function, study its double robustness properties and present a
  targeted minimum loss-based estimation (TMLE) procedure. All
  theoretical results and algorithms are provided for both uncensored
  and right-censored survival outcomes.  With offset in the ongoing
  discussion of the interpretation of non-manipulable exposures, we
  discuss relevant interpretations of the estimand under different
  sets of assumptions of no unmeasured confounding and provide a
  comparison of our estimand to other related estimands within the
  framework of interventional (disparity) effects. Small-sample
  performance and double robustness properties of our estimation
  procedure are investigated and illustrated in a simulation study.

\end{abstract}  
  
\section{Introduction}
\label{sec:introduction}

We consider the data situation
\(W \rightarrow A \rightarrow Z \rightarrow Y\) with covariates \(W\),
exposure \(A\), intermediate (mediator) variable \(Z\), and an outcome
\(Y\). We assume that \(Z\) is observed immediately after \(A\), so
that there are no exposure-induced \(Z\)-\(Y\) confounders.  For the
applications that we address, the exposure \(A\) is non-manipulable,
measuring for example health conditions, race or socioeconomic status,
and the intermediate variable \(Z\) represents an intervenable
circumstance that could be subject to policy intervention. In many
examples the exposure variable separates low from high risk
populations, and it is of interest to evaluate whether a policy
intervention on \(Z\) can reduce health disparities between the
exposure groups. In this work, our aim is to estimate the effect of a
(hypothetical) intervention which is applied to the exposed (the high
risk group), and modifies the distribution of \(Z\) to be as among the
unexposed (the low risk group). We define these effects as target
parameters in a causal inference framework and derive the efficient
influence functions in several different data settings. We describe
how to construct asymptotically linear nonparametric estimators using
the targeted minimum loss-based estimation (TMLE) procedure
\citep{van2018targeted} and theoretically verify double robustness
properties. Statistical inference for the TMLE is based on estimates
of the efficient influence function.

Our parameters of interest are defined under particular stochastic
stochastic interventions
\citep{diaz2013assessing,young2014identification,haneuse2013estimation}
shifting the observed distribution of \(Z\) to be as among the
unexposed. In mediation analysis, where the aim is to decompose total
effects into direct and indirect effects, stochastic interventions are
similarly used to define \textit{interventional} direct and indirect
effects which decompose a total exposure-outcome effect
\citep{didelez2006direct,vanderweele2014effect,zheng2017longitudinal,vansteelandt2017interventional,diaz2020causalstochastic,nguyen2021clarifying}.
In contrast to direct effects and indirect effects defined in terms of
unobservable composite counterfactuals
\citep{robins1992identifiability,pearl2001direct}, the identification
of interventional direct and indirect effects does not rely on
cross-world assumptions. Furthermore, interventional direct and
indirect effects are often argued to be policy-relevant
\citep{vanderweele2013policy}, as they actually correspond to
conceivable interventions on intermediate variable distributions. Our
causal estimands are related to interventional indirect parameters,
and particularly the interventional indirect effects among the exposed
defined by \cite{vansteelandt2012natural}, but they are not
identical. As we discuss more thoroughly in Sections
\ref{sec:interventional:interpret:1} and \ref{sec:related:parameters},
the interpretation differ to the extend that we do not consider
interventions on the exposure variables.

Our work is intended for studies of non-manipulable exposures for
which real-life interventions are not meaningful
\citep{vanderweele2012causal,vanderweele2014causal}. In continuation
of related considerations by
\cite{micali2018maternal,naimi2016mediation}, we refer to our causal
estimands as \textit{interventional disparity indirect effects among
  the exposed}. We emphasize that, in contrast to
\cite{micali2018maternal}, we are interested only in this indirect
effect in the subpopulation which is actually exposed. As is often the
case (also outside the mediation context), the exposed population can
be highly different than the unexposed, and policy makers often deal
with the decision of whether to implement interventions among the
exposed population specifically, or more broadly. Similarly, the
treatment effect among the treated can be of more interest than
average effects when, for example, treated individuals are very
different from the untreated
\citep{imbens2004nonparametric,heckman2001policy}.

We derive the efficient influence functions for the statistical
parameters representing our causal estimands under structural
assumptions both in the setting with uncensored outcome and in the
case where the outcome is a right censored time to event variable. We
then propose a targeted minimum loss-based estimation procedure
\citep{van2011targeted,van2018targeted}. For all settings, we study
the double (multiple) robustness properties by computing the relevant
second-order remainders, and we show that these remainders display a
desired structure that enables utilization of machine learning based
initial estimation achieving rates of \(n^{-1/4}\), such as the highly
adaptive lasso
\citep{benkeser2016highly,van2017generally,rytgaard2021estimation}. The
fact that the effect is targeted to the exposed, rather than to the
full population, changes the statistical estimation problem and the
statistical estimation procedure. We highlight these differences and
further illustrate them in our simulation study.

The article is organized as follows. Section
\ref{sec:setting:notation} introduces notation for our setting with
uncensored outcome variables, as well as two running examples.
Section \ref{sec:target:parameter} introduces the target parameter,
discusses the causal assumptions needed for different interventional
interpretations, and relates the parameter to other related parameters
in the literature. Section \ref{sec:statistical:estimation:problem}
analyzes the statistical estimation problem, presenting the efficient
influence function and results on the double robustness properties of
the estimation problem. Section \ref{sec:tmle} introduces our targeted
minimum loss-based estimation procedure. Section
\ref{sec:simulation:study} presents a simulation study to investigate
small-sample properties, verify double robustness properties and
illustrate the implications of unmeasured \(A\)-\(Z\) confounding.
The extension of our methods to settings where the outcome is a
right-censored time to event variable is given in Section
\ref{sec:event:history:setting}.  Section \ref{sec:discussion} closes
with a discussion.

\section{Setting and notation}
\label{sec:setting:notation}

We consider a setting with subject-specific observed data on the form
\(O = (W, A, Z, Y)\), where \(W\in\R^d\) are covariates,
\(A\in\lbrace 0,1\rbrace\) is a binary exposure variable,
\(Z\in \lbrace 0,1\rbrace\) is a binary intermediate variable, and
\(Y\in\lbrace 0,1\rbrace\) is a binary outcome variable. In the
applications that we have in mind, the exposure \(A\) defines two
subpopulations, an unexposed (\(A=0\)) population and an exposed
(\(A=1\)) population. The general population may contain other
subpopulations. Examples \ref{ex:motivating:1} and
\ref{ex:motivating:2} below describe two different motivating
applications from previous research
\citep{moller2022hypothetical,andersen2021mediating}.

\begin{example}
  Consider data on emergency calls, where individuals calling the
  emergency medical services report chest pain (\(A=0\)) or not
  (\(A=1\)), and subsequently are either dispatched an ambulance
  (\(Z=1\)) or not (\(Z=0\)).  Survival status (for now uncensored)
  after 30 days is measured by \(Y\in \lbrace 0,1\rbrace\), with
  \(Y=0\) if the individual is alive and \(Y=1\) if not. In this
  example, the symptom presentation of chest pain (or absence of chest
  pain) is considered the non-manipulable exposure and the ambulance
  dispatch is the manipulable intermediate variable, the distribution
  of which could be changed if for example the dispatch protocols were
  modified. We are interested in learning the expected change in
  survival among the individuals presenting without chest pain, had
  they been as likely to receive emergency ambulances as similar
  individuals who presented with chest pain.
  \label{ex:motivating:1} 
\end{example}    

\begin{example}
  Consider data on low and high income heart failure patients where
  \(A=1\) represents low income, \(A=0\) represents high income, \(Z\)
  is an indicator of initiating medical treatment after the heart
  failure diagnosis, and \(Y\in\lbrace 0,1\rbrace\) represents
  survival status after 1 year, with \(Y=0\) if the individual is
  alive and \(Y=1\) if not.  The indicator of low income is considered
  the non-manipulable exposure and the variable indicating treatment
  initiation is the manipulable intermediate variable. In this
  example, we are interested in the expected change in survival among
  the low-income heart failure patients, had they been as likely to
  initiate medical treatment as similar high income heart failure
  patients.
  \label{ex:motivating:2}
\end{example}

We assume that we observe data of \(n\) independent subjects,
\(O_1, \ldots, O_n \overset{iid}{\sim} P_0\), where \(P_0\) belongs to
a nonparametric statistical model \(\mathcal{M}\). Throughout,
corresponding to a \(P\in\mathcal{M}\), we let
\(\pi (a\mid w)=P(A=a\mid W=w)\) denote the conditional distribution
of exposure \(A\) given covariates and \(\bar{\pi}(a) = P(A=a)\)
denote the marginal distribution of \(A\).  We let
\(\gamma(z \mid a, w)=P(Z=z\mid A=a,W=w)\) denote the conditional
distribution of the intermediate variable \(Z\) given exposure and
covariates, and we use \(Q ( z, a, w) = \EE[Y \mid Z=z, A=a, W=w]\) to
denote the conditional expectation of \(Y\) given the intermediate
variable, the exposure and the covariates. In Section
\ref{sec:event:history:setting}, we extend the setting to cover
right-censored event time outcome in the presence of competing risks.

\section{Target parameter}
\label{sec:target:parameter}

We are interested in quantifying the effect of a health or policy
intervention that is implemented in the exposed subpopulation,
targeting modification of the intermediate variable \(Z\) by shifting
the distribution of \(Z\) to be as among the unexposed.  This
corresponds to a particular stochastic intervention on \(Z\), and we
define our statistical target parameter
\(\Psi \,:\,\mathcal{M}\rightarrow\R\) as follows
\begin{align}
  \Psi(P) =
  \EE \bigg[ \sum_{z=0,1} \EE[ Y \mid A, Z=z, W]  \big( \gamma(z \mid A=0, W) - \gamma(z \mid A, W)\big) \bigg\vert A=1\bigg].
  \label{eq:statistical:target:parameter}
\end{align}
The parameter \(\Psi(P)\) represents the average outcome difference
among the exposed (\(A=1\)) when changing the probability distribution
of the intermediate variable \(Z\) to be as among the unexposed
(\(A=0\)).

\subsection{Interventional interpretations formulated in a
  counterfactual framework}
\label{sec:interventional:interpret:1}

To formally discuss the interpretation of the parameter defined in
Equation \eqref{eq:statistical:target:parameter}, we use a framework
of counterfactual variables
\citep{neyman1923applications,rubin1974estimating,robins1986new,robins1987addendum}. First
define \(Y^z\) as the counterfactual version of \(Y\) that would be
observed were \(Z\) intervened upon and set to \(z\). Recall next that
\(\gamma(z \mid a, w)\) denotes the conditional distribution of the
intermediate variable \(Z\) given exposure and covariates. We now
further denote by
\begin{align}
\gammapoor(z \mid w) = P( Z=z \mid A=0, W=w), \quad z=0,1,
  \label{eq:gamma:0}
\end{align}
the distribution of \(Z\) in the unexposed group, conditional on
covariates \(W\), and by
\begin{align}
  \gamma_{a=1}(z \mid w) = P( Z=z \mid A=1, W=w), \quad z=0,1,
  \label{eq:gamma:1}
\end{align}
the conditional distribution of \(Z\) in the exposed group. Note that
the distributions defined by \eqref{eq:gamma:0}--\eqref{eq:gamma:1} do
not correspond to counterfactual distributions but to distributions in
the subpopulations defined by the exposure \(A\).  Let
\(Z^{\gammapoor}\) denote a random variable with distribution
\(\gammapoor\). With this notation we have \(Z^{\gamma} =
Z\). Correspondingly, denote by \(Y^{{\gammapoor}}\) the
counterfactual outcome we would observe had we intervened and changed
the conditional distribution of \(Z\) to be \(\gammapoor\) rather than
\(\gamma\). Similarly define \(Z^{\gamma_{a=1}}\) and
\(Y^{{\gamma_{a=1}}}\). We state the following structural assumptions
that we need for the interventional interpretation of our target
parameter stated in Lemma \ref{interpret:causal:1} below:
\begin{enumerate}
\item[(A1)] \(Y^z \independent Z \mid (A, W)\), for \(z=0,1\);
\item[(A2)] \(Y^{z} = Y\) if \(Z=z\), for \(z=0,1\); 
\item[(A3)] \(P(Z=z \mid A=1, W)>\eta_1>0\) for \(z=0,1\) on the support
  of the distribution of \(W\) in the exposed;
\item[(A4)] \(P(A=a \mid W)>\eta_2>0\) for \(a=0,1\) on the support of
  the distribution of \(W\) in the exposed.
\end{enumerate}

In Figure \ref{fig:simple:dag:0}, Assumption (A1) corresponds to the
absence of unobserved variables with direct arrows into \(Z\) and
\(Y\).

\begin{lemma}[Interventional interpretation]
  The parameter defined by \eqref{eq:statistical:target:parameter}
  identifies the causal parameter
\begin{align}
  \Psi(P) =  \EE [ Y^{{\gammapoor}} - Y^{{\gamma_{a=1}}}
  \mid A=1], 
  \label{eq:target:parameter:1}
\end{align}
under Assumptions (A1)--(A4). 
\label{interpret:causal:1}\end{lemma}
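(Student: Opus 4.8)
The plan is to identify, for a generic target intervention distribution $\gamma^*(\cdot \mid w)$, the counterfactual mean $\EE[Y^{\gamma^*} \mid A=1]$ in terms of the observed-data functionals $Q$ and $\gamma$, and then to apply the resulting formula to the two specific choices $\gamma^* = \gammapoor$ and $\gamma^* = \gamma_{a=1}$ and subtract. The whole argument reduces to a single identification lemma for stochastic interventions on $Z$ within the exposed stratum, followed by bookkeeping on the event $\{A=1\}$.

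First I would fix $W=w$, condition on $A=1$, and unfold the definition of the intervention: since $Y^{\gamma^*}$ is the outcome obtained by drawing $Z^{\gamma^*} \sim \gamma^*(\cdot \mid w)$ externally and then reading off $Y^{Z^{\gamma^*}}$, and since this draw is independent of the potential outcomes $(Y^0,Y^1)$, the interventional mean decomposes as a $\gamma^*$-average,
\begin{align*}
  \EE[Y^{\gamma^*} \mid A=1, W=w] = \sum_{z=0,1} \EE[Y^z \mid A=1, W=w]\, \gamma^*(z \mid w).
\end{align*}
The next two steps identify the inner counterfactual mean. Assumption (A1) gives $Y^z \independent Z \mid (A,W)$, so $\EE[Y^z \mid A=1, W=w] = \EE[Y^z \mid A=1, Z=z, W=w]$; assumption (A2) then replaces $Y^z$ by $Y$ on the event $\{Z=z\}$, giving $\EE[Y^z \mid A=1, Z=z, W=w] = \EE[Y \mid A=1, Z=z, W=w] = Q(z,1,w)$. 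Substituting and integrating over the conditional law of $W$ given $A=1$ yields the identifying formula $\EE[Y^{\gamma^*} \mid A=1] = \EE[\sum_z Q(z,1,W)\,\gamma^*(z \mid W) \mid A=1]$.

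I would then apply this with $\gamma^* = \gammapoor$ and with $\gamma^* = \gamma_{a=1}$ and take the difference, obtaining $\EE[Y^{\gammapoor} - Y^{\gamma_{a=1}} \mid A=1] = \EE[\sum_z Q(z,1,W)\,(\gammapoor(z \mid W) - \gamma_{a=1}(z \mid W)) \mid A=1]$. On the conditioning event $\{A=1\}$ we have $\gamma_{a=1}(z \mid W) = \gamma(z \mid A, W)$, $\gammapoor(z \mid W) = \gamma(z \mid A=0, W)$, and $Q(z,1,W) = \EE[Y \mid A, Z=z, W]$, so the right-hand side is exactly the functional $\Psi(P)$ of \eqref{eq:statistical:target:parameter}.

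The positivity conditions (A3) and (A4) do not enter the algebra but guarantee that every conditional expectation above is well-defined: (A3) ensures both levels $z=0,1$ occur with positive probability given $(A=1,W)$, so that $Q(z,1,w)$ is defined on the relevant support, while (A4) ensures the unexposed law $\gammapoor(\cdot \mid w)$ is defined throughout the support of $W$ among the exposed. The only step requiring genuine care is the first display: it rests on the precise meaning of the stochastic intervention on $Z$ and on the fact that the interventional $Z$-draw is carried out independently of the potential outcomes, which is what lets the interventional mean decompose as a $\gamma^*$-weighted average of the $\EE[Y^z \mid A=1, W]$. Everything downstream is then a direct consequence of (A1)--(A2) and the definitions.
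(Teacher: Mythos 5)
Your proposal is correct and takes essentially the same route as the paper's own proof in Appendix \ref{app:lemma:1:proof}: the same decomposition of the interventional mean as a $\gamma^*$-weighted average of $\EE[Y^z \mid A=1, W]$, followed by (A1) to insert $Z=z$ into the conditioning set, (A2) to replace $Y^z$ by $Y$, and (A3)--(A4) invoked only to ensure the conditional expectations are well-defined. The sole (cosmetic) difference is that you identify $\EE[Y^{\gamma^*} \mid A=1]$ for a generic intervention distribution and then subtract the two instances, whereas the paper carries the difference $\gammapoor - \gamma_{a=1}$ through a single chain of equalities.
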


\begin{proof}
See Appendix \ref{app:lemma:1:proof}. 
\end{proof}       \\
   
\begin{figure}[h] 
  \centering \includegraphics[width=0.45\textwidth,angle=0]
{./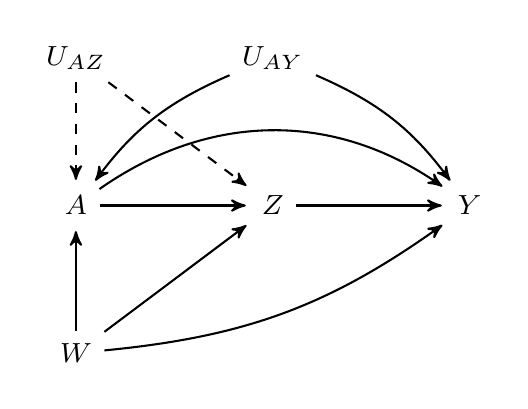}
\caption{Simple causal diagram to show the direction of relations
  between observed variables and the assumptions on unmeasured
  confounding. Moving from Lemma \ref{interpret:causal:1} to Lemma
  \ref{interpret:causal:2} requires the assumption that there are no
  direct (here dashed) arrows from \(U_{AZ}\) into both \(A\) and
  \(Z\). }
\label{fig:simple:dag:0} 
\end{figure}

Under additional structural assumptions we can gain a stronger
interpretation of the target parameter defined by
\eqref{eq:statistical:target:parameter}. To characterize the stronger
interpretation, we denote by \(Z^0\) the counterfactual intermediate
variable that we would observe had we intervened and set \(A=0\), and
by \(Z^1\) as the counterfactual intermediate variable we would
observe had we intervened and set \(A=1\). We denote the conditional
distributions of \(Z^0\) and \(Z^1\) conditional on covariates \(W\)
by \(\gamma^{a=0}\) and \(\gamma^{a=1}\), respectively, and by
\(Y^{\gamma^{a=0}}\) and \(Y^{\gamma^{a=1}}\) the counterfactual
outcomes we would have seen had \(Z\) followed the distributions
\(\gamma^{a=0}\) and \(\gamma^{a=1}\), respectively. We emphasize the
difference between \(\gamma_{a=0}\) and \(\gamma^{a=0}\) (and
\(\gamma_{a=1}\) and \(\gamma^{a=1}\) correspondingly) denoting the
observed and counterfactual distribution, respectively. The following
additional structural assumptions are required to obtain Lemma
\ref{interpret:causal:2}:
\begin{enumerate} 
\item[(A1*)] \(Z^a \independent A \mid W\), for \(a=0,1\); 
\item[(A2*)] \(Z^{a} = Z\) if \(A=a\), for \(a=0,1\).
\end{enumerate}
\begin{lemma}[Stronger counterfactual interpretation]
  The parameter defined by \eqref{eq:statistical:target:parameter}
  identifies the causal parameter
\begin{align}
  \Psi(P) =  \EE [ Y^{\gamma^{a=0}} - Y^{\gamma^{a=1}} \mid A=1], 
\label{eq:target:parameter:2}
\end{align}
under Assumptions (A1)--(A4) and Assumptions (A1*)--(A2*). 
\label{interpret:causal:2}\end{lemma}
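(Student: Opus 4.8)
The plan is to build on Lemma \ref{interpret:causal:1}, which already establishes that $\Psi(P) = \EE[Y^{\gammapoor} - Y^{\gamma_{a=1}} \mid A=1]$ under Assumptions (A1)--(A4). Since the target expression in Equation \eqref{eq:target:parameter:2} is $\EE[Y^{\gamma^{a=0}} - Y^{\gamma^{a=1}} \mid A=1]$, it suffices to show that the additional Assumptions (A1*)--(A2*) force the \emph{observed} conditional $Z$-distributions to coincide with the \emph{counterfactual} intervention distributions, i.e.\ that $\gammapoor = \gamma^{a=0}$ and $\gamma_{a=1} = \gamma^{a=1}$ as conditional laws given $W$. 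Given these two identities, one substitutes them into the conclusion of Lemma \ref{interpret:causal:1} and the result follows immediately, since equal intervention distributions on $Z$ produce equal counterfactual outcome laws $Y^{\gamma}$.

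First I would verify that $\gamma^{a=0}(z \mid w) = \gammapoor(z \mid w)$ for each $z, w$. By definition $\gamma^{a=0}$ is the conditional law of the counterfactual $Z^0$ given $W=w$, so $\gamma^{a=0}(z\mid w) = P(Z^0 = z \mid W=w)$. Assumption (A1*), the conditional independence $Z^0 \independent A \mid W$, lets me condition additionally on the event $\{A=0\}$ without changing this probability, giving $P(Z^0 = z \mid W=w) = P(Z^0 = z \mid A=0, W=w)$. Assumption (A2*), the consistency condition $Z^0 = Z$ on $\{A=0\}$, then replaces the counterfactual $Z^0$ by the observed $Z$ on this conditioning event, yielding $P(Z = z \mid A=0, W=w) = \gammapoor(z \mid w)$. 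The identical argument with $a=1$ gives $\gamma^{a=1}(z\mid w) = \gamma_{a=1}(z\mid w)$.

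The main obstacle, such as it is, lies not in the algebra but in carefully respecting the distinction the text emphasizes between $\gamma_{a=0}$ and $\gamma^{a=0}$: the former is a genuinely observed conditional distribution in a subpopulation, whereas the latter is a counterfactual intervention distribution, and these are \emph{a priori} different objects that the assumptions are precisely designed to equate. The identification of the two hinges entirely on applying (A1*) and (A2*) in the correct order --- first using the conditional exchangeability to introduce the conditioning on $A=a$, then using consistency to pass from the counterfactual $Z^a$ to the observed $Z$. One must also confirm that the positivity conditions (A4) guarantee these conditional distributions are well defined on the relevant support, so that conditioning on $\{A=a, W=w\}$ is legitimate. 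With $\gammapoor = \gamma^{a=0}$ and $\gamma_{a=1} = \gamma^{a=1}$ established, the counterfactual outcome distributions satisfy $Y^{\gammapoor} \overset{d}{=} Y^{\gamma^{a=0}}$ and $Y^{\gamma_{a=1}} \overset{d}{=} Y^{\gamma^{a=1}}$ conditionally on $\{A=1, W\}$, and substituting into Lemma \ref{interpret:causal:1} completes the proof.
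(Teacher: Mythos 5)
Your proposal is correct and takes essentially the same route as the paper: the paper's own proof is one chain of equalities that repeats the (A1)--(A2) steps of Lemma \ref{interpret:causal:1} and then applies (A1*) to replace \(P(Z^a=z\mid W)\) by \(P(Z^a=z\mid A=a,W)\) and (A2*) to replace \(Z^a\) by \(Z\) --- precisely your two key steps, in the same order. Your only departure is organizational: you invoke Lemma \ref{interpret:causal:1} as a black box and isolate the identity \(\gamma^{a}(\cdot\mid w)=\gamma_{a}(\cdot\mid w)\) on the support of \(W\) among the exposed (where (A4) makes the conditioning on \(\lbrace A=a, W=w\rbrace\) legitimate, as you note), which is a clean modularization of the same argument rather than a different one.
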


\begin{proof}
  See Appendix \ref{app:lemma:2:proof}.
\end{proof}\\

To highlight the differences between the interpretations achieved in
Lemma \ref{interpret:causal:1} and Lemma \ref{interpret:causal:2}, we
consider our motivating examples from Section
\ref{sec:setting:notation} separately. In Example
\ref{ex:motivating:1} the intermediate variable \(Z\) is an indicator
of an ambulance being sent for a person who calls the emergency
service. Here the statistical parameter
\eqref{eq:statistical:target:parameter} represents the average outcome
difference that would occur if the distribution of an ambulance being
sent was shifted to be as among the unexposed. With the interpretation
achieved in Lemma \ref{interpret:causal:1}, our target parameter
represents the effect of a real-life intervention shifting this
distribution; for example, we could image that health care workers
receiving emergency calls have some `protocol' for sending an
ambulance based on information \((A,W)\) provided by the person making
the call, and the intervention corresponds to changing this protocol.
However, it may very well be that the protocol, i.e.,
\( \gamma(z\mid A, W)\), depends on \(A\) only through unmeasured
confounders \(U_{AZ}\). The interpretation achieved in Lemma
\ref{interpret:causal:2}, on the other hand, allow us to say something
about the risk difference that can really be ascribed to differences
in the distributions of ambulances being sent among the unexposed and
the exposed (reporting or not reporting chest pain), unrelated to
other factors, i.e., the effect we would see if we in fact could make
subjects report chest pain or not when making the emergency call.

In Example \ref{ex:motivating:2} the intermediate variable \(Z\) is an
indicator of treatment initiation. For this example, the statistical
parameter \eqref{eq:statistical:target:parameter} represents the
difference in expected outcome arising from shifting the distribution
of treatment initiation for exposed subjects to the distribution of
treatment initiation among the unexposed.  Lemma
\ref{interpret:causal:1} allows us to translate the statistical
intervention to a real-life intervention changing the observed
treatment initiation among the low-income to what it is among the
high-income. However, it may be that \( \gamma(z\mid A, W)\), depends
on \(A\) only through unmeasured confounders \(U_{AZ}\) such as
general willingness to initiate treatment; clearly, the price of the
treatment is one important factor for the disparity between high and
low income patients, but there may be other factors, such as the
frequency of doctor visits, which may explain why patients with low
income are less likely to initiate treatment. A health policy could
reduce the price of the treatment for heart failure patients and could
even provide the treatment without costs. It is important to note that
such a health policy would only affect the dissimilarities that are
related to the price of the treatment, but not modify the other
factors. Only under the additional assumptions (A1*)--(A2*) can we
interpret the parameter as the effect we would see if a policy
intervention had been implemented to provide the treatment with lower
or without costs. The causal parameter defined by Lemma
\ref{interpret:causal:1}, on the other hand, reflects the health
policy effect on the outcome in the subgroup of low income patients
when all dissimilarity factors are removed.

\subsection{Parameters considered in related literature}
\label{sec:related:parameters}

\cite{micali2018maternal} investigate the extent to which adolescent
eating disorders are associated to maternal prepregnancy underweight
or overweight status under interventions to change the distribution of
selected childhood variables to be the same as observed among those of
children of mothers who were normal weight. In our notation, we can
represent maternal prepregnancy as the variable \(A\), collect the
selected childhood variables in the variable \(Z\) and let \(Y\)
represent adolescent eating disorder, and write the
\textit{interventional disparity indirect effect} considered by
\citep{micali2018maternal} as follows
\begin{align}
\mathrm{IDM\text{-}IE} =  \EE \big[ \EE \big[ Y^{\gamma_{a=1}} \mid A=1, W\big]
  - \EE \big[ Y^{\gammapoor} \mid A=1, W\big] \big],
  \label{eq:micali:idm-ie}
\end{align}
with the outer expectation taken over the distribution of covariates
across the entire population. The parameter defined by
\eqref{eq:micali:idm-ie} closely resembles our statistical parameter
rewritten as \eqref{eq:target:parameter:1} in Lemma
\ref{interpret:causal:1} under Assumptions (A1)--(A4), with the only
difference being the outer average taken over different distributions
of covariates. We argue that our choice of parameter represents the
one of policy relevance: we care only about assessing the impact of
imposing interventions on the exposed group particularly,
corresponding to changing the frequency of ambulance pick-up for the
subjects not reporting chest pain in Example \ref{ex:motivating:1} and
changing the willingness of initiating treatment for the low-income
patients in Example \ref{ex:motivating:2}. Since subjects reporting
and not reporting chest pain, for example, represent highly different
populations, the distribution of covariates among the two populations
are expectedly similarly different.

\cite{vansteelandt2012natural} propose another similar counterfactual
estimand, the definition of which requires a bit of extra
notation. For the purpose of presenting it here, define here the
counterfactual outcome \(Y^{a', \gamma^{a}}\), the outcome we would
observe if the exposure \(A\) was set to \(a'\) and the intermediate
(mediator) variable \(Z\) had followed the distribution it would had
taken if we changed exposure level to \(a\). In this notation,
\cite{vansteelandt2012natural} defines the \textit{natural indirect
  effect among the exposed} as follows
\begin{align}
\EE\big[ Y^{0, \gamma^{a=1}} - Y^{0, \gamma^{a=0}} \mid A=1\big]; 
  \label{eq:van:niee}
\end{align}
again, this parameter closely resembles our parameter rewritten as
\eqref{eq:target:parameter:2} in Lemma \ref{interpret:causal:2} under
Assumptions (A1)--(A4) and (A1*)--(A2*). The difference now lies in
the considered counterfactual versions of \(Y\) under interventions on
the exposure. Indeed, the parameter defined by \eqref{eq:van:niee}
considers the risk difference for the exposed group, had they in fact
been unexposed but had their intermediate variable taken values as if
they had been exposed contrasted to being unexposed. For Example
\ref{ex:motivating:2}, this would correspond to the risk difference we
would see for low-income patients had they in fact been high-income
and had they initiated treatment as if they had been low-income versus
as if they had been high-income. Our parameter defined by
\eqref{eq:target:parameter:2}, on the other hand, corresponds to the
risk difference we would see for low-income patients had they
initiated treatment as if they had been low-income versus as if they
had been high-income. The two parameters will differ to the extent
that the distribution of \(Y\) changes under direct interventions on
income level.

\section{Statistical estimation problem}
\label{sec:statistical:estimation:problem}

Our statistical target parameter defined by
\eqref{eq:statistical:target:parameter} can be represented as
\(\Psi(P) = \Psi_0 (P) - \Psi_1 (P)\), with the parameters
\(\Psi_0 \, :\, \mathcal{M}\rightarrow\R\) and
\(\Psi_1 \, :\, \mathcal{M}\rightarrow\R\) defined separately as
\begin{align}
  \Psi_0(P) =
  \EE \bigg[ \sum_{z=0,1} \EE[ Y \mid A, z, W]   \gamma(z \mid 0, W)  \bigg\vert A=1\bigg],
  \label{eq:parameter:0}
  \intertext{and,}
    \Psi_1(P) =
  \EE \bigg[ \sum_{z=0,1} \EE[ Y \mid A, z, W]   \gamma(z \mid 1, W)  \bigg\vert A=1\bigg]. 
  \label{eq:parameter:1}
\end{align}
The following theorem provides the efficient influence function for
the parameter \(\Psi_{a^*} \, : \, \mathcal{M}\rightarrow \R\),
written generally as
\begin{align*}
  \Psi_{a^*}(P)
  &= \EE \bigg[ \sum_{z=0,1} \EE[ Y \mid  Z=z, A, W]   \gamma(z \mid a^*, W)  \bigg\vert A=1\bigg] \\
  &=   \int_{\mathcal{W}} \sum_{z=0,1}  Q (z, 1, w) 
    \gamma( z \mid a^*, w) 
    \frac{\pi( 1 \mid w)}{\bar{\pi}(1)}  d\mu (w) \\
  &=   \int_{\mathcal{W}} \sum_{z=0,1}  \int_{\mathcal{Y}} y \, dP_{Y} (y  \mid z, 1, w) 
    \gamma( z \mid a^*, w) 
    \frac{\pi( 1 \mid w)}{\bar{\pi}(1)}  d\mu (w), 
\end{align*}
We also write
\(\Psi_{a^*} (P) = \tilde{\Psi}_{a^*} (\pi, \bar{\pi}, \gamma, Q)\),
and refer to \((\pi, \bar{\pi}, \gamma, Q)\) as the nuisance
parameters for the estimation problem. The efficient influence curve
characterizes the asymptotic distribution of all asymptotically linear
estimators \citep{bickel1993efficient,van2000asymptotic}, and,
particularly, constructing estimators such as to solve the efficient
influence curve equation is a necessary basis for asymptotically
linear (and efficient) estimation. We present the efficient influence
curve for \(\Psi_{a^*} \, : \, \mathcal{M}\rightarrow \R\) in Theorem
\ref{thm:eff:ic} below; Theorem \ref{thm:double:robustness} next
states the double robustness properties based on the second-order bias
term admitting a specific product structure (as shown in Appendix
\ref{app:remainder:binary}).  The product structure of the
second-order remainder term further tells us that
slower convergence rates (than the typical \(n^{1/2}\) rate) are
allowed for initial estimators, attainable, for example, with the
highly adaptive lasso estimator
\citep{benkeser2016highly,van2017generally,rytgaard2021estimation}, or
by combining multiple algorithms in a super learner
\citep{polley2011super,van2011targeted} (as long as one of the
algorithms attain the required rate). This provides the basis for
nonparametric inference, and is utilized in Theorem
\ref{thm:inference:tmle} (Section \ref{sec:inference:tmle}).

\begin{thm}[Efficient influence function]
  The efficient influence function for 
  \(\Psi_{a^*} \, : \, \mathcal{M}\rightarrow \R\) can be represented
  as follows:
\begin{align*}
  \phi_{a^*} (P) (O)
  &  = \frac{\gamma( Z \mid a^*, W) }{\gamma( Z \mid 1, W)}
    \frac{\1\lbrace A=1\rbrace}{\bar{\pi}(1)}
    \big( Y - Q(Z, A, W)\big) \\
  &\qquad\quad  + \frac{\1\lbrace A=a^*\rbrace}{\pi(A \mid W) } \frac{\pi( 1 \mid W)}{\bar{\pi}(1)}
    \bigg( Q(Z, 1, W) - \sum_{z=0,1} Q(z, 1, W) \gamma(z \mid A , W) \bigg) \\
  &\qquad\quad + \frac{\1\lbrace A=1\rbrace }{\bar{\pi}(1)} \bigg(\sum_{z=0,1} Q(z, 1, W) \gamma(z \mid a^* , W)
    - \Psi_{a^*} (P) \bigg)  .
\end{align*}
We also write
\(\phi_{a^*} (P) = \tilde{\phi}_{a^*} (\pi, \bar{\pi}, \gamma, Q)\).
\label{thm:eff:ic}
\end{thm}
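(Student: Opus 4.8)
The plan is to derive the stated function as the canonical gradient of $\Psi_{a^*}$ via the usual pathwise-derivative calculation, exploiting that $\mathcal{M}$ is nonparametric: the tangent space is all of $L^2_0(P)$ and factorizes orthogonally according to the likelihood factorization $p(o) = p_W(w)\,\pi(a\mid w)\,\gamma(z\mid a,w)\,p_Y(y\mid z,a,w)$. Writing the score of a one-dimensional submodel $P_\epsilon$ as $S = S_W + S_A + S_Z + S_Y$ with the four components mutually orthogonal, I can compute $\frac{d}{d\epsilon}\Psi_{a^*}(P_\epsilon)\big|_{\epsilon=0}$ separately along submodels that perturb one factor at a time and read off the Riesz representer in each subspace; summing the four representers gives the unique gradient. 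It is convenient first to rewrite the target as $\Psi_{a^*}(P) = \EE[m(W)\mid A=1] = \bar\pi(1)^{-1}\int \pi(1\mid w)\,m(w)\,dP_W(w)$, where $m(w) = \sum_{z} Q(z,1,w)\,\gamma(z\mid a^*,w)$ and $\bar\pi(1) = \int \pi(1\mid w)\,dP_W(w)$; this exposes the dependence on the four nuisance factors and on the normalizing constant $\bar\pi(1)$.

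Next I treat the $p_Y$-direction. Tilting $p_Y(y\mid z,a,w)\mapsto p_Y(y\mid z,a,w)(1+\epsilon S_Y)$ with $\EE[S_Y\mid Z,A,W]=0$ gives $\partial_\epsilon Q_\epsilon(z,1,w) = \EE[(Y-Q(z,1,w))S_Y\mid Z=z,A=1,W=w]$, so that $\partial_\epsilon\Psi_{a^*}$ equals $\int \tfrac{\pi(1\mid w)}{\bar\pi(1)}\sum_z\gamma(z\mid a^*,w)\,\EE[(Y-Q)S_Y\mid z,1,w]\,dP_W(w)$. To recognize this as $\EE[\phi\,S_Y]$ I insert the density ratio $\gamma(z\mid a^*,w)=\tfrac{\gamma(z\mid a^*,w)}{\gamma(z\mid 1,w)}\,\gamma(z\mid 1,w)$, so the $z$-sum and $w$-integral reassemble the data expectation restricted to $\{A=1\}$; the representer is exactly the first displayed term, using $Q(Z,1,W)=Q(Z,A,W)$ on $\{A=1\}$. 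The change-of-measure factor $\gamma(Z\mid a^*,W)/\gamma(Z\mid 1,W)$ is the crux here: it reweights from the actual conditional law of $Z$ given $(A=1,W)$ to the target law $\gamma(\cdot\mid a^*,W)$, and getting it right, especially for $a^*\neq 1$, is the main technical point of the proof.

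For the $\gamma$-direction I tilt $\gamma(z\mid a,w)\mapsto\gamma(z\mid a,w)(1+\epsilon S_Z)$ subject to $\sum_z\gamma(z\mid a,w)S_Z(z,a,w)=0$; only the slice $a=a^*$ enters $m$, giving $\partial_\epsilon\Psi_{a^*}=\int\tfrac{\pi(1\mid w)}{\bar\pi(1)}\sum_z Q(z,1,w)\gamma(z\mid a^*,w)S_Z(z,a^*,w)\,dP_W(w)$. The representer is the second displayed term: the factor $\1\{A=a^*\}/\pi(A\mid W)$ selects the $a^*$-stratum, and subtracting $\sum_z Q(z,1,W)\gamma(z\mid A,W)$ centers the term but is annihilated by the constraint $\sum_z\gamma S_Z=0$, leaving precisely the required derivative. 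For the covariate law and the propensity, perturbing $dP_W\mapsto(1+\epsilon S_W)\,dP_W$ and $\pi(a\mid w)\mapsto\pi(a\mid w)(1+\epsilon S_A)$ and applying the quotient rule to $\Psi_{a^*}=\bar\pi(1)^{-1}\int\pi(1\mid w)m(w)\,dP_W$ produces, after the $\Psi_{a^*}$-centering that arises from differentiating the denominator $\bar\pi(1)$, the single form $\EE\big[\tfrac{\pi(1\mid W)}{\bar\pi(1)}(m(W)-\Psi_{a^*})\,S\big]$ for both $S=S_W$ and $S=S_A$; its representer is the third displayed term $\tfrac{\1\{A=1\}}{\bar\pi(1)}(m(W)-\Psi_{a^*})$, whose conditional mean given $W$ reproduces the $W$-contribution and whose inner product with $S_A$ reproduces the $A$-contribution. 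Folding the denominator derivative correctly into this clean centered form is the second piece of delicate bookkeeping.

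Finally I verify that the three terms lie in the intended orthogonal subspaces — the first has conditional mean zero given $(Z,A,W)$, the second given $(A,W)$, and the third is a function of $(A,W)$ — so there are no cross-contributions and the sum is mean zero. Since the tangent space is the full $L^2_0(P)$ in the nonparametric model, the gradient is unique and coincides with the efficient influence function, establishing the claimed representation. Equivalently, one may take the displayed $\phi_{a^*}(P)$ as a candidate and check directly that $\EE[\phi_{a^*}(P)\,S]=\partial_\epsilon\Psi_{a^*}(P_\epsilon)\big|_{0}$ along each of the four score directions together with $\EE[\phi_{a^*}(P)]=0$, which is the same computation organized as a verification.
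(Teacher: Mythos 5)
Your proposal is correct and follows essentially the same route as the paper's Appendix derivation: factorize the likelihood as $p_W\,\pi\,\gamma\,p_Y$, decompose scores orthogonally, compute the pathwise derivative factor by factor, and read off the Riesz representers, with the density-ratio insertion $\gamma(z\mid a^*,w)=\tfrac{\gamma(z\mid a^*,w)}{\gamma(z\mid 1,w)}\gamma(z\mid 1,w)$ handling the $p_Y$-term exactly as the paper does. The only (cosmetic) difference is bookkeeping for the denominator: the paper perturbs the marginal $\bar{\pi}(1)$ as a separate fifth factor and then shows the $\pi$-, $\bar{\pi}$-, and $\mu$-representers cancel into the centered third term, whereas you fold $\bar{\pi}(1)=\int\pi(1\mid w)\,dP_W(w)$ into the $\pi$- and $P_W$-directions via the quotient rule and arrive at the same centered form directly.
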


\begin{proof}
  See Appendix \ref{app:eff:ic:binary}.
\end{proof}\\

We only state the double robustness properties for estimation of the
parameter \(\Psi_0(P_0)\). For estimation of \(\Psi_1(P_0)\), the
efficient influence function greatly simplifies and estimators can be
constructed straightforwardly without dependence on nuisance
parameters (see Section \ref{sec:tmle}).

\begin{thm}[Double robustness]
  Suppose given are estimators \(\hat{\pi}_n\) of  \(\pi\),   \(\hat{\gamma}_n\) of \(\gamma\) 
 and  \(\hat{Q}_n\) of \(Q\)  with large sample limits
   \({\pi}'_0\),
  \({\gamma}'_0\) and \({Q}'_0\), and define 
  \(\hat{\bar{\pi}}_n = \frac{1}{n} \sum_{i=1}^n \1\lbrace
  A_i=1\rbrace\). If 
  \(\mathbb{P}_n \tilde{\phi}_{0} (\hat{\pi}_n, \hat{\bar{\pi}}_n,
  \hat{\gamma}_n,\hat{Q}_n)=o_P(n^{-1/2})\), and
  \begin{enumerate}
\item[a.] \({\gamma}'_0 = \gamma_0\); or
\item[b.] \({\pi}'_0= \pi_0 \) and \({Q}'_0 = Q_0\),
  \end{enumerate}
  then
  \(\tilde{\Psi}_0(\hat{\pi}_n, \hat{\bar{\pi}}_n,
  \hat{\gamma}_n,\hat{Q}_n)\) is a consistent estimator for
  \(\Psi_{0} (P_0)\).
\label{thm:double:robustness}
\end{thm}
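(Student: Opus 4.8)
The plan is to establish consistency via the standard von Mises expansion of the plug-in estimator around $P_0$, using the efficient influence function of Theorem~\ref{thm:eff:ic} together with the product structure of the second-order remainder from Appendix~\ref{app:remainder:binary}. Writing $\hat{P}_n$ for the law determined by $(\hat{\pi}_n,\hat{\bar{\pi}}_n,\hat{\gamma}_n,\hat{Q}_n)$ and defining the exact remainder
\begin{align*}
R_{20}(\hat{P}_n,P_0) := \tilde{\Psi}_0(\hat{P}_n) - \Psi_0(P_0) + P_0\,\tilde{\phi}_0(\hat{P}_n),
\end{align*}
which is a pure algebraic identity, I would rearrange and insert the empirical measure to obtain
\begin{align*}
\tilde{\Psi}_0(\hat{P}_n) - \Psi_0(P_0) = -\,\mathbb{P}_n\tilde{\phi}_0(\hat{P}_n) + (\mathbb{P}_n - P_0)\tilde{\phi}_0(\hat{P}_n) + R_{20}(\hat{P}_n,P_0).
\end{align*}
It then suffices to show that each of the three terms on the right-hand side is $o_P(1)$.

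The first term equals $o_P(n^{-1/2})=o_P(1)$ directly by the hypothesis $\mathbb{P}_n\tilde{\phi}_0(\hat{\pi}_n,\hat{\bar{\pi}}_n,\hat{\gamma}_n,\hat{Q}_n)=o_P(n^{-1/2})$. For the empirical-process term I would use that, under the positivity assumptions (A3)--(A4), the weights $1/\gamma(Z\mid 1,W)$ and $1/\pi(A\mid W)$ are bounded, so $\tilde{\phi}_0(\hat{P}_n)$ is uniformly bounded and converges in $L^2(P_0)$ to $\tilde{\phi}_0(\pi'_0,\bar{\pi}_0,\gamma'_0,Q'_0)$, where $\hat{\bar{\pi}}_n\overset{P}{\to}\bar{\pi}_0$ by the law of large numbers. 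The term $(\mathbb{P}_n - P_0)\tilde{\phi}_0(\hat{P}_n)$ is then $o_P(1)$ either under a Donsker condition on the class of candidate influence functions or, with no entropy restriction, by cross-fitting the nuisance estimators, in which case it reduces to $(\mathbb{P}_n - P_0)\tilde{\phi}_0(\pi'_0,\bar{\pi}_0,\gamma'_0,Q'_0)=O_P(n^{-1/2})$ by the central limit theorem.

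The crux is the remainder. By continuity of the map $(\pi,\bar{\pi},\gamma,Q)\mapsto R_{20}$ in $L^2(P_0)$ and consistency of the estimators for their limits, $R_{20}(\hat{P}_n,P_0)\overset{P}{\to}R_{20}(\pi'_0,\bar{\pi}_0,\gamma'_0,Q'_0;P_0)$, with $\hat{\bar{\pi}}_n\to\bar{\pi}_0$ automatically. Appendix~\ref{app:remainder:binary} shows this limiting remainder is a sum of integrals in which each summand is a product of a difference in $\gamma$ (either $\gamma'_0(\cdot\mid 0,\cdot)-\gamma_0(\cdot\mid 0,\cdot)$ or $\gamma'_0(\cdot\mid 1,\cdot)-\gamma_0(\cdot\mid 1,\cdot)$) times a factor that is linear in $\pi'_0-\pi_0$ or in $Q'_0-Q_0$. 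Hence the remainder vanishes when $\gamma'_0=\gamma_0$ (case a), since every summand then contains a zero $\gamma$-factor, and it vanishes when simultaneously $\pi'_0=\pi_0$ and $Q'_0=Q_0$ (case b), since then the second factor of every summand is zero. In either case $R_{20}(\pi'_0,\bar{\pi}_0,\gamma'_0,Q'_0;P_0)=0$, so the remainder term is $o_P(1)$; combining the three bounds yields $\tilde{\Psi}_0(\hat{P}_n)-\Psi_0(P_0)=o_P(1)$, the claimed consistency.

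I expect the main obstacle to be the explicit derivation of the product structure of $R_{20}$ and the verification that its two factors align exactly with the two sufficient conditions; this requires careful bookkeeping of the ratio $\gamma(Z\mid 0,W)/\gamma(Z\mid 1,W)$ appearing in the first line of $\tilde{\phi}_0$ and the cancellation of the $\gamma$-differences under case (b), and it is carried out in Appendix~\ref{app:remainder:binary}. The empirical-process term is a secondary technical point that, for a pure consistency statement, requires only the mild regularity supplied by positivity together with either a Donsker or a cross-fitting argument.
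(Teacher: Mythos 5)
Your proposal is correct and takes essentially the same route as the paper: the paper's proof likewise defines the exact remainder \(R_{0}(P,P_0) = \Psi_{0}(P) - \Psi_{0}(P_0) + P_0\,\phi_{0}(P)\), invokes the explicit computation in Appendix \ref{app:remainder:binary} showing every term carries a factor \(\gamma - \gamma_0\) paired with a factor \(\pi - \pi_0\) or \(Q - Q_0\) (plus the term \((1 - \bar{\pi}_0(1)/\hat{\bar{\pi}}_n(1))(\Psi_0(P)-\Psi_0(P_0))\), which vanishes because \(\hat{\bar{\pi}}_n\) is the empirical proportion, exactly as your remark ``\(\hat{\bar{\pi}}_n \to \bar{\pi}_0\) automatically'' anticipates), so that conditions (a) and (b) each annihilate the limiting remainder. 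If anything, your explicit handling of the empirical-process term \((\mathbb{P}_n - P_0)\tilde{\phi}_0(\hat{P}_n)\) via a Donsker or cross-fitting argument is more careful than the paper, whose proof leaves that step implicit.
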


\begin{proof}
  Define
  \(R_{a^*}(P, P_0) = \Psi_{a^*}( P) - \Psi_{a^*}(P_0) + P_0
  \phi_{a^*}(P)\). We show in Appendix \ref{app:remainder:binary} that
  \begin{align*}
                      & R_{0}(P,P_0)  =\\
    & \EE_{P_0}  \bigg[ 
    \frac{\pi_0(1 \mid W) }{\bar{\pi} (1)}
    \sum_{z=0,1} \bigg(  \frac{\gamma (z \mid 1,W) - \gamma( z \mid 1, W) }{\gamma( z \mid 1, W)}  \bigg) 
    \big( Q_0(z,1,W) - Q(z, 1, W) \big)\gamma( z \mid 0, W)  \bigg)
    \bigg] \\
  & \qquad +   \EE_{P_0}  \bigg[  \bigg(  \frac{\pi_0(1 \mid W)  - \pi( 1 \mid W)
                }{\bar{\pi} (1) ( 1- \pi( 1 \mid W))} \bigg)
             \sum_{z=0,1} Q(z, 1, W) \big( \gamma(z \mid 0 , W)-  \gamma(z \mid 0 , W) \big)
  \bigg] \\
  & \qquad  + \EE_{P_0}  \bigg[\frac{\pi_0(1 \mid W) }{\bar{\pi} (1)}
    \sum_{z=0,1} 
    \big( Q_0(z,1,W) - Q(z,1,W)\big)  \big( \gamma( z \mid 0, W)  -  \gamma( z \mid 0, W) \big)
    \bigg] \\
  & \qquad + 
             \bigg( 1-     \frac{\bar{\pi}_0( 1)}{\bar{\pi}(1)} \bigg) \big( \Psi_{0} (P)
             - \Psi_{0} (P_0) \big).
  \end{align*}
  from which a. and b. follow.
\end{proof}

\section{Targeted minimum loss-based estimation (TMLE)}
\label{sec:tmle}

We present our targeting algorithms that update the nuisance parameter
estimators in order to solve the efficient influence curve equation.
As presented in Theorem \ref{thm:eff:ic}, the efficient influence
function for \(\Psi_0 \, :\, \mathcal{M}\rightarrow\R\) is given as
follows
\begin{align*} 
  \phi_0 ( P ) (O)
  & =   \frac{\gamma(Z \mid 0, W)}{
    \gamma(Z \mid 1, W)}  \frac{\1\lbrace A=1\rbrace }{\bar{\pi}(1)} \Big( 
    Y  - 
    Q(Z, A, W) \Big) \\
  & \quad + \,  \frac{\1\lbrace A=0\rbrace }{\pi (0\mid W)} \frac{\pi(1 \mid W) }{\bar{\pi}(1)}\Big(
    Q(Z, 1, W)
    -    \sum_{z=0,1}   Q(z, 1, W)
    \gamma (z \mid A, W) \Big)\\
  & \quad + \, \frac{\1\lbrace A=1\rbrace  }{\bar{\pi}(1)} \bigg(
    \sum_{z=0,1} Q( z, 1,W)
    \gamma(z \mid 0, W)   - \Psi_0 (P)\bigg) 
    , 
    \intertext{and for \(\Psi_1 \, :\, \mathcal{M}\rightarrow\R\) simply as }
    \phi_1 ( P ) (O)
  &= \frac{\1\lbrace A=1\rbrace}{\bar{\pi}(1)}
    \big( Y -  \Psi_1 (P)\big)
   . 
\end{align*}
Below we describe our targeting algorithm for solving efficient
influence curve equation for \(\Psi_0(P_0)\). No targeting will be
necessary for \(\Psi_1(P_0)\), because the plug-in estimator
\begin{align}
  \hat{\psi}^{\mathrm{tmle}}_{1,n} = \frac{1}{n} \sum_{i=1}^n \frac{\1\lbrace A_i =
  1\rbrace}{\hat{\bar{\pi}}_n(1)} Y_i ,
 \label{eq:est:tmle:1::::}
\end{align}
with
\(\hat{\bar{\pi}}_n(1) = \frac{1}{n} \sum_{i=1}^n \1\lbrace A_i =
1\rbrace\) already solves the efficient influence curve
equation.

\subsection{Targeting algorithm}
\label{sec:targeting:algorithm}

To explain our targeting algorithm, we rewrite the efficient influence
function for \(\Psi_{0} \, : \, \mathcal{M}\rightarrow \R\) on the
slightly different form:
\begin{align}
  \phi_0 ( P ) (O)
  &=   \frac{\gamma(Z \mid 0, W)}{
    \gamma(Z \mid 1, W)}  \frac{\1\lbrace A=1\rbrace }{\bar{\pi}(1)} \big( 
    Y  - 
    Q(Z, A, W) \big) \label{eq:eic:tmle:1}\\
  & \quad + \,  \frac{\1\lbrace A=0\rbrace }{\pi (0\mid W)} \frac{\pi(1 \mid W) }{\bar{\pi}(1)}
    \big( Q(1,1,W) - Q(0, 1, W)  \big) \big( Z - \gamma(1 \mid A, W)\big)
    \label{eq:eic:tmle:2}\\
  & \quad + \, \frac{
    \sum_{z=0,1} Q( z, 1, W)  
    \gamma(z \mid 0, W)  - \Psi_0 (P)  }{\bar{\pi}(1)}  \big(
    \1\lbrace A=1\rbrace - \pi(1 \mid W)\big)\label{eq:eic:tmle:3} \\
  & \quad + \, \frac{\pi(1 \mid W) }{\bar{\pi}(1)} \bigg(
    \sum_{z=0,1} Q(z, 1, W)  
    \gamma(z \mid 0, W)  - \Psi_0 (P)\bigg) 
    . \label{eq:eic:tmle:4}
\end{align}
The terms \eqref{eq:eic:tmle:1}--\eqref{eq:eic:tmle:3} will be used in
Section \ref{sec:loss:submodels} to guide the targeted update steps
for \(Q\), \(\gamma\) and \(\pi\).  The term \eqref{eq:eic:tmle:4} is
then taken care of in the construction of the final TMLE estimator by
plugging targeted estimators
\((\hat{Q}^*_n, \hat{\gamma}^*_n, \hat{\pi}^*_n)\) into
\begin{align}
  \hat{\psi}^{\mathrm{tmle}}_{0,n}  =
  \frac{1}{n} \sum_{i=1}^n \frac{\1\lbrace A_i =
  1\rbrace}{\hat{\bar{\pi}}_n(1)}  \sum_{z=0,1}
  \hat{Q}^*_n ( z, 1, W_i) 
  \hat{\gamma}^*_n (z \mid 0, W_i )
  , \label{eq:est:tmle:0}
\end{align}
where, as in \eqref{eq:est:tmle:1::::}, 
\(\hat{\bar{\pi}}_n(1) = \frac{1}{n} \sum_{i=1}^n \1\lbrace A_i =
1\rbrace\).  Assume we have at hand initial estimators
\(\hat{Q}^0_n\), \(\hat{\gamma}^0_n\) and \(\hat{\pi}^0_n\), where the
superscript `\(0\)' is used to mark that they are `initial'. The
targeting algorithm proceeds iteratively, updating these three
estimators one by one. The iterations are continued until the final
set of updated estimators
\(\hat{P}^{*}_n = (\hat{Q}^{*}_n, \hat{\gamma}^{*}_n,
\hat{\pi}^{*}_n,\hat{\bar{\pi}}_n)\) solves the efficient influence
curve equation sufficiently well:
\begin{align*}
\mathbb{P}_n \phi_0(\hat{P}^{*}_n ) =  o_P(n^{-1/2}). 
\end{align*}
In practice we continue iterations until
\(\vert \mathbb{P}_n \phi_0(\hat{P}^{*}_n )\vert \le
\hat{\sigma}_{0,n} / (\sqrt{n}\log(n)) \), where
\(\hat{\sigma}^2_{0,n} = \mathbb{P}_n (\phi_0 (\hat{P}_n))^2\)
estimates the variance of the efficient influence curve.  In Section
\ref{sec:loss:submodels} below we define the loss functions and least
favorable parametric submodels; in the subsequent Section
\ref{sec:targeting:steps} we define the \(k\)th update of the
targeting algorithm, starting from a set of current estimators
\(\hat{P}^{k}_n = (\hat{Q}^{k}_n, \hat{\gamma}^{k}_n,
\hat{\pi}^{k}_n)\), \(k \ge 0 \).

\subsubsection{Loss functions and least favorable parametric submodels}
\label{sec:loss:submodels}

We here define loss functions and least favorable submodels needed to
construct our targeting algorithm. First, denote by
\begin{align}
  H_1 (\gamma) (Z, A, W) & :=  \frac{\gamma(Z \mid 0, W)}{
                           \gamma(Z \mid 1, W)}  \frac{\1\lbrace A=1\rbrace }{\bar{\pi}(1)} ,
  \label{eq:clever:1}\\
  H_2 (Q,\pi) ( W) & :=  \frac{ \1\lbrace A=0\rbrace }{\pi (0\mid W)} \frac{\pi(1 \mid W) }{\bar{\pi}(1)}
                   \big( Q(1,1,W) - Q(0, 1, W)  \big),\label{eq:clever:2}\\
  H_3 (Q, \gamma) ( W) & :=   \frac{ \sum_{z=0,1} Q(z, 1, W)  
    \gamma(z \mid 0, W) - \Psi_0(P) }{\bar{\pi}(1)} .\label{eq:clever:3} 
\end{align}
We then define the following loss functions
\((Q, O) \mapsto \mathscr{L}_1 (Q) (O)\),
\((\gamma, O) \mapsto \mathscr{L}_2 (\gamma) (O)\),
\((\pi, O) \mapsto \mathscr{L}_3 (\pi) (O)\) and parametric submodels
\(Q_\eps, \gamma_\eps, \pi_\eps\) so that
\begin{align}
  \begin{split}
  \frac{d}{d\eps}\bigg\vert_{\eps = 0}
  \mathscr{L}_1(Q_{\eps}) (O)
  &=  H_1 (\gamma) (Z, A, W)   \big( 
    Y  - 
    Q(Z, A, W) \big),  \\
  \frac{d}{d\eps}\bigg\vert_{\eps = 0}
  \mathscr{L}_2(Q_{\eps}) (O)
  &=   H_2 (Q,\pi) ( W) \big( Z - \gamma(1 \mid A, W)\big), \\
  \frac{d}{d\eps}\bigg\vert_{\eps = 0}
  \mathscr{L}_3(\pi_{\eps}) (O)
  &=  H_3 (Q_Z) ( W)  \big(
    \1\lbrace A=1\rbrace - \pi(1 \mid W)\big),
      \end{split}\label{eq:loss:models:needed}
\end{align}
corresponding to the terms \eqref{eq:eic:tmle:1},
\eqref{eq:eic:tmle:2} and \eqref{eq:eic:tmle:3} of the efficient
influence curve equation. By straightforward calculations one may find
that the loss functions
\begin{align}
  \mathscr{L}_1(Q)(O) &= - \big( Y \log Q(Z, A, W) + (1-Y) \log (1-Q( Z, A,W))\big), \label{eq:loss:Q}  \\
  \mathscr{L}_2(\gamma)(O) &= -  \big( Z \log \gamma (1 \mid A, W) + (1-Z )
                             \log\gamma(0 \mid A, W)\big), \label{eq:loss:gamma} \\
  \mathscr{L}_3(\pi)(O) &= - \big( A  \log \pi(1 \mid W) + (1- A)
                          \log \pi(0 \mid W)\big), \label{eq:loss:pi}
\end{align}
together with the parametric submodels
\begin{align}
  \mathrm{logit}(Q_{\eps} (Z, A, W))
  &= \mathrm{logit}(Q (Z, A, W)) + \eps  H_1 (\gamma) (Z, A, W)) , \label{eq:submodel:Q}\\
  \mathrm{logit}(\gamma_{\eps}(1\mid A, W))
  &= \mathrm{logit}(\gamma(1\mid A, W)) + \eps  H_2 (Q,\pi) (W) , \label{eq:submodel:gamma}\\
    \mathrm{logit}(\pi_{\eps} (1 \mid W))
  &= \mathrm{logit}(\pi (1\mid W)) + \eps  H_3 (Q,\gamma)(W) ,\label{eq:submodel:pi}
\end{align}
fulfill the properties of \eqref{eq:loss:models:needed}.

\subsubsection{Targeting steps}
\label{sec:targeting:steps}

The targeting steps \(k\) of the targeting algorithm, to update a
current set of estimators
\(\hat{P}^{k}_n = (\hat{Q}^{k}_n, \hat{\gamma}^{k}_n,
\hat{\pi}^{k}_n)\) can now be summarized as follows:
\begin{description}
\item[Updating \(\hat{Q}^k_n\):] Estimate \(\eps\) in the submodel
  \eqref{eq:submodel:Q} by \(\hat{\eps}^{Q,k}_n\) obtained from
  running a logistic regression with outcome \(Y\), offset
  \(\mathrm{logit} (\hat{Q}^k_n (Z, A, W))\) and covariate
  \(H_1 (\hat{\gamma}^k_n) (Z, A, W)\). This corresponds to evaluating
  the submodel \eqref{eq:submodel:Q} in the estimators
  \(\hat{Q}^k_n,\hat{\gamma}^k_n\) and minimizing the loss
  \eqref{eq:loss:Q}.
\item[Updating \(\hat{\gamma}^k_n\):] Estimate \(\eps\) in the
  submodel \eqref{eq:submodel:gamma} by
  \(\hat{\eps}^{\gamma,k}_n\) obtained from running a logistic
  regression with outcome \(Z\), offset
  \(\mathrm{logit} (\hat{\gamma}^k_n (1 \mid A, W))\) and covariate
  \(H_2 (\hat{Q}^{k+1}_n,\hat{\pi}^k_n) ( W)\). This corresponds to
  evaluating the submodel \eqref{eq:submodel:gamma} in the estimators
  \(\hat{Q}^{k+1}_n,\hat{\gamma}^k_n,\hat{\pi}^k_n\) and minimizing
  the loss \eqref{eq:loss:gamma}.
\item[Updating \(\hat{\pi}^k_n\):] Estimate \(\eps\) in the submodel
  \eqref{eq:submodel:pi} by \(\hat{\eps}^{\pi,k}_n\) obtained from
  running a logistic regression with outcome \(Y\), offset
  \(\mathrm{logit} (\hat{\pi}^k_n (1 \mid W))\) and covariate
  \(H_3 (\hat{Q}^{k+1}_n,\hat{\gamma}^{k+1}_n) ( W)\). This
  corresponds to evaluating the submodel \eqref{eq:submodel:pi} in the
  estimators \(\hat{Q}^{k+1}_n,\hat{\gamma}^{k+1}_n,\hat{\pi}^k_n\)
  and minimizing the loss \eqref{eq:loss:pi}.
\end{description}
The iterations are repeated until
\(\vert \mathbb{P}_n \phi_0 (\hat{P}^{k}_n ) \vert \le
\hat{\sigma}_{0,n} / (\sqrt{n}\log n)\), where
\(\hat{\sigma}^2_{0,n} = \mathbb{P}_n (\phi_0 (\hat{P}_n))^2\)
estimates the variance of the efficient influence function.

\subsection{Inference for the targeted minimum loss-based estimator}
\label{sec:inference:tmle}

The asymptotic properties of our estimators
\(\hat{\psi}^{\mathrm{tmle}}_{0,n}, \hat{\psi}^{\mathrm{tmle}}_{1,n}\)
defined by \label{eq:est:tmle:1} and \label{eq:est:tmle:0},
respectively, is provided by the theorem below. The proof relies
directly on similar work \citet[][Theorem
A.5]{van2006targeted,van2017generally,rytgaard2021estimation,van2011targeted},
but is included in Appendix \ref{app:decomposition:proof} for completeness.

\begin{thm}[Inference for the targeted minimum loss-based estimator]
  Under Assumptions (A3) and (A4), as well as
  \begin{itemize}
  \item[(R1)] the efficient influence curve
    \((\phi_{a^*} (P)\,:\, P\in\mathcal{M})\) belongs to a Donsker
    class;
  \item[(R2)] all nuisance parameters \((\pi, \gamma, Q)\) are
    estimated at a rate faster than \(n^{-1/4}\);
    \end{itemize}
    the targeted minimum loss-based estimator
    \( \hat{\psi}^{\mathrm{tmle}}_{a^*,n} = \Psi_{a^*} (\hat{P}^*_{n})
    \) admits the representation
\begin{align} 
  \sqrt{n} \big(  \hat{\psi}^{\mathrm{tmle}}_{a^*,n} - \Psi_{a^*} (P_0)
  \big) = \sqrt{n} \, \mathbb{P}_n \phi_{a^*}
  (P_0) + o_P(1),
  \label{eq:asympt:linearity}
\end{align}
i.e., \( \hat{\psi}^{\mathrm{tmle}}_{a^*,n} \) is asymptotically
linear with influence function equal to the efficient influence
function.
\label{thm:inference:tmle}
\end{thm}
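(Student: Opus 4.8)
The plan is to follow the standard template for proving asymptotic linearity of a TMLE, namely to decompose the estimation error into an empirical average of the efficient influence function at \(P_0\), an empirical process term, and a second-order remainder, and then argue that the latter two are negligible. The natural starting point is the exact expansion implied by the remainder defined in the proof of Theorem \ref{thm:double:robustness}. Writing \(R_{a^*}(P, P_0) = \Psi_{a^*}(P) - \Psi_{a^*}(P_0) + P_0\phi_{a^*}(P)\) and evaluating at the targeted fit \(\hat{P}^*_n\) gives \(\hat{\psi}^{\mathrm{tmle}}_{a^*,n} - \Psi_{a^*}(P_0) = -P_0\phi_{a^*}(\hat{P}^*_n) + R_{a^*}(\hat{P}^*_n, P_0)\). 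Adding and subtracting empirical averages, and using \(P_0\phi_{a^*}(P_0)=0\), I would rearrange this into
\[
  \hat{\psi}^{\mathrm{tmle}}_{a^*,n} - \Psi_{a^*}(P_0)
  = \mathbb{P}_n\phi_{a^*}(P_0)
  - \mathbb{P}_n\phi_{a^*}(\hat{P}^*_n)
  + (\mathbb{P}_n - P_0)\big(\phi_{a^*}(\hat{P}^*_n) - \phi_{a^*}(P_0)\big)
  + R_{a^*}(\hat{P}^*_n, P_0),
\]
so that it remains only to show the three trailing terms are \(o_P(n^{-1/2})\).

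The first trailing term vanishes at the required rate by construction of the targeting algorithm: the iterations are stopped precisely when \(\vert\mathbb{P}_n\phi_0(\hat{P}^*_n)\vert \le \hat{\sigma}_{0,n}/(\sqrt{n}\log n)\), and since \(\hat{\sigma}_{0,n} = O_P(1)\) the \(\log n\) factor forces \(\mathbb{P}_n\phi_{a^*}(\hat{P}^*_n) = o_P(n^{-1/2})\); for \(a^*=1\) the plug-in estimator \eqref{eq:est:tmle:1::::} solves the score equation exactly, so there is nothing to update. The empirical process term is then handled by the Donsker assumption (R1) together with \(L^2(P_0)\)-consistency of the targeted fit: by asymptotic equicontinuity of the empirical process indexed by a Donsker class, once \(\Vert\phi_{a^*}(\hat{P}^*_n) - \phi_{a^*}(P_0)\Vert_{L^2(P_0)} \to_P 0\) the term is \(o_P(n^{-1/2})\). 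The positivity bounds (A3)--(A4) are used here to keep the weights \(\gamma(Z\mid 0,W)/\gamma(Z\mid 1,W)\) and \(1/\pi(\cdot\mid W)\) uniformly bounded, so that the map from \((\pi,\gamma,Q)\) to \(\phi_{a^*}\) is \(L^2\)-continuous and consistency of the nuisance estimators (implied by (R2)) transfers to consistency of the estimated influence function.

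The main obstacle is the second-order remainder \(R_{a^*}(\hat{P}^*_n, P_0)\). For this I would invoke the product structure established in Appendix \ref{app:remainder:binary} and displayed in the proof of Theorem \ref{thm:double:robustness}: every summand of \(R_{a^*}\) is a \(P_0\)-integral of a product of two nuisance-estimation errors --- a difference in \(\gamma\) multiplied by a difference in \(Q\), a difference in \(\pi\) multiplied by a difference in \(\gamma\), and so on --- each multiplied by a weight bounded under (A3)--(A4). Applying Cauchy--Schwarz term by term bounds each contribution by the product of two \(L^2(P_0)\) errors, and under the rate condition (R2) each factor is \(o_P(n^{-1/4})\), so each product is \(o_P(n^{-1/2})\). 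This is precisely the point where the double-robust product structure, rather than fast estimation of any single nuisance parameter, is what makes the \(n^{-1/4}\) rates sufficient. A minor step I would also record is that the finite-dimensional logistic targeting updates preserve the \(n^{-1/4}\) rates of the initial estimators, so (R2) applies to \(\hat{P}^*_n\) and not merely to the initial fit. Collecting the three negligible terms and multiplying through by \(\sqrt{n}\) then yields \eqref{eq:asympt:linearity}.
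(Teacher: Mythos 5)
Your proposal is correct and follows essentially the same route as the paper's proof in Appendix \ref{app:decomposition:proof}: the identical exact decomposition via $R_{a^*}(\hat{P}^*_n,P_0)$, the stopping rule $\vert\mathbb{P}_n\phi_{a^*}(\hat{P}^*_n)\vert \le \hat{\sigma}_{a^*,n}/(\sqrt{n}\log n)$ to kill the score term, (R1) for the empirical process term, and (R2) with the product structure of the remainder. If anything, you supply more detail than the paper does --- the equicontinuity argument, the term-by-term Cauchy--Schwarz bound on $R_{a^*}$, and the observation that the finite-dimensional fluctuations preserve the initial rates --- all of which the paper leaves implicit under ``by Assumptions (R1) and (R2).''
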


\begin{proof}
  See Appendix \ref{app:decomposition:proof}.
\end{proof}

\section{Simulation study}
\label{sec:simulation:study}

We evaluate and illustrate the performance of our estimation procedure
in three different variations over simulation settings. Overall, we
simulate variables as follows:
\begin{align}
    \label{eq:sim:1}\tag{sim-1}
  \begin{split}
  W_1 &\sim \mathrm{Ber}(0.6), \\
  W_2 &\sim U(-1,1), \\
  A & \sim \mathrm{Ber}(\mathrm{expit}(0.5-1.8W_1 +0.5 W_2^2)), \\
  Z & \sim \mathrm{Ber}(\mathrm{expit}(0.6-1.8W_1+0.5W_2^2-0.9A)), \\
    Y & \sim \mathrm{Ber}(\mathrm{expit}(-0.2-1.3W_1+W_2^2 + 0.8A  -0.6Z -1.8Z(1-W_1))).
        \end{split}
\end{align}
Here we may think of \(W_1=0\) indicating some health deficiency,
over-represented among the exposed \(A=1\) and much more likely to
have a beneficial effect of having \(Z=1\) (e.g., receiving an
ambulance dispatch). Moreover, we note that the variables are
simulated such that exposed subject are much less likely to have
\(Z=1\) (e.g., receiving an ambulance dispatch) and also have a higher
mortality (\(Y=1\)).

Since there is no unmeasured confounding, we can estimate the true
value of the target parameter by simulating a large number of times
from the counterfactual distributions:
\begin{align*}
  Z^{\gammapoor} & \sim \mathrm{Ber}(\mathrm{expit}(0.6-1.8 W_1+0.5  W_2^2-0.9 \cdot 0 )) \\
  Y^{\gammapoor} & \sim \mathrm{Ber}(\mathrm{expit}(-0.2-1.3W_1 +W_2^2 +0.8A  -0.6  Z^{\gammapoor}-1.8 Z^{\gammapoor}(1-W_1)))
                 \intertext{and, similarly, we estimate the true value of the interventional  disparity indirect effect considered by
                 \citep{micali2018maternal}, see
                 \eqref{eq:micali:idm-ie}, by simulating from the counterfactual distribution:}
                 Y_{A=1}^{\gammapoor}  & \sim \mathrm{Ber}(\mathrm{expit}(-0.2-1.3W_1 +W_2^2 +  0.9 \cdot 1   -0.6  Z^{\gammapoor}-1.8 Z^{\gammapoor}(1-W_1))). 
\end{align*}
Our parameter is here estimated to
\(\EE [ Y^{\gammapoor} - Y \mid A=1 ] \approx -0.0797\) and the
interventional disparity indirect effect considered by
\citep{micali2018maternal} is estimated to
\(\EE [ Y_{A=1}^{\gammapoor} - Y ] \approx -0.0552\), highlighting the
difference between assessing the effect of the intervention on the
full population compared to only the exposed population. Results from
a simulation study with \(n=1000\) and \(M=500\) repetitions can be
found in Table \ref{tab:sim:results:1}. Note that we used correctly
specified parametric models for all nuisance parameters in the first
row for each parameter, and then considered misspecification of each
model in turn by leaving out \(W_1\) and by including \(W_2\) rather
than \(W_2^2\).

\begin{table}[ht]
  \centering
  Estimation of our target parameter: \\
\begin{tabular}{llllll}
  \toprule
  scenario & truth & bias (initial) & bias (tmle) & SE & Cov (95\%) \\ 
  \hline\\[-0.3cm]
  all-correct & -0.0796 & -0.0001 & -0.0004 & 0.0195 & 0.96 \\ 
  miss-\(Q\) & -0.0796 & 0.0232 & -0.0003 & 0.0145 & 0.914 \\ 
  miss-\(\gamma\) & -0.0796 & 0.1042 & -0.0001 & 0.0198 & 0.96 \\ 
  miss-\(\pi\) & -0.0796 & -0.0001 & -0.0004 & 0.015 & 0.908 \\ 
  miss-\(Q\gamma\) & -0.0796 & 0.0753 & 0.0223 & 0.012 & 0.54 \\ 
  miss-\(Q\pi\) & -0.0796 & 0.0232 & -0.0004 & 0.0133 & 0.888 \\ 
  miss-\(\gamma\pi\) & -0.0796 & 0.1042 & 0.0202 & 0.0124 & 0.58 \\ 
  \bottomrule \\
\end{tabular} \\
\centering
Estimation of the overall interventional  disparity indirect effect: \\
\begin{tabular}{llllll}
  \toprule
  scenario & truth & bias (initial) & bias (tmle) & SE & Cov (95\%) \\ 
  \hline\\[-0.3cm]
  all-correct & -0.0556 & -0.001 & -0.0012 & 0.0151 & 0.956 \\ 
  miss-\(Q\) & -0.0556 & 0.0052 & -0.0012 & 0.0138 & 0.952 \\ 
  miss-\(\gamma\) & -0.0556 & 0.0514 & -0.001 & 0.0127 & 0.964 \\ 
  miss-\(\pi\) & -0.0556 & -0.001 & -0.0012 & 0.0149 & 0.974 \\ 
  miss-\(Q\gamma\) & -0.0556 & 0.0513 & 0.0057 & 0.0099 & 0.888 \\ 
  miss-\(Q\pi\) & -0.0556 & 0.0052 & -0.0183 & 0.0133 & 0.714 \\ 
  miss-\(\gamma\pi\) & -0.0556 & 0.0514 & 0.0525 & 0.0125 & 0.01 \\ 
  \bottomrule
\end{tabular}
\caption{Results from the simulation study \ref{eq:sim:1} with
  \(n=1000\) and \(M=500\) repetitions.}\label{tab:sim:results:1}
\end{table}

Next, we change our setting by introducing unmeasured
\(AZ\) confounding:
\begin{align}
    \label{eq:sim:2}\tag{sim-2}
  \begin{split}
  W_1 &\sim \mathrm{Ber}(0.6), \\
  W_2 &\sim U(-1,1), \\
    A & \sim \mathrm{Ber}(\mathrm{expit}(0.5-1.8W_1 +0.5 W_2^2)), \\
     U &  \sim \mathrm{Ber}(\mathrm{expit}(-0.9+1.7A)) \\
  Z & \sim \mathrm{Ber}(\mathrm{expit}(0.6-1.8W_1+0.5W_2^2-1.25U)), \\
    Y & \sim \mathrm{Ber}(\mathrm{expit}(-0.2-1.3W_1+W_2^2 + 0.8A  -0.6Z -1.8Z(1-W_1))); 
        \end{split}
\end{align}
notably, \(U\) is strongly associated with \(A\), and there is only an
effect of \(U\), not \(A\), on \(Z\). Now the simulation from the
counterfactual distribution:
\begin{align*}
  Z^{\gammapoor} & \sim \mathrm{Ber}(\mathrm{expit}(0.6-1.8 W_1+0.5  W_2^2 )), 
\end{align*}
will only identify the right hand side of
\eqref{eq:target:parameter:2} in Lemma \ref{interpret:causal:2}. To
estimate the true value of the statistical target parameter, now
interpreted only interventional according to Lemma
\ref{interpret:causal:1} can be done as follows: 
\begin{align*}
  U_0 & \sim \mathrm{Ber}(\mathrm{expit}(-0.9+1.7\cdot 0)) \\
  Z^{\gammapoor} & \sim \mathrm{Ber}(\mathrm{expit}(0.6-1.8 W_1+0.5  W_2^2 + U_0 )) \\
  Y^{\gammapoor} & \sim \mathrm{Ber}(\mathrm{expit}(-0.2-1.3W_1 +W_2^2
                 +0.8A -0.6 Z^{\gammapoor}-1.8 Z^{\gammapoor}(1-W_1))) . 
\end{align*}
In this particular setting, we have that
\(\EE [ Y^{\gamma^{a=0}} - Y \mid A=1 ] = 0\), and we estimate that
\(\EE [ Y^{\gammapoor} - Y \mid A=1 ] \approx -0.04398\), i.e., there is
an effect of the policy shifting the distribution of \(Z\) only
through the dependence on the unmeasured \(U\). In the estimation
procedure \(U\) is unknown, and Table \ref{tab:sim:results:2} shows
results for estimation of
\(\EE [ Y^{\gammapoor} - Y \mid A=1 ] \approx -0.04398\).

\begin{table}[ht]
  \centering
    Estimation of our target parameter: \\
\begin{tabular}{llllll}
  \toprule
  scenario & truth & bias (initial) & bias (tmle) & SE & Cov (95\%) \\ 
  \hline\\[-0.3cm]
  all-correct & -0.044 & 0.0008 & 0.001 & 0.0184 & 0.952 \\ 
  miss-\(Q\) & -0.044 & 0.0128 & 0.0007 & 0.0125 & 0.862 \\ 
  miss-\(\gamma\) & -0.044 & 0.0946 & 0.0012 & 0.0201 & 0.968 \\ 
  miss-\(\pi\) & -0.044 & 0.0008 & 0.0013 & 0.0128 & 0.838 \\ 
  miss-\(Q\gamma\) & -0.044 & 0.061 & 0.0159 & 0.0127 & 0.724 \\ 
  miss-\(Q\pi\) & -0.044 & 0.0128 & 0.0006 & 0.0105 & 0.788 \\ 
  miss-\(\gamma\pi\) & -0.044 & 0.0946 & 0.0152 & 0.0128 & 0.688 \\ 
  \bottomrule
\end{tabular}
\caption{Results from the simulation study \ref{eq:sim:2} (with
  unmeasured \(A\)-\(Z\) confounding) with \(n=1000\) and \(M=500\)
  repetitions.}\label{tab:sim:results:2}
\end{table}

Finally, we change our simulation setting by introducing support
differences for \(W_2\) on the exposed and unexposed. Particularly, we
now simulate data such that
\begin{align}
    \label{eq:sim:3}\tag{sim-3}
  \begin{split}
  W_1 &\sim \mathrm{Ber}(0.6), \\
  W_2 &\sim U(-1,1), \\
  A & \sim \mathrm{Ber}(\mathrm{expit}(0.5-1.8W_1 +0.5 W_2^2 - 4\1\lbrace W_2 > 0.5\rbrace)), \\
  Z & \sim \mathrm{Ber}(\mathrm{expit}(0.6-1.8W_1+0.5W_2^2-0.9A)), \\
    Y & \sim \mathrm{Ber}(\mathrm{expit}(-0.2-1.3W_1+W_2^2 + 0.8A  -0.6Z -1.8Z(1-W_1))).
        \end{split}
\end{align}
This now means that the support of \(W\) on the unexposed is larger
than on the exposed. This is not a problem for the estimation of our
parameter, see Assumption (A4), but it is a problem for the
interventional disparity indirect effect considered by
\citep{micali2018maternal}. Results can be found in Table
\ref{tab:sim:results:3:1} for estimation of the intervention specific
parameter \(\Psi_0(P)\) and \(\Psi_1(P)\), specifically, and in Table
\ref{tab:sim:results:3:2} for estimation of the disparity
effects. Note that we have here increased the sample size as well as
the number of simulation repetitions to ensure that the results found
are not simply due to monte carlo variation.

\begin{table}[ht]
  \centering
  Estimation of the risk under the interventional \(\gamma(z\mid 0,w)\): \\
\begin{tabular}{llllll}
  \toprule
on the exposed & 0.3553 & 0.0010 & 0.0008 & 0.0180 & 0.9460 \\ 
on the full population & 0.3692 & -0.0009 & -0.0012 & 0.0218 & 0.9300 \\
  \bottomrule \\
\end{tabular} \\
\centering
Estimation of the risk under the observed \(\gamma(z\mid 1,w)\): \\
\begin{tabular}{llllll}
  \toprule
  on the exposed & 0.4354 & 0.0005 & 0.0005 & 0.0177 & 0.9440 \\ 
  on the full population & 0.4240 & -0.0001 & 0.0001 & 0.0208 & 0.9230 \\ 
  \bottomrule
\end{tabular}
\caption{Results from the simulation study \ref{eq:sim:3} (with
  positivity violations) with \(n=2500\) and \(M=1000\)
  repetitions.}\label{tab:sim:results:3:1}
\end{table}

\begin{table}[ht]
  \centering
  Estimation of our target parameter: \\
\begin{tabular}{llllll}
  \toprule
  scenario & truth & bias (initial) & bias (tmle) & SE & Cov (95\%) \\ 
  \hline\\[-0.3cm]
all-correct & -0.0801 & 0.0005 & 0.0003 & 0.0125 & 0.951 \\ 
  miss-\(Q\) & -0.0801 & 0.0264 & 0.0003 & 0.0095 & 0.879 \\ 
  miss-\(\gamma\) & -0.0801 & 0.0988 & 0.0038 & 0.0117 & 0.932 \\ 
  miss-\(\pi\) & -0.0801 & 0.0005 & 0.0002 & 0.0108 & 0.896 \\ 
  miss-\(Q\gamma\) & -0.0801 & 0.0718 & 0.0283 & 0.007 & 0.071 \\ 
  miss-\(Q\pi\) & -0.0801 & 0.0264 & 0.0002 & 0.0093 & 0.869 \\ 
  miss-\(\gamma\pi\) & -0.0801 & 0.0988 & 0.0259 & 0.009 & 0.247 \\ 
  \bottomrule \\
\end{tabular} \\
\centering
Estimation of the overall interventional  disparity indirect effect: \\
\begin{tabular}{llllll}
  \toprule
  scenario & truth & bias (initial) & bias (tmle) & SE & Cov (95\%) \\ 
  \hline\\[-0.3cm]
all-correct & -0.0548 & -0.0009 & -0.0013 & 0.0106 & 0.93 \\ 
  miss-\(Q\)  & -0.0548 & 0.0071 & -0.0008 & 0.0104 & 0.915 \\ 
  miss-\(\gamma\) & -0.0548 & 0.0464 & 0.0028 & 0.0076 & 0.923 \\ 
  miss-\(\pi\) & -0.0548 & -0.0009 & -0.001 & 0.0109 & 0.961 \\ 
  miss-\(Q\gamma\) & -0.0548 & 0.0464 & 0.0109 & 0.0064 & 0.555 \\ 
  miss-\(Q\pi\) & -0.0548 & 0.0071 & -0.0187 & 0.0093 & 0.485 \\ 
  miss-\(\gamma\pi\) & -0.0548 & 0.0464 & 0.0395 & 0.0091 & 0.005 \\ 
  \bottomrule
\end{tabular}
\caption{Results from the simulation study \ref{eq:sim:3} with
  \(n=2500\) and \(M=1000\) repetitions.}\label{tab:sim:results:3:2}
\end{table}

\section{Extension to event history settings}
\label{sec:event:history:setting}

We consider the extension to a right-censored time-to-event outcome 
\((\tilde{T},\tilde{\Delta}) \in \R_+ \times \lbrace 0,1,\ldots,
J\rbrace\), where \(\tilde{T} = \min (C,T)\) is the minimum of a
(latent) censoring time \(C\in \R_+\) and a (latent) event time
\(T\in\R_+\) and \(\tilde{\Delta} = \1 \lbrace T\le C\rbrace \Delta\)
indicates right-censoring (\(\tilde{\Delta}=0\)) or type of event
(\(\Delta \ge 1\)). As before, \(W\in\R^d\) denotes a vector of
covariates, \(A\in\lbrace 0,1\rbrace\) is the binary exposure
variable, and \(Z\in \lbrace 0,1\rbrace\) is the binary intermediate
variable. For \(j=1,2\), we let \(\lambda_{0,j}\) denote the cause
\(j\) specific hazard defined as
\begin{align*}
  \lambda_{0,j}(t \, \vert \, z, a, w)
  & =
    \underset{h \rightarrow 0}{\lim} \,\, h^{-1} P(T \le t+h,  \Delta=j \mid {T} \ge t,
  Z=z,  A=a, W=w),  
\end{align*}
and \(\Lambda_{0,j}(t \mid z,a,w)\) the corresponding cumulative
hazard. Likewise, we let \(\lambda_0^c(t \mid z,a,w)\) denote the
conditional hazard for censoring and \(\Lambda_0^c(t \mid z,a,w)\) the
corresponding cumulative hazard. The survival function and the
censoring survival function are denoted
\( {S}_0 (t \mid z, a, w) = \exp (- \int_0^t \sum_{j=1}^J
{\lambda}_{0,j}(s \mid z, a, w) ds)\) and
\( {S}^c_0 (t \mid z, a, w) = \exp (- \int_0^t {\lambda}_0^c(s \mid z,
a, w) ds)\), respectively. Lastly, we denote by
\( {F}_{0,j}(t \mid a,w) = \int_0^t {S}_{0}(s- \mid a, w)
{\lambda}_{0,j} (s \mid a, w) ds\) the absolute risk function for
events of type \(j\) \citep{gray1988class}. As in Section
\ref{sec:setting:notation}, \(\pi (a\mid w)\) denotes the conditional
distribution of exposure \(A\) given covariates \(W\), and
\(\gamma(z \mid a, w)\) denotes the conditional distribution of the
intermediate variable \(Z\) given exposure \(A\) and covariates
\(W\). We assume throughout that
\begin{align*}
(T,\Delta) \independent C \mid (Z,A,W).
\end{align*}

\subsection{Target parameter and different interpretations}

Our target parameter is now defined as: 
\begin{align}
  \begin{split}
    \Psi_{a^*}(P)
    &= \EE \bigg[ \sum_{z=0,1} F_1 (\tau  \mid  Z=z, A, W)   \gamma(z \mid a^*, W)  \bigg\vert A=1\bigg] \\
    &=   \int_{\mathcal{W}} \sum_{z=0,1}  F_1 (\tau  \mid  Z=z, A, W) 
      \gamma( z \mid a^*, w) 
      \frac{\pi( 1 \mid w)}{\bar{\pi}(1)}  d\mu (w) \\
    &=   \int_{\mathcal{W}} \sum_{z=0,1} \bigg( \int_0^\tau \lambda_1 (t \mid z, 1, w) S(t-\mid z,1,w)dt\bigg) 
      \gamma( z \mid a^*, w) 
      \frac{\pi( 1 \mid w)}{\bar{\pi}(1)}  d\mu (w),
      \end{split} \label{target:parameter:cr}
\end{align}
corresponding to the intervention-specific risk evaluated at time
\(\tau >0\).

The following structural assumptions correspond to these needed for an
interventional interpretation from in Lemma \ref{interpret:causal:1}
of our target parameter in the uncensored setting:
\begin{enumerate}
\item[(SA1)] \((T^z, \Delta^z) \independent Z \mid (A, W)\), for \(z=0,1\);
\item[(SA2)] \(T^{z} = T\) and \(\Delta^z=\Delta\) if \(Z=z\), for
  \(z=0,1\);
\item[(SA3)] \(P(Z=z \mid A=a, W)>\eta>0\) for \(z=0,1\) and \(a=0,1\); 
\item[(SA4)] \(P(A=a \mid W)>\eta>0\) for \(a=0,1\); and
\item[(SA5)] \(S^c( t\mid z,a,W) > \eta>0\) for \(z=0,1\), \(a=0,1\)
  and \(t<\tau\).
\end{enumerate}
Define 
\((T^{{\gammapoor}}, \Delta^{{\gammapoor}}) \) as the counterfactual
outcome we would observe had we intervened and changed the conditional
distribution of \(Z\) to be \(\gammapoor\) rather than \(\gamma\);
similarly define \((T^{{\gamma_{a=1}}}, \Delta^{{\gamma_{a=1}}}) \).  The
statistical parameter defined by \eqref{target:parameter:cr}
identifies the causal parameter
\begin{align}
  \Psi(P) =  P( T^{{\gammapoor}} \le \tau, \Delta^{{\gammapoor}}=1 \mid A=1)
  - P( T^{{\gamma_{a=1}}} \le \tau, \Delta^{{\gamma_{a=1}}}=1\rbrace \mid A=1 ), 
  \label{eq:target:parameter:cr:1}
\end{align}
under Assumptions (SA1)--(SA5). To obtain the stronger interpretation
as in Lemma \ref{interpret:causal:2}, i.e., that
\begin{align}
  \Psi(P) =  P ( T^{{\gamma^{a=0}}} \le \tau, \Delta^{{\gamma^{a=0}}}=1 \mid A=1) 
  - P( T^{{\gamma^{a=1}}} \le \tau, \Delta^{{\gamma^{a=1}}}=1 \mid A=1 ), 
  \label{eq:target:parameter:cr:2}
\end{align}
relies on further Assumptions (A1*) and (A2*) as stated just before
Lemma \ref{interpret:causal:2}.

\subsection{Statistical estimation problem}

Below we present the efficient influence function and the double
robustness properties for estimation of the target parameter in the
event history setting. The biggest differences lie in the fact that
the censoring mechanism acts continuously in term, so that first term
of the efficient influence curve becomes a (martingale) integral
across time. Moreover, the double robustness properties change to
involve the censoring mechanism across time as well.

\begin{thm}[Efficient influence function for the event history setting]
  The efficient influence function for
  \(\Psi_{a^*} \, : \, \mathcal{M}\rightarrow \R\) can be represented
  as follows:
\begin{align*}
  \phi_{a^*} (P) (O)
  &  = \frac{\gamma( Z \mid a^*, W) }{\gamma( Z \mid 1, W)}
    \frac{\1\lbrace A=1\rbrace}{\bar{\pi}(1)} 
  \bigg(  \int_0^{\tau}
    \big( S^c(t- \mid Z,A,W)\big)^{-1} \\
  &\qquad\qquad \times \bigg( 1-
    \frac{F_1 (\tau \mid Z,A,W) - F_1(t \mid Z,A,W)}{S(t \mid Z,A,W)}\bigg)\big(N_1 (dt) - \1\lbrace
    \tilde{T}\ge t \rbrace \lambda_1 (t \mid Z,A,W) dt \big) \\
 &\qquad\qquad -\sum_{l\neq 1} \int_0^{\tau}
    \big( S^c(t- \mid Z,A,W)\big)^{-1} \bigg( 
   \frac{F_1 (\tau \mid Z,A,W)- F_1(t \mid Z,A,W)}{S(t \mid Z,A,W)}\bigg) \\[-0.3cm]
    &\qquad\qquad\qquad\qquad\qquad\qquad\qquad\qquad\qquad\qquad\qquad \times \big(N_l (dt) - \1\lbrace
  \tilde{T}\ge t \rbrace \lambda_l (t \mid Z,A,W) dt \big) \bigg) \\[-0.1cm]
                    &\qquad\quad  + \frac{\1\lbrace A=a^*\rbrace}{\pi(A \mid W) } \frac{\pi( 1 \mid W)}{\bar{\pi}(1)}
                      \bigg( F_1(\tau \mid Z, 1, W) - \sum_{z=0,1} F_1 (\tau \mid z, 1, W) \gamma(z \mid A , W) \bigg) \\
                    &\qquad\quad + \frac{\1\lbrace A=1\rbrace }{\bar{\pi}(1)} \bigg(\sum_{z=0,1} F_1 (\tau \mid z, 1, W) \gamma(z \mid a^* , W)
                      - \Psi_{a^*} (P) \bigg)  .
\end{align*}
We also write
\(\phi_{a^*} (P) = \tilde{\phi}_{a^*} (\pi, \bar{\pi}, \gamma, \lambda_1,\ldots, \lambda_J, S^c)\).
\label{thm:eff:ic:survival}
\end{thm}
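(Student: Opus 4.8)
The plan is to derive the efficient influence function for $\Psi_{a^*}$ in the event history setting by combining two essentially independent pieces: the influence function for the intervention-specific risk functional treating the conditional risk $F_1(\tau \mid z,1,w)$ as a single target, and the pathwise derivative of the inner risk itself with respect to the hazard parameters under right-censoring. The overall structure mirrors Theorem \ref{thm:eff:ic} exactly: the second and third lines of the claimed expression (the $\mathbb{1}\{A=a^*\}$ term and the $\mathbb{1}\{A=1\}$ centering term) are identical in form to the uncensored case, with $Q(z,1,w)$ simply replaced by $F_1(\tau \mid z,1,w)$, since the derivation of those two terms only involves differentiating through $\gamma$, $\pi$, $\bar\pi$ and the outer expectation, none of which are affected by censoring. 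So the genuinely new work is confined to the first block, which replaces the residual term $\gamma(Z\mid a^*,W)\gamma(Z\mid1,W)^{-1}\bar\pi(1)^{-1}\mathbb{1}\{A=1\}(Y-Q(Z,A,W))$.

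First I would fix the nuisance components $(\pi,\bar\pi,\gamma)$ and compute the canonical gradient of the map $P \mapsto F_1(\tau \mid z,1,w)$, viewing the absolute risk as a functional of the cause-specific hazards $\lambda_1,\dots,\lambda_J$ in the presence of the censoring survival $S^c$. This is the standard efficient-influence-function calculation for the cause-$1$ absolute risk under right-censoring with competing risks; the result is the familiar martingale-integral representation in which each cause-$j$ counting-process increment $N_j(dt)$ is compensated by $\mathbb{1}\{\tilde T \ge t\}\lambda_j(t\mid Z,A,W)\,dt$, weighted by the inverse censoring survival $S^c(t-\mid Z,A,W)^{-1}$ and by the clever coefficient $1 - \{F_1(\tau\mid\cdot)-F_1(t\mid\cdot)\}/S(t\mid\cdot)$ for the event of interest and by $-\{F_1(\tau\mid\cdot)-F_1(t\mid\cdot)\}/S(t\mid\cdot)$ for the competing events $l\neq1$. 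I would obtain these coefficients by differentiating $F_1(\tau\mid z,a,w)=\int_0^\tau S(t-\mid z,a,w)\lambda_1(t\mid z,a,w)\,dt$ along one-dimensional hazard submodels and using the orthogonality of martingales associated with distinct counting processes to read off the gradient.

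Next I would reweight this inner gradient by exactly the same factor $\gamma(Z\mid a^*,W)\gamma(Z\mid1,W)^{-1}\bar\pi(1)^{-1}\mathbb{1}\{A=1\}$ that appeared in Theorem \ref{thm:eff:ic}. The justification is that conditionally on $(Z,A,W)$ the hazard score and the weight are exactly the tangent-space directions one perturbs, and the reweighting realizes the change of measure from the observed law of $(Z\mid A=1,W)$ to the targeted law $\gamma(\cdot\mid a^*,W)$, restricted to the exposed via $\mathbb{1}\{A=1\}/\bar\pi(1)$; this is the event-history analogue of how the residual $Y-Q$ was reweighted by the density ratio in the binary case. Adding the three blocks and verifying that the resulting object has mean zero under $P$ and lies in the tangent space (by checking it is orthogonal to nuisance scores and satisfies the pathwise-derivative identity $\frac{d}{d\varepsilon}\Psi_{a^*}(P_\varepsilon)|_{0}=\langle \phi_{a^*},\text{score}\rangle$ for each coordinate of the model) completes the argument.

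The main obstacle I anticipate is the careful bookkeeping of the competing-risks martingale structure: one must track how a perturbation of a single cause-$j$ hazard propagates simultaneously into the survival function $S$ (which multiplies every cause-specific hazard) and into $F_1$ itself, so that the coefficient for the cause-of-interest term picks up the extra ``$1-$'' relative to the competing-cause terms. Getting the inverse-censoring weight $S^c(t-\mid\cdot)^{-1}$ correct — it arises from the inverse-probability-of-censoring representation needed to recover the full-data gradient from the observed-data coarsened model under the assumed $(T,\Delta)\independent C\mid(Z,A,W)$ — and confirming that Assumption (SA5) is precisely what guarantees integrability of that weight over $[0,\tau)$, is the delicate point where the censoring mechanism enters. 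Once the inner absolute-risk gradient is pinned down, the outer reweighting and the two remaining terms follow verbatim from the uncensored proof, so the difficulty is concentrated entirely in this first derivation.
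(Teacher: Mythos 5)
Your proposal takes essentially the same route as the paper's own proof in Appendix \ref{app:eff:ic:survival}: the paper likewise decomposes the pathwise derivative of \(\Psi_{a^*}\) along the factorization \(p(o \mid A=1) = p_{\tilde{T},\tilde{\Delta}}(t,\delta \mid z,a,w)\,\gamma(z \mid a,w)\,\pi(a \mid w)\bar{\pi}(1)^{-1}\mu(w)\) exactly as in the binary case, notes that only the \((\tilde{T},\tilde{\Delta})\)-factor term changes (all other terms carrying over verbatim with \(F_1(\tau \mid z,1,w)\) substituted for \(Q(z,1,w)\)), and obtains the same weighted martingale-integral first block you describe, with the inner absolute-risk gradient computation delegated to \citet[Appendix V]{rytgaard2021estimation} rather than rederived from scratch. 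Your plan is correct, and your anticipated delicate points (propagation of a cause-\(j\) hazard perturbation into both \(S\) and \(F_1\), and the inverse-censoring weight under (SA5)) are precisely the content of that cited computation.
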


\begin{proof}
  See Appendix \ref{app:eff:ic:survival}.
\end{proof}

\begin{thm}[Double robustness in the event history setting]
  Say we estimate \(\pi\) by \(\hat{\pi}_n\), \(S^c\) by
  \(\hat{S}^c_n\), \(\gamma\) by \(\hat{\gamma}_n\) and \(\lambda_j\)
  by \(\hat{\lambda}_{n,j}\). Let \({\pi}'_0\), \({\gamma}'_0\), \({{S}_0^c}'\) and
  \({\lambda}'_{0,j}\) denote the limits of these estimators. Let
  \(\hat{\bar{\pi}}_n = \frac{1}{n} \sum_{i=1}^n \1\lbrace
  A_i=1\rbrace\). Assume first that
  \(\mathbb{P}_n \tilde{\phi}_{0} (\hat{\pi}_n, \hat{\bar{\pi}}_n,
  \hat{\gamma}_n,\hat{\lambda}_{n,1},\ldots, \hat{\lambda}_{n,J},
  \hat{S}^c_n)=o_P(n^{-1/2})\). Then
  \(\tilde{\Psi}_0(\hat{\pi}_n, \hat{\bar{\pi}}_n,
  \hat{\gamma}_n,\hat{\lambda}_{n,1},\ldots, \hat{\lambda}_{n,J})\)
  provides a consistent estimator for \(\Psi_{0} (P_0)\) if:
  \begin{enumerate}
  \item[a.] \({{S}_0^c}'= S^c_0 \) and \({\gamma}'_0 = \gamma_0\); or
  \item[b.] \({\pi}'_0= \pi_0 \) and
    \({\lambda}'_{0,1} = {\lambda}_{0,1} , \ldots, {\lambda}'_{0,J} =
    {\lambda}_{0,J} \).
  \end{enumerate}
  Similarly,
  \(\tilde{\Psi}_1(\hat{\lambda}_{n,1},\ldots, \hat{\lambda}_{n,J})\)
  provides a consistent estimator for \(\Psi_{1} (P_0)\) if:
  \begin{enumerate}
  \item[c.] \({{S}_0^c}'= S^c_0 \); or
  \item[d.]
    \({\lambda}'_{0,1} = {\lambda}_{0,1} , \ldots, {\lambda}'_{0,J} =
    {\lambda}_{0,J} \).
\end{enumerate}
\label{thm:double:robustness:survival}
\end{thm}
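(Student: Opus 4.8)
\noindent\emph{Proof plan.}
The plan is to reproduce, in the censored setting, the strategy used for Theorem \ref{thm:double:robustness}. Writing $\hat{P}_n$ for the collection of fitted nuisances and $P'$ for their limit, I define the second-order remainder
\[
R_{a^*}(P, P_0) = \Psi_{a^*}(P) - \Psi_{a^*}(P_0) + P_0 \phi_{a^*}(P),
\]
now with $\phi_{a^*}(P)$ the efficient influence function of Theorem \ref{thm:eff:ic:survival}. The standard decomposition
\[
\tilde{\Psi}_0(\hat{P}_n) - \Psi_0(P_0) = (\mathbb{P}_n - P_0)\phi_0(\hat{P}_n) - \mathbb{P}_n \phi_0(\hat{P}_n) + R_0(\hat{P}_n, P_0)
\]
then reduces consistency to showing $R_0 \to 0$: the middle term is $o_P(n^{-1/2})$ by assumption, and the empirical-process term is $o_P(1)$ because the nuisances converge to fixed limits. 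It therefore suffices to exhibit a product structure for $R_0(P',P_0)$ (and separately $R_1(P',P_0)$) from which a.--d.\ can be read off, as in an appendix calculation mirroring that of Appendix \ref{app:remainder:binary}.

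The genuinely new ingredient is the martingale-integral term in the first line of $\phi_{a^*}$. I would take its $P_0$-expectation using that the $P_0$-compensator of each $N_j$, given $(Z,A,W)$, is $\1\lbrace\tilde{T}\ge t\rbrace\lambda_{0,j}(t)\,dt$, so that
\[
\EE_{P_0}\big[ N_j(dt) - \1\lbrace\tilde{T}\ge t\rbrace\lambda_j(t)\,dt \,\big|\, Z,A,W\big] = \1\lbrace\tilde{T}\ge t\rbrace\big(\lambda_{0,j}(t) - \lambda_j(t)\big)\,dt,
\]
together with $P_0(\tilde{T}\ge t \mid Z,A,W) = S_0(t-\mid Z,A,W)\,S_0^c(t-\mid Z,A,W)$. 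The weight $(S^c(t-))^{-1}$ appearing in $\phi_{a^*}$ is built precisely so that integrating out the at-risk indicator cancels $S_0^c(t-)$ and leaves the factor $S_0^c(t-)/S^c(t-)$. Combining the resulting integral with the $F_1$-part of the plug-in discrepancy $\Psi_0(P) - \Psi_0(P_0)$ and using the Duhamel / integration-by-parts identity $F_1(\tau\mid\cdot) - F_1(t\mid\cdot) = \int_t^\tau S(s-\mid\cdot)\lambda_1(s\mid\cdot)\,ds$ with $S(t) = \exp(-\int_0^t \sum_j \lambda_j)$, the net survival/censoring contribution collapses to an integral whose integrand is the product of $\big(S_0^c(t-)/S^c(t-) - 1\big)$ and hazard differences $(\lambda_{0,j} - \lambda_j)$, weighted by $S_0$ and the outer $\gamma,\pi$ factors. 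This is the inner $(S^c,\lambda)$ double robustness of the conditional absolute risk $F_1(\tau\mid z,1,w)$.

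Superimposed on this inner structure is the outer $(\gamma,\pi)$ structure inherited from Theorem \ref{thm:double:robustness} (with $F_1$ playing the role of $Q$): the remaining terms of $R_0(P',P_0)$ are a product of $(\gamma_0 - \gamma')$ with $(F_{1,0} - F_1')$, a product of $(\pi_0 - \pi')$ with $(\gamma_0 - \gamma')$, and a term proportional to $(1 - \bar{\pi}_0/\bar{\pi}')(\Psi_0(P') - \Psi_0(P_0))$ which vanishes automatically since $\hat{\bar{\pi}}_n$ is always consistent for $\bar{\pi}_0$. Reading off the conditions: under a.\ the inner integral vanishes through $S_0^c/S^c - 1 = 0$ while both outer products vanish through $\gamma_0 - \gamma' = 0$; under b.\ correctness of all hazards forces $F_1' = F_{1,0}$ and $\lambda_{0,j} - \lambda_j' = 0$, killing the inner integral and the $(F_{1,0}-F_1')$ product, while $\pi_0 - \pi' = 0$ kills the $(\pi_0-\pi')(\gamma_0-\gamma')$ product. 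For $\Psi_1$ I would first note that setting $a^* = 1$ collapses the ratio $\gamma(Z\mid a^*,W)/\gamma(Z\mid 1,W)$ to $1$ and makes the second and third lines of $\phi_{a^*}$ combine into $\bar{\pi}(1)^{-1}\1\lbrace A=1\rbrace(F_1(\tau\mid Z,1,W) - \Psi_1(P))$, so that $\phi_1(P)$ depends on the nuisances only through $(\lambda_1,\ldots,\lambda_J,S^c)$; its remainder retains only the inner product, giving c.\ and d.\ directly.

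The main obstacle I anticipate is exactly the martingale/Duhamel step: verifying that the $P_0$-expectation of the martingale correction, once added to the $F_1$-part of the plug-in bias, collapses to the clean product in $\big(S_0^c/S^c - 1\big)$ and $(\lambda_{0,j} - \lambda_j)$ without leaving uncancelled first-order pieces. This is where the representation of $F_{1,0}(\tau) - F_1(\tau)$ as a survival-weighted integral of cause-specific hazard differences must be used carefully, and is where the bulk of the computation will reside.
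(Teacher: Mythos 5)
Your plan is correct and follows essentially the same route as the paper: the paper's proof likewise defines \(R_{a^*}(P,P_0)=\Psi_{a^*}(P)-\Psi_{a^*}(P_0)+P_0\phi_{a^*}(P)\), evaluates the martingale term via the \(P_0\)-compensator so that the \((S^c(t-))^{-1}\) weight cancels \(S_0^c(t-)\) and leaves the ratio \(S_0^c(t-)/S^c(t-)\), rewrites \(F_{0,1}-F_1\) through exactly the survival-weighted hazard-difference identity \eqref{eq:F1:F0:plug}, superimposes the binary-case \((\gamma,\pi)\) product structure with \(F_1\) in place of \(Q\), and obtains c.--d.\ from the collapse of \(\phi_1\) at \(a^*=1\) just as you describe. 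The only cosmetic difference is that the paper's final display keeps the censoring/mediator factor for \(a^*=0\) in the combined form \((\gamma_0 S_0^c-\gamma S^c)/(\gamma S^c)\) rather than your separated \((S_0^c/S^c-1)\) with \(\gamma\)-ratio weights, which is equivalent for reading off conditions a.--b.
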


\begin{proof}
  Define
  \(R_{a^*}(P, P_0) = \Psi_{a^*}( P) - \Psi_{a^*}(P_0) + P_0
  \phi_{a^*}(P)\).
  We show in Appendix \ref{app:remainder:survival} that
  \begin{align*}
 &   R_{1}(P,P_0)    =
                     \EE_{P_0}
    \bigg[\frac{\pi_0(1 \mid W)}{\bar{\pi}(1)} \sum_{z=0,1} 
    \gamma_0 ( z \mid 1, W)
    \bigg(  \int_0^{\tau} \bigg( \frac{ S_0^c(t- \mid z,1,W) -   S^c(t- \mid z,1,W)}{ S^c(t- \mid z,1,W) }  \bigg)  \notag \\
  &\qquad\qquad \times S_0(t- \mid z,1,W) \bigg( 1-
    \frac{F_1 (\tau \mid z,1,W) - F_1(t \mid z,1,W)}{S(t \mid z,1,W)}\bigg)\big( \Lambda_{0,1} (dt \mid z,1,W)  -  \Lambda_1 (dt \mid z,1,W)  \big) \notag \\
  &\qquad\qquad\qquad\qquad -\sum_{l\neq 1} \int_0^{\tau}
    \frac{S_0^c(t- \mid z,1,W)}{ S^c(t- \mid z,1,W) } S_0(t- \mid z,1,W)\bigg( 
    \frac{F_1 (\tau \mid z,1,W)- F_1(t \mid z,1,W)}{S(t \mid z,1,W)}\bigg) \notag \\[-0.3cm]
  &\qquad\qquad\qquad\qquad\qquad\qquad\qquad\qquad\qquad\qquad \times \big(\Lambda_{0,l} (dt \mid z,1,W) -
    \Lambda_l (dt \mid z,1,W)  \big) \bigg)   \bigg] \\
    &\qquad +  
                         \bigg( 1-     \frac{\bar{\pi}_0( 1)}{\bar{\pi}(1)} \bigg) \big( \Psi_{1} (P)
      - \Psi_{1} (P_0) \big)
      \intertext{and,}
&   R_{0}(P,P_0)         =
    \EE_{P_0}
    \bigg[\frac{\pi_0(1 \mid W)}{\bar{\pi}(1)} \sum_{z=0,1} 
    \gamma ( z \mid 0, W)
    \bigg(  \int_0^{\tau} \bigg( \frac{\gamma_0( z \mid 1, W) S_0^c(t- \mid z,1,W) - \gamma( z \mid 1, W) S^c(t- \mid z,1,W) }{\gamma( z \mid 1, W) S^c(t- \mid z,1,W)}
     \bigg) \notag \\
  &\qquad\qquad \times  S_0(t- \mid z,1,W)\bigg( 1-
    \frac{F_1 (\tau \mid z,1,W) - F_1(t \mid z,1,W)}{S(t \mid z,1,W)}\bigg)\big( \Lambda_{0,1} (dt \mid z,1,W)  -  \Lambda_1 (dt \mid z,1,W)  \big) \notag \\
  &\qquad\qquad\qquad\qquad -\sum_{l\neq 1} \int_0^{\tau}
    \frac{S_0^c(t- \mid z,1,W)}{ S^c(t- \mid z,1,W) } S_0(t- \mid z,1,W)\bigg( 
    \frac{F_1 (\tau \mid z,1,W)- F_1(t \mid z,1,W)}{S(t \mid z,1,W)}\bigg) \notag \\[-0.3cm]
  &\qquad\qquad\qquad\qquad\qquad\qquad\qquad\qquad\qquad\qquad \times \big(\Lambda_{0,l} (dt \mid z,1,W) -
    \Lambda_l (dt \mid z,1,W)  \big) \bigg)   \bigg] \\
  & \qquad + \EE_{P_0}  \bigg[  \bigg(  \frac{\pi_0(1 \mid W)  - \pi( 1 \mid W)
                }{\bar{\pi} (1) ( 1- \pi( 1 \mid W))} \bigg) 
  \sum_{z=0,1} F_{1}(\tau \mid z, 1, W) \big( \gamma(z \mid 0 , W)-  \gamma_0(z \mid 0 , W) \big)
    \bigg] \\
& \qquad +
  \EE_{P_0}  \bigg[ \frac{\pi_0(1 \mid W) }{\bar{\pi} (1)} 
  \sum_{z=0,1}  \big( F_{0,1}(\tau \mid z, 1, W)- F_{1}(\tau \mid z, 1, W) \big) \big( \gamma(z \mid 0 , W)-  \gamma_0(z \mid 0 , W) \big)
  \bigg]  \\
    &\qquad +  
                         \bigg( 1-     \frac{\bar{\pi}_0( 1)}{\bar{\pi}(1)} \bigg) \big( \Psi_{0} (P)
      - \Psi_{0} (P_0) \big). 
\end{align*}
from which a., b., c. and d.  follow.
\end{proof}

\subsection{Targeting algorithm}
\label{sec:survival:targeting}

In the event history setting we need to target both \(\Psi_0(P_0)\)
and \(\Psi_1(P_0)\), although the latter will be simpler. Indeed, note
that for \(a^*=1\) we have that
\begin{align*}
  \phi_{1} (P) (O)
  &  = 
    \frac{\1\lbrace A=1\rbrace}{\bar{\pi}(1)} 
    \bigg(  \int_0^{\tau}
    \big( S^c(t- \mid Z,A,W)\big)^{-1} \\
  &\qquad\qquad \times \bigg( 1-
    \frac{F_1 (\tau \mid Z,A,W) - F_1(t \mid Z,A,W)}{S(t \mid Z,A,W)}\bigg)\big(N_1 (dt) - \1\lbrace
    \tilde{T}\ge t \rbrace \lambda_1 (t \mid Z,A,W) dt \big) \\
  &\qquad\qquad -\sum_{l\neq 1} \int_0^{\tau}
    \big( S^c(t- \mid Z,A,W)\big)^{-1} \bigg( 
    \frac{F_1 (\tau \mid Z,A,W)- F_1(t \mid Z,A,W)}{S(t \mid Z,A,W)}\bigg) \\[-0.3cm]
  &\qquad\qquad\qquad\qquad\qquad\qquad\qquad\qquad\qquad\qquad\qquad \times \big(N_l (dt) - \1\lbrace
    \tilde{T}\ge t \rbrace \lambda_l (t \mid Z,A,W) dt \big) \bigg) \\[-0.1cm]
  &\qquad\quad  + \frac{\1\lbrace A=1\rbrace}{\bar{\pi}(1)}
    \big( F_1(\tau \mid Z, 1, W)
    - \Psi_{1} (P) \big)  , 
\end{align*}
so that targeting for \(\Psi_1(P_0)\) will require updating
\(\lambda_1,\ldots, \lambda_J\) such as to take care of the first term
(first four lines) above. Define
\begin{align}
  &H^{a^*}_{j,t}  (\lambda_1, \ldots, \lambda_J, \lambda^c, \pi) (Z,A,W) \notag\\
    \begin{split}
    &\qquad  = \begin{cases}
    \frac{\gamma( Z \mid a^*, W) }{\gamma( Z \mid 1, W)}
    \frac{\1\lbrace A=1\rbrace}{\bar{\pi}(1)} \frac{1}{
      {S}^c( t- \, \vert \,A, L)
    } \Big( 1- \frac{ {F}_1 ( \tmax  \mid Z, A, W) - {F}_1 ( t \mid Z, A, W)}{
      {S}(t \, \vert \, A,L)} \Big) , & \text{for } j=1, \\
    \frac{\gamma( Z \mid a^*, W) }{\gamma( Z \mid 1, W)}
    \frac{\1\lbrace A=1\rbrace}{\bar{\pi}(1)} \frac{1}{
      {S}^c( t- \, \vert \,Z, A, W)
    } \Big(
    -\frac{ {F}_1 ( \tmax  \mid Z, A, W) - {F}_1 ( t \mid Z, A, W)}{
      {S}(t \, \vert \, A,L) }
    \Big) , & \text{for } j \neq 1,
                                                         \end{cases}
      \end{split}\label{eq:surv:eff:ic:continuous:1:covars}
\end{align}
and note that the first term of the efficient influence function
presented in Theorem \ref{thm:eff:ic:survival} can then be written
\begin{align}
  \int_0^\tau   H^{a^*}_{j,t}  (\lambda_1, \ldots, \lambda_J, \lambda^c, \pi) (Z,A,W)\big( N_j(dt) -
  \1\lbrace \tilde{T}\ge t\rbrace \lambda_j(dt \mid Z,A,W) dt \big) .
  \label{eq:surv:eic:term}
\end{align}
Targeting may now proceed in an iterative manner as proposed by
\cite{rytgaard2021estimation}, updating each cause-specific hazard
\(\lambda_j\) along the multiplicative fluctuation model through
\(\lambda_j\) at \(\eps_j = 0\) defined as
\begin{align}
  \lambda_{j,\eps_j} (t \mid z, a, w)
  & = \lambda_j(t \mid z, a, w)
    \exp ( \eps_j H^{a^*}_{j,t} (\lambda_j,\ldots, \lambda_J, \lambda^c, \pi) (z,a,w) ), \qquad
    \eps_j\in\R, 
  \label{eq:lambda:fluc}
\end{align}
with fluctuation parameter \( \eps_j \in \R\).  With
\((O, \lambda) \mapsto \ell_{\mathrm{loglik}}(\lambda_j) (O)\)
denoting the log-likelihood function for \(\lambda_j\),
\begin{align}
  \ell_{\mathrm{loglik}}(\lambda_j) (O) = \int_0^{\tmax} \log \lambda_j (t \mid Z, A, W)\, N_j(dt) -
  \int_0^{\tmax} \1\lbrace \tilde{T} \ge t\rbrace \lambda_j (t \mid Z, A, W) dt , 
  \label{eq:def:loglik}
\end{align}
we have the property that
\begin{align}
         \begin{split}
           & \frac{d}{d\eps} \bigg\vert_{\eps = 0}
             \ell_{\mathrm{loglik}}(\lambda_{j,\eps}) (O) =
             \int_0^{\tmax}  H^{a^*}_{j,t} (\lambda_1, \ldots, \lambda_J, \lambda^c, \pi)
             (A,L) \, N_j(dt) \\[-0.5em]
           & \qquad\qquad\qquad\qquad\qquad
             - \, \int_0^{\tmax} \1\lbrace \tilde{T} \ge t\rbrace  H^{a^*}_{j,t} (\lambda_1, \ldots, \lambda_J, \lambda^c, \pi)
             (A,L) \, {\lambda}_{j,\eps} (t \mid Z, A, W ) dt,
  \end{split}  \label{eq:tmle:desired:property}
\end{align}
recognizing the right hand side as the relevant \(j\)th term
\eqref{eq:surv:eic:term} of the efficient influence curve. The
following targeting steps (across \(j\)) are now substituted for the
step updating \(\hat{Q}^k_n\) in Section \ref{sec:targeting:steps}:

\begin{description}
\item[Updating \(\hat{\lambda}^k_{j,n}\):] Estimate \(\eps_j\) by the
  maximum likelihood estimator
  \(\hat{\eps}_{j,n}^{k} = \text{argmax}_\eps \mathbb{P}_n
  \ell_{\mathrm{loglik}}(\hat{\lambda}^k_{j,n,\eps}) \) for the
  fluctuation parameter along the submodel \eqref{eq:lambda:fluc}. 
\end{description}
  
\noindent To target \(\Psi_0(P_0)\), the updating steps for estimators
for \(\gamma\) and \(\pi\) are carried out as in Section
\ref{sec:targeting:steps}, except with the clever covariates now
defined by
\begin{align}
  H_2 (Q,\pi) ( W) & :=  \frac{ \1\lbrace A=0\rbrace }{\pi (0\mid W)} \frac{\pi(1 \mid W) }{\bar{\pi}(1)}
                   \big( F_1(\tau \mid 1,1,W) - F_1(\tau \mid 0, 1, W)  \big),\label{eq:clever:2:surv}\\
  H_3 (Q, \gamma) ( W) & :=   \frac{ \sum_{z=0,1} F_1(\tau \mid z, 1, W)  
    \gamma(z \mid 0, W) - \Psi_0(P) }{\bar{\pi}(1)} .\label{eq:clever:3:surv} 
\end{align}
To target \(\Psi_1(P_0)\), only \(\hat{\lambda}^k_{1,n}\), \ldots,
\(\hat{\lambda}^k_{J,n}\) are updated.  The iterations are continued
until the final set of updated estimators, i.e.,
\(\hat{P}^{*,0}_n = (\hat{\lambda}^{*,0}_{1,n}, \ldots,
\hat{\lambda}^{*,0}_{J,n}, \hat{Q}^{*,0}_n, \hat{\gamma}^{*,0}_n,
\hat{\pi}^{*,0}_n,\hat{\bar{\pi}}_n)\) and
\(\hat{P}^{*,1}_n = (\hat{\lambda}^{*,1}_{1,n}, \ldots,
\hat{\lambda}^{*,1}_{J,n}, \hat{Q}^{*,1}_n, \hat{\gamma}^{*,1}_n,
\hat{\pi}^{*,1}_n,\hat{\bar{\pi}}_n)\), respectively, solve the
efficient influence curve equation sufficiently well.

\section{Discussion}
\label{sec:discussion}

With a focus on applications where policy interventions on
intermediate variables are of interest to reduce health disparities,
we have defined, studied and discussed a specific target estimand
arising under a stochastic intervention modifying the distribution of
the intermediate variable. We have further presented a theoretical
analysis and discussion of nonparametric estimation of this
effect. Finally, we have extended the theoretical analysis and
statistical estimation procedures to include right-censored
time-to-event outcomes. We hope that these contributions can help
inform policy decisions aimed at reducing health disparities among
different population groups defined by variables that cannot
meaningfully be subjected to intervention.

Our results show important double robustness properties of the
parameter that are not immediately trivial; indeed, our parameter
depends on all nuisance parameters, but in the end still allows for
misspecification as long as it is done within the constraints
presented in Theorem \ref{thm:double:robustness}. In our simulations
we verified these properties, and also found (which would need to be
theoretically verified) seemingly different robustness properties for
the corresponding interventional disparity indirect measure among the
full population considered by \cite{micali2018maternal}. Our
simulation study further highlights the differences in positivy
requirements for the intervention disparity indirect effect among the
full population and among the exposed population, and the potential
for more precise estimation when the support of \(W\) on the exposed
is contained in the unexposed. Finally, our simulations study
demonstrates the difference between the interventional interpretations
presented in Lemmas \ref{interpret:causal:1} and
\ref{interpret:causal:2}.

For the event history setting, we considered an iterative targeting
procedure along (local) least favorable submodel as proposed in
\cite{rytgaard2021estimation}.  Alternatively, the targeted could be
carried out along universal least favorable submodels
\citep{rytgaard2021one,rytgaard2022targetedlida} to target
multivariate parameter; this is often of critical interest in survival
and competing risks analysis when analyzing multivariate
parameters.

\newpage\appendix

\counterwithin*{equation}{section}
\renewcommand{\theequation}{\thesection.\arabic{equation}}

\section{Proofs of Lemma \ref{interpret:causal:1} and Lemma
  \ref{interpret:causal:2}}
\label{app:causal:proofs}

\subsection{Proof of Lemma \ref{interpret:causal:1}}
\label{app:lemma:1:proof}

The proof is straightforward. Starting from the right hand side of
\eqref{eq:target:parameter:1}, we have by Assumptions (A1)--(A4) that:
\begin{align*}
  &\EE [ Y^{{\gammapoor}}  - Y^{{\gamma_{a=1}}}  \mid A=1 ] \\
  & \quad = \EE \bigg[ \sum_{z=0,1} Y^z \big( \gammapoor(z \mid W) -  \gamma_{a=1} (z \mid  W)\big) \,\bigg\vert\, A=1\bigg] \\
  & \quad = \EE \bigg[\EE \bigg[ \sum_{z=0,1} Y^{z}  \big( \gamma(z \mid 0, W) -  \gamma (z \mid 1, W)\big) 
    \,\bigg\vert\, A=1, W\bigg]\,\bigg\vert\, A=1\bigg] \\
  & \quad = \EE \bigg[\sum_{z=0,1}  \EE \big[ Y^{z}  \,\big\vert\, A=1, W\big]
    \big( \gamma(z \mid 0, W) -  \gamma (z \mid 1, W)\big) 
    \,\bigg\vert\, A=1\bigg] \\
  & \quad \overset{(A1)}{=} \EE \bigg[\sum_{z=0,1}  \EE \big[ Y^{z}  \,\big\vert\, A=1, Z=z, W\big]
    \big( \gamma(z \mid 0, W) -  \gamma (z \mid 1, W)\big) 
    \,\bigg\vert\, A=1\bigg] \\
  & \quad \overset{(A2)}{=}\EE \bigg[\sum_{z=0,1}  \EE \big[ Y  \,\big\vert\, A=1,Z=z, W\big]
    \big( \gamma(z \mid 0, W) -  \gamma (z \mid 1, W)\big) 
    \,\bigg\vert\, A=1\bigg] , 
\end{align*}
which is seen to equal \(\Psi(P)\). Note that the positivity
assumptions (A3)--(A4) are needed to make sure all conditional
expectations are well-defined. This establishes
\eqref{eq:target:parameter:1}.

\subsection{Proof of Lemma \ref{interpret:causal:2}}
\label{app:lemma:2:proof}

  The proof is again straightforward. We start from the right hand
  side of \eqref{eq:target:parameter:2}:
\begin{align*}
  & \EE [ Y^{\gamma^{a=0}}  - Y^{\gamma^{a=1}}  \mid A=1 ] \\
  & \quad = \EE \bigg[ \sum_{z=0,1} Y^z \big(  P(Z^0 =z \mid W) -  P(Z^1 =z \mid W)\big)
    \,\bigg\vert\, A=1\bigg] \\
  & \quad = \EE \bigg[\EE \bigg[ \sum_{z=0,1} Y^{z}  \big(  P(Z^0 =z \mid W) -  P(Z^1 =z \mid W)\big) 
    \,\bigg\vert\, A=0, W\bigg]\,\bigg\vert\, A=1\bigg] \\
  & \quad = \EE \bigg[\sum_{z=0,1}  \EE \big[ Y^{z}  \,\big\vert\, A=1, W\big]
    \big(  P(Z^0 =z \mid W) -  P(Z^1 =z \mid W)\big) 
    \,\bigg\vert\, A=1\bigg] \\
  & \quad \overset{(A1)}{=} \EE \bigg[\sum_{z=0,1}  \EE \big[ Y^{z}  \,\big\vert\, A=1, Z=z, W\big]
    \big(  P(Z^0 =z \mid W) -  P(Z^1 =z \mid W)\big) 
    \,\bigg\vert\, A=1\bigg] \\
  & \quad \overset{(A2)}{=}\EE \bigg[\sum_{z=0,1}  \EE \big[ Y  \,\big\vert\, A=1,Z=z, W\big]
    \big(  P(Z^0 =z \mid W) -  P(Z^1 =z \mid W)\big) 
    \,\bigg\vert\, A=1\bigg] \\
  & \quad \overset{(A1*)}{=}\EE \bigg[\sum_{z=0,1}  \EE \big[ Y  \,\big\vert\, A=1,Z=z, W\big]
    \big(  P(Z^0 =z \mid A=0, W) -  P(Z^1 =z \mid A=1, W)\big)
    \,\bigg\vert\, A=1\bigg] \\
  & \quad \overset{(A2*)}{=}\EE \bigg[\sum_{z=0,1}  \EE \big[ Y  \,\big\vert\, A=1,Z=z, W\big]
    \big(  P(Z =z \mid A=0, W) -  P(Z =z \mid A=1, W)\big)     
    \,\bigg\vert\, A=1\bigg] \\
  & \quad =\EE \bigg[\sum_{z=0,1}  \EE \big[ Y  \,\big\vert\, A=1,Z=z, W\big]
    \big(  \gamma (z \mid 0, W) -   \gamma (z \mid 1, W)\big)
    \,\bigg\vert\, A=1\bigg]  . 
\end{align*}
This establishes \eqref{eq:target:parameter:2}.

\section{Proof Theorem \ref{thm:inference:tmle}}
\label{app:decomposition:proof}

By definition of the second-order remainder, see the proof of Theorem
\ref{thm:double:robustness}, we have that
\begin{align*}
  &  \Psi_{a^*}(P) - \Psi_{a^*}(P_0) \\
  & \qquad = 
    -  P_0 \phi^*(P) + R_2(P,P_0) \\
  & \qquad = 
    (\mathbb{P}_n - P_0 ) ( \phi^*(P) - \phi^* (P_0) ) 
    + 
    ( \mathbb{P}_n - P_0 )  \phi^* (P_0)
    - 
    \mathbb{P}_n \phi^*(P)
    +
    R_2(P,P_0). 
\end{align*}
Specifically, substituting the estimator \(\hat{P}^*_n\) for \(P\) in
the above expression gives a decomposition for the estimator
\(\hat{\psi}^{\mathrm{tmle}}_{a^*,n} = \Psi_{a^*}(\hat{P}^*_n)
\),
\begin{align}
  \Psi_{a^*}(\hat{P}^*_n) - \Psi_{a^*}(P_0)
  & =
    \mathbb{P}_n \phi^* (P_0) \notag \\
  &\quad - P_0   \phi^* (P_0) -  \mathbb{P}_n \phi^*(\hat{P}^*_n)  \notag\\
  &\quad   + \,
    (\mathbb{P}_n - P_0 ) ( \phi^*(\hat{P}^*_n) - \phi^* (P_0) ) \label{eq:term1} \\
  &\quad   + \,
    R_{a^*}(\hat{P}^*_n,P_0). \label{eq:term2} 
\end{align}
By definition we have that \(P_0 \phi^* (P_0) = 0\).  The iterations
of the targeting procedure (Section \ref{sec:targeting:steps}) are
repeated until the efficient influence curve equation is solved up to
factor \(1/(\sqrt{n}\log n)\); this ensures that \(\hat{P}^*_n\) used
to construct the TMLE estimator solves the efficient influence curve
equation, i.e., \( \mathbb{P}_n \phi^*(\hat{P}^*_n) =
o_P(n^{-1/2})\). The remaining terms \eqref{eq:term1} and
\eqref{eq:term2} are \(o_P(n^{-1/2})\) by Assumptions (R1) and
(R2). This finishes the proof.

\section{Derivation of the efficient influence functions}
\label{app:eff:ic}

\subsection{Binary outcome setting}
\label{app:eff:ic:binary}

We derive the efficient influence functions for the two parameters
\(\Psi_0(P)\) and \(\Psi_1(P)\) separately. Fix \(a^*=0,1\) and
consider the parameter \(\Psi_{a^*} \, :\, \mathcal{M}\rightarrow\R\),
i.e.,
\begin{align*}
  \Psi_{a^*}(P)
  &= \EE \bigg[ \sum_{z=0,1} \EE[ Y \mid  Z=z, A, W]   \gamma(z \mid a^*, W)  \bigg\vert A=1\bigg] \\
  &=   \int_{\mathcal{W}} \sum_{z=0,1}  Q (z, 1, w) 
    \gamma( z \mid a^*, w) 
    \frac{\pi( 1 \mid w)}{\bar{\pi}(1)}  d\mu (w) \\
  &=   \int_{\mathcal{W}} \sum_{z=0,1}  \int_{\mathcal{Y}} y \, dP_{Y} (y  \mid z, 1, w) 
    \gamma( z \mid a^*, w) 
    \frac{\pi( 1 \mid w)}{\bar{\pi}(1)}  d\mu (w).
\end{align*}
We first make some general observations. The observed data density
factorizes as follows
\begin{align}
  p( o \mid A=1) = p_{Y} (y \mid z, a, w) \gamma (z \mid a, w) \frac{\pi (a \mid w)}{\bar{\pi}(1)} \mu (w), 
  \label{eq:factorization:density}
\end{align}
and we define the parametric submodel \(P_{\eps}\) with density
\begin{align*}
  p_{\eps} (o ) = (1+\eps \mathcal{S}(o)) p_{\eps} (o),
\end{align*}
and score \(\mathcal{S} (o) = \frac{d}{d\eps}\vert_{\eps =0 }\log p_{\eps} (o)\).
The efficient influence function
\(\phi_{a^*} (P) \, :\, \mathcal{O} \rightarrow \R\) of the target
parameter \(\Psi_{a^*} \, :\, \mathcal{M}\rightarrow\R\) relative to
the maximal tangent space for the nonparametric model is a function in
this tangent space for which
\begin{align*}
 \frac{d}{d\eps}\bigg\vert_{\eps = 0} \Psi_{a^*}(P_{\eps}) = P \phi_{a^*} (P) \mathcal{S}, 
\end{align*}
for every \(\mathcal{S}\) in the tangent space and parametric model \(P_{\eps}\)
with score \(\mathcal{S}\). Since the observed data density factorizes according
to \eqref{eq:factorization:density}, the score \(\mathcal{S}\) decomposes into
an orthogonal sum of factor-specific
scores,
\begin{align*}
  S =
  \underbrace{\frac{d}{d\eps}\bigg\vert_{\eps =0 } \log p_{Y,\eps}}_{=\mathcal{S}_Y}
  +  \underbrace{ \frac{d}{d\eps}\bigg\vert_{\eps =0 } \log \gamma_{\eps}}_{=\mathcal{S}_{\gamma}}
  +  \underbrace{ \frac{d}{d\eps}\bigg\vert_{\eps =0 } \log \pi_{\eps}}_{=\mathcal{S}_{\pi}} 
  -  \underbrace{ \frac{d}{d\eps}\bigg\vert_{\eps =0 } \log P_{A,\eps}}_{=\mathcal{S}_{\bar{\pi}}}% \\
  +  \underbrace{ \frac{d}{d\eps}\bigg\vert_{\eps =0 } \log \mu_{\eps}}_{=\mathcal{S}_{\mu}}  
\end{align*}
where
\begin{align}
  \mathcal{S}_{j} (O) = \Pi (\mathcal{S} (O) \mid T_{j} (P) )
  = \EE [ \mathcal{S}(O) \mid X_j, \mathrm{Pa}(X_j)] -
  \EE [ \mathcal{S}(O) \mid  \mathrm{Pa}(X_j)],
    \label{eq:Sj:projection}
\end{align}
with \( \Pi (\mathcal{S} (O) \mid T_{j} (P) )\) denoting the projection of
\(\mathcal{S}(O)\) onto the tangent space \(T_{j} (P)\) corresponding to the
\(j\)th factor, and \(\mathrm{Pa}(X_j)\) denoting the observed
variables included in the conditioning set for the \(j\)th
factor. Define
\begin{align}
  p_{X_j, \eps} (X_j \mid \mathrm{Pa}(X_j)) = (1+\eps \mathcal{S}_j(O)) p_{X_j} (X_j \mid \mathrm{Pa}(X_j)),
  \label{eq:direction}
\end{align}
so that
\(\mathcal{S}_j (O) = \frac{d}{d\eps}\vert_{\eps =0 } \log p_{X_j,\eps} (X_j \mid
\mathrm{Pa}(X_j))\). Combining \eqref{eq:Sj:projection} and
\eqref{eq:direction}, it is clear that
\begin{align*}
  \frac{d}{d\eps}\bigg\vert_{\eps = 0}  p_{X_j, \eps} (X_j \mid \mathrm{Pa}(X_j))
  &=
    p_{X_j, \eps} (X_j \mid \mathrm{Pa}(X_j))   \frac{d}{d\eps}\bigg\vert_{\eps = 0} \log p_{X_j, \eps} (X_j \mid \mathrm{Pa}(X_j))\\
  &  =  p_{X_j, \eps} (X_j \mid \mathrm{Pa}(X_j)) \mathcal{S}_j(O) \\[0.2cm]
  &  =  p_{X_j, \eps} (X_j \mid \mathrm{Pa}(X_j)) \big( \EE [ \mathcal{S}(O) \mid X_j, \mathrm{Pa}(X_j)] -
    \EE [ \mathcal{S}(O) \mid  \mathrm{Pa}(X_j)]\big) , 
\end{align*}
which we will use in all the following calculations. Specifically,
note that
\begin{align*}
  \frac{d}{d\eps}\bigg\vert_{\eps = 0} \frac{1}{ P_{A, \eps} (A=a) } =
-    \frac{1}{ P_{A} (A=a) } \big( \EE [ \mathcal{S}(O) \mid A=a] -
    \EE [ \mathcal{S}(O)]\big). 
\end{align*}
We now split up the functional derivative of \(\Psi_{a^*}\) as
follows:
\begin{align}
  &\frac{d}{d\eps} \Psi_{a^*}(P_{\eps} )
  =  \frac{d}{d\eps}
    \int_{\mathcal{W}} \sum_{z=0,1}  \int_{\mathcal{Y}} y \, dP_{Y,\eps} (y  \mid z, 1, w) 
    \gamma_{\eps}( z \mid a^*, w) 
    \frac{\pi_{\eps}( 1 \mid w)}{P_{\eps}(A=1)}  d\mu_{\eps} (w)\notag \\
  & \quad =  \frac{d}{d\eps}
    \int_{\mathcal{W}} \sum_{a=0,1} \sum_{z=0,1}  \int_{\mathcal{Y}} y \, dP_{Y,\eps} (y  \mid z, a, w) 
    \frac{\gamma( z \mid a^*, w) }{\gamma( z \mid a, w)}
    \frac{\1\lbrace a=1\rbrace}{\bar{\pi}(1)}  \gamma(z \mid a, w)\pi( a \mid w)  d\mu (w) \label{eq:term:y} \\
  & \quad +  \frac{d}{d\eps}
    \int_{\mathcal{W}} \sum_{a=0,1}  \sum_{z=0,1}  \int_{\mathcal{Y}} y \, dP_{Y} (y  \mid z, 1, w) 
    \gamma_{\eps}( z \mid a, w) \frac{\1\lbrace a=a^*\rbrace}{\pi(a \mid w) } \pi(a \mid w)
    \frac{\pi( 1 \mid w)}{\bar{\pi}(1)}  d\mu (w) \label{eq:term:z} \\
  & \quad +  \frac{d}{d\eps}
    \int_{\mathcal{W}} \sum_{a=0,1}\sum_{z=0,1}  \int_{\mathcal{Y}} y \, dP_{Y} (y  \mid z, 1, w) 
    \gamma( z \mid a^*, w) 
    \frac{\pi_{\eps}( a \mid w)}{\bar{\pi}(1)} \1\lbrace a=1\rbrace d\mu (w) \label{eq:term:a} \\
  & \quad +  \frac{d}{d\eps}
    \int_{\mathcal{W}} \sum_{a=0,1} \sum_{z=0,1}  \int_{\mathcal{Y}} y \, dP_{Y} (y  \mid z, 1, w) 
    \gamma( z \mid a^*, w) 
    \frac{\pi( 1 \mid w)}{P_{\eps}(A=a)} \1\lbrace a=1\rbrace  d\mu (w) \label{eq:term:a:marginal} \\
  & \quad +  \frac{d}{d\eps}
    \int_{\mathcal{W}} \sum_{z=0,1}  \int_{\mathcal{Y}} y \, dP_{Y} (y  \mid z, 1, w) 
    \gamma( z \mid a^*, w) 
    \frac{\pi( 1 \mid w)}{P(A=0)}  d\mu_{\eps} (w) . \label{eq:term:w}
\end{align}
We consider each of the terms \eqref{eq:term:y}--\eqref{eq:term:w} one
by one. First, look at \eqref{eq:term:y}:
\begin{align*}
  & \frac{d}{d\eps}\bigg\vert_{\eps = 0}
    \int_{\mathcal{W}} \sum_{z=0,1}  \int_{\mathcal{Y}} y \, dP_{Y,\eps} (y  \mid z, a, w) 
    \frac{\gamma( z \mid a^*, w) }{\gamma( z \mid a, w)} 
    \frac{\1\lbrace a=1\rbrace}{\bar{\pi}(1)}  \gamma(z \mid a, w) \pi( a \mid w)  d\mu (w)  \\
  & \quad =    \int_{\mathcal{W}} \sum_{z=0,1}  \int_{\mathcal{Y}} y 
    \Big( \EE \big[ \mathcal{S}(O) \mid Y=y, Z=z, A=a, W=w\big] - \EE \big[ \mathcal{S}(O) \mid Z=z, A=a, W=w\big] \Big) 
   \\[-0.3cm]
&\qquad\qquad\qquad\qquad\qquad\qquad\qquad\qquad     dP_{Y} (y  \mid z, a, w)\frac{\gamma( z \mid a^*, w) }{\gamma( z \mid a, w)}
    \frac{\1\lbrace a=1\rbrace}{\bar{\pi}(1)}  \gamma(z \mid a, w)\pi( a \mid w)  d\mu (w)  \\
  &= \EE \bigg[  \mathcal{S}(O) Y  \frac{\gamma( Z \mid a^*, W) }{\gamma( Z \mid A, W)}
    \frac{\1\lbrace A=1\rbrace}{\bar{\pi}(1)} \bigg] - \EE \bigg[ \mathcal{S}(O) Q(Z, A, W)
    \frac{\gamma( Z \mid a^*, W) }{\gamma( Z \mid A, W)}
    \frac{\1\lbrace A=1\rbrace}{\bar{\pi}(1)} \bigg], 
\end{align*}
and we see that
\begin{align*}
  \phi_{Y, a^*}^*(P) (O)
  &=  \frac{\gamma( Z \mid a^*, W) }{\gamma( Z \mid A, W)}
  \frac{\1\lbrace A=1\rbrace}{\bar{\pi}(1)}
  \Big( Y - Q(Z, A, W)\Big) \\
  &=
  \begin{cases}
    \frac{\1\lbrace A=1\rbrace}{\bar{\pi}(1)}
    \Big( Y - Q(Z, A, W)\Big) & a^*=1, \\
    \frac{\gamma( Z \mid 0, W) }{\gamma( Z \mid 1, W)}
    \frac{\1\lbrace A=0\rbrace}{P(A=0)}
    \Big( Y - Q(Z, A, W)\Big) & a^*=0. 
    \end{cases}
\end{align*}
Second, consider \eqref{eq:term:z}:
\begin{align*}
  & \frac{d}{d\eps}\bigg\vert_{\eps = 0}    \int_{\mathcal{W}} \sum_{a=0,1}
    \sum_{z=0,1}  \int_{\mathcal{Y}} y \, dP_{Y} (y  \mid z, 1, w) 
    \gamma_{\eps}( z \mid a, w) \frac{\1\lbrace a=a^*\rbrace}{\pi(a \mid w) }
    \frac{\pi( 1 \mid w)}{\bar{\pi}(1)}   \pi(a \mid w) d\mu (w)  \\
  &\quad  =    \int_{\mathcal{W}} \sum_{a=0,1}  \sum_{z=0,1}  \int_{\mathcal{Y}} y \, dP_{Y} (y  \mid z, 1, w) 
\Big( \EE \big[ \mathcal{S}(O) \mid Z=z, A=a, W=w \big] - \EE \big[ \mathcal{S}(O) \mid A=a, W=w \big] \Big)
   \\[-0.13cm]
  &\qquad\qquad\qquad\qquad\qquad\qquad\qquad\qquad\qquad\qquad\qquad\qquad  \gamma( z \mid a, w) 
    \frac{\1\lbrace a=a^*\rbrace}{\pi(a \mid w) }
    \frac{\pi( 1 \mid w)}{\bar{\pi}(1)}   \pi(a \mid w) d\mu (w)  \\
  & \quad = 
    \EE \bigg[ \mathcal{S}(O) Q(Z, 1, W)    \frac{\1\lbrace A=a^*\rbrace}{\pi(A \mid W) }
    \frac{\pi( 1 \mid W)}{\bar{\pi}(1)}  \bigg] \\
    &\qquad\qquad\qquad\qquad\qquad\qquad\qquad\quad -
    \EE \bigg[ \mathcal{S}(O)    \frac{\1\lbrace A=a^*\rbrace}{\pi(A \mid W) } \frac{\pi( 1 \mid W)}{\bar{\pi}(1)}
    \sum_{z=0,1} Q(z, 0, W) \gamma(z \mid A , W) \bigg], 
\end{align*}
from which we derive that 
\begin{align*}
  \phi_{\gamma, a^*}^*(P) (O)
  &=     \frac{\1\lbrace A=a^*\rbrace}{\pi(A \mid W) } \frac{\pi( 1 \mid W)}{\bar{\pi}(1)}
    \bigg( Q(Z, 1, W) - \sum_{z=0,1} Q(z, 1, W) \gamma(z \mid A , W) \bigg)  \\
  &= \begin{cases}
    \frac{\1\lbrace A=1 \rbrace}{\bar{\pi}(1)}
    \bigg( Q(Z, 1, W) - \sum_{z=0,1} Q(z, 1, W) \gamma(z \mid A , W) \bigg)  , & a^*=1, \\
    \frac{\1\lbrace A=0\rbrace}{\pi(0 \mid W) } \frac{\pi( 1 \mid W)}{\bar{\pi}(1)}
    \bigg( Q(Z, 1, W) - \sum_{z=0,1} Q(z, 1, W) \gamma(z \mid A , W) \bigg)  , & a^*=1. 
  \end{cases}
\end{align*}
Third, consider \eqref{eq:term:a}:
\begin{align*}
  & \frac{d}{d\eps}\bigg\vert_{\eps = 0}
    \int_{\mathcal{W}} \sum_{a=0,1}\sum_{z=0,1}  \int_{\mathcal{Y}} y \, dP_{Y} (y  \mid z, 1, w) 
    \gamma( z \mid a^*, w) 
    \frac{\1\lbrace a=1\rbrace }{\bar{\pi}(1)} \pi_{\eps}( a \mid w)  d\mu (w) \\
  & \quad  =   \int_{\mathcal{W}} \sum_{a=0,1}\sum_{z=0,1}  \int_{\mathcal{Y}} y \, dP_{Y} (y  \mid z, 1, w) 
    \gamma( z \mid a^*, w) 
    \frac{\1\lbrace a=1\rbrace }{\bar{\pi}(1)}
    \Big( \EE \big[ \mathcal{S}(O) \mid A=a, W=w \big] \\[-0.3cm]
  &\qquad\qquad\qquad\qquad\qquad\qquad\qquad\qquad\qquad\qquad\qquad\qquad\qquad\qquad
    - \EE \big[ \mathcal{S}(O) \mid W= w \big] \Big)
    \pi( a \mid w)    d\mu (w)  \\
  & \quad = \EE \bigg[ \mathcal{S}(O)  \frac{\1\lbrace A=1\rbrace }{\bar{\pi}(1)}
    \sum_{z=0,1} Q(z, 1, W) \gamma( z \mid a^*, W) \bigg]
    - \EE \bigg[ \mathcal{S}(O)  \frac{\pi ( 1 \mid W) }{\bar{\pi}(1)}
    \sum_{z=0,1} Q(z, 1, W) \gamma( z \mid a^*, W) \bigg], 
\end{align*}
from which we see that
\begin{align*}
  \phi_{\pi, a^*}^*(P) (O)
  &=   \bigg( \frac{\1\lbrace A=1\rbrace }{\bar{\pi}(1)} - \frac{\pi ( 1 \mid W) }{\bar{\pi}(1)} \bigg)  
    \sum_{z=0,1} Q(z, 1, W) \gamma( z \mid a^*, W), \qquad a^*=0,1. 
\end{align*}
Next, consider \eqref{eq:term:a:marginal}:
\begin{align*}
  & \frac{d}{d\eps}\bigg\vert_{\eps = 0}
    \int_{\mathcal{W}} \sum_{a=0,1} \sum_{z=0,1}  \int_{\mathcal{Y}} y \, dP_{Y} (y  \mid z,1, w) 
    \gamma( z \mid a^*, w) 
    \frac{\1\lbrace a=1\rbrace}{P_{\eps}(A=a)}  \pi( a \mid w)  d\mu (w) \\
  & \quad = - \int_{\mathcal{W}} \sum_{a=0,1} \sum_{z=0,1}  \int_{\mathcal{Y}} y \, dP_{Y} (y  \mid z, 1, w) 
    \gamma( z \mid a^*, w) \1\lbrace a=1\rbrace
    \bigg(    \frac{ \EE [\mathcal{S}(O) \mid A=a] - \EE [ \mathcal{S}(O)]
    }{P(A=a)} \bigg) \pi( a \mid w)  d\mu (w)\\
  & \quad = - \bigg( \EE \bigg[
    \EE \big[ \mathcal{S}(O)  \mid A \big]   \frac{\1\lbrace A=1 \rbrace}{\bar{\pi}(1)}
    \EE \bigg[ \sum_{z=0,1} Q(z, 1, W) \gamma(z \mid a^*, W) \, \bigg\vert \, A\bigg] \bigg]
    - \EE \big[ \mathcal{S}(O) \Psi_{a^*}(P) \big] \bigg)\\
  & \quad = - \bigg( \EE \bigg[
    \EE \big[ \mathcal{S}(O)  \mid A \big]   \frac{\1\lbrace A=1 \rbrace}{\bar{\pi}(1)}
    \EE \bigg[ \sum_{z=0,1} Q(z, 1, W) \gamma(z \mid a^*, W) \, \bigg\vert \, A\bigg] \bigg]
    - \EE \big[ \mathcal{S}(O) \Psi_{a^*}(P) \big] \bigg)\\
  & \quad = - \bigg( \EE \bigg[
    \EE \big[ \mathcal{S}(O)  \mid A \big]   \frac{\1\lbrace A=1 \rbrace}{\bar{\pi}(1)}
    \EE \bigg[ \sum_{z=0,1} Q(z, 1, W) \gamma(z \mid a^*, W) \, \bigg\vert \, A=1\bigg] \bigg]
    - \EE \big[ \mathcal{S}(O) \Psi_{a^*}(P) \big] \bigg)\\
  & \quad = - \bigg( \EE \bigg[ \mathcal{S}(O) \frac{\1\lbrace A=1 \rbrace}{\bar{\pi}(1)} \Psi_{a^*}(P)\bigg]
    - \EE \big[ \mathcal{S}(O) \Psi_{a^*}(P) \big] \bigg) , 
\end{align*}
i.e., 
\begin{align*}
  \phi_{\bar{\pi}, a^*}^*(P) (O)
  &=  - \bigg( \frac{\1\lbrace A=1 \rbrace}{\bar{\pi}(1)}  - 1\bigg) \Psi_{a^*}(P),  \qquad a^*=0,1. 
\end{align*}
Lastly, consider \eqref{eq:term:w}:
\begin{align*}
  & \frac{d}{d\eps}\bigg\vert_{\eps = 0}
    \int_{\mathcal{W}} \sum_{a=0,1} \sum_{z=0,1}  \int_{\mathcal{Y}} y \, dP_{Y} (y  \mid z, 1, w) 
    \gamma( z \mid a^*, w) 
    \frac{\pi( 1 \mid w)  }{\bar{\pi}(1)} \1 \lbrace a= 1\rbrace d\mu_{\eps} (w)  \\
  & \quad =   \int_{\mathcal{W}} \sum_{a=0,1}  \sum_{z=0,1}  \int_{\mathcal{Y}} y \, dP_{Y} (y  \mid z, 1, w) 
    \gamma( z \mid a^*, w) 
    \frac{\1 \lbrace a= 1\rbrace  }{\bar{\pi}(1)}  \Big( \EE [\mathcal{S}(O) \mid W=w] - \EE [\mathcal{S}(O) ] \Big)
    \pi( a \mid w)d\mu (w)  \\
  & \quad = \EE \bigg[ \mathcal{S}(O)  \frac{\pi(1 \mid W)}{\bar{\pi}(1)}
    \sum_{z=0,1} Q(z, 1, W) \gamma(z \mid a^*, W) \bigg] - \EE \big[ \mathcal{S}(O) \Psi_{a^*}(P) \big]
.
\end{align*}
If we collect the last three expressions,
\begin{align*}
  \phi_{\pi, a^*}^*(P) (O)&=   \bigg( \frac{\1\lbrace A=1\rbrace }{\bar{\pi}(1)} - \frac{\pi ( 1 \mid W) }{\bar{\pi}(1)} \bigg)  
    \sum_{z=0,1} Q(z, 1, W) \gamma( z \mid a^*, W) \\
  \phi_{\bar{\pi}, a^*}^*(P) (O)
  &=  - \bigg( \frac{\1\lbrace A=1 \rbrace}{\bar{\pi}(1)}  - 1\bigg) \Psi_{a^*}(P) \\
  \phi_{\mu, a^*}^*(P) (O)
  &=  \frac{\pi(1 \mid W)}{\bar{\pi}(1)}
    \sum_{z=0,1} Q(z, 1, W) \gamma(z \mid a^*, W) - \Psi_{a^*}(P),
\end{align*}
we immediately we see that some terms cancel out, i.e., 
\begin{align*}
  \phi_{\pi, a^*}^*(P) (O) + \phi_{\bar{\pi}, a^*}^*(P) (O) + 
  \phi_{\mu, a^*}^*(P) (O) =  \frac{\1\lbrace A=1\rbrace }{\bar{\pi}(1)} \bigg(\sum_{z=0,1} Q(z, 1, W)
  \gamma(z \mid a^* , W)
  - \Psi_{a^*} (P) \bigg). 
\end{align*}
Thus, we conclude that the efficient influence function for the
parameter \(\Psi_{a^*}(P)\) is: 
\begin{align*}
  \phi_{a^*} (P) (O)& =  \phi_{Y, a^*}^*(P) (O)
    +  \phi_{\gamma, a^*}^*(P) (O) +
    \phi_{\pi, a^*}^*(P) (O) +
     \phi_{\bar{\pi}, a^*}^*(P) (O) + 
    \phi_{\mu, a^*}^*(P) (O) \\
  &  = \frac{\gamma( Z \mid a^*, W) }{\gamma( Z \mid 1, W)}
    \frac{\1\lbrace A=1\rbrace}{\bar{\pi}(1)}
    \Big( Y - Q(Z, A, W)\Big) \\
  &\qquad\quad  + \frac{\1\lbrace A=a^*\rbrace}{\pi(A \mid W) } \frac{\pi( 1 \mid W)}{\bar{\pi}(1)}
    \bigg( Q(Z, 1, W) - \sum_{z=0,1} Q(z, 1, W) \gamma(z \mid A , W) \bigg) \\
  &\qquad\quad + \frac{\1\lbrace A=1\rbrace }{\bar{\pi}(1)} \bigg(\sum_{z=0,1} Q(z, 1, W) \gamma(z \mid a^* , W)
    - \Psi_{a^*} (P) \bigg)  .
\end{align*}

\subsection{Event history setting}
\label{app:eff:ic:survival}

As in Section \ref{app:eff:ic:binary}, we derive the efficient
influence functions for the two parameters \(\Psi_0(P)\) and
\(\Psi_1(P)\) separately. Fix \(a^*=0,1\) and consider the parameter
\(\Psi_{a^*} \, :\, \mathcal{M}\rightarrow\R\), i.e.,
\begin{align*}
    \Psi_{a^*}(P)
    &= \EE \bigg[ \sum_{z=0,1} F_1 (\tau  \mid  Z=z, A, W)   \gamma(z \mid a^*, W)  \bigg\vert A=1\bigg] \\
    &=   \int_{\mathcal{W}} \sum_{z=0,1}  F_1 (\tau  \mid  Z=z, A, W) 
      \gamma( z \mid a^*, w) 
      \frac{\pi( 1 \mid w)}{\bar{\pi}(1)}  d\mu (w) \\
    &=   \int_{\mathcal{W}} \sum_{z=0,1} \bigg( \int_0^\tau \lambda_1 (t \mid z, 1, w) S(t-\mid z,1,w)dt\bigg) 
      \gamma( z \mid a^*, w) 
      \frac{\pi( 1 \mid w)}{\bar{\pi}(1)}  d\mu (w).
\end{align*}
The observed data density now factorizes as follows
\begin{align*}
  p( o \mid A=1) = p_{\tilde{T},\tilde{\Delta}} (t, \delta \mid z,a,w) \gamma (z \mid a, w)\frac{\pi (a \mid w)}{\bar{\pi}(1)} \mu (w),  
\end{align*}
where
\begin{align*}
 p_{\tilde{T},\tilde{\Delta}} (t, \delta \mid z,a,w) = \prod_{j=1}^J \big( \lambda_j ( t \mid z, a, w) \big)^{\1\lbrace \delta=j\rbrace}
  S(t- \mid z,a,w) \big( \lambda^c(t \mid z, a, w) \big)^{\1\lbrace \delta=0\rbrace}
  S^c(t-\mid z, a,w).
\end{align*}
Again, the score \(\mathcal{S}\) decomposes into an orthogonal sum of
factor-specific scores,
\begin{align*}
  \mathcal{S} =
  \underbrace{\frac{d}{d\eps}\bigg\vert_{\eps =0 } \log p_{T,\Delta,\eps}}_{=\mathcal{S}_{T,\Delta}}
  +  \underbrace{ \frac{d}{d\eps}\bigg\vert_{\eps =0 } \log \gamma_{\eps}}_{=\mathcal{S}_{\gamma}}
  +  \underbrace{ \frac{d}{d\eps}\bigg\vert_{\eps =0 } \log \pi_{\eps}}_{=\mathcal{S}_{\pi}} 
  -  \underbrace{ \frac{d}{d\eps}\bigg\vert_{\eps =0 } \log P_{A,\eps}}_{=\mathcal{S}_{\bar{\pi}}}% \\
  +  \underbrace{ \frac{d}{d\eps}\bigg\vert_{\eps =0 } \log \mu_{\eps}}_{=\mathcal{S}_{\mu}}  .
\end{align*}
We again split up the functional derivative of \(\Psi_{a^*}\) as
follows:
\begin{align}
  &\frac{d}{d\eps} \Psi_{a^*}(P_{\eps} )
    =     \frac{d}{d\eps}\bigg\vert_{\eps = 0}
    \int_{\mathcal{W}} \sum_{z=0,1} \bigg( \int_0^{\tau} {\lambda}_{1,\eps}(t \mid z, a, w)
    {S}_{\eps} (t- \mid z, a, w)  dt\bigg) \gamma_{\eps}( z \mid a^*, w) 
    \frac{\pi_{\eps}( 1 \mid w)}{P_{\eps}(A=1)}  d\mu_{\eps} (w)\notag \\
  & \quad =  \frac{d}{d\eps}
    \int_{\mathcal{W}} \sum_{a=0,1} \sum_{z=0,1}  \bigg( \int_0^{\tau} {\lambda}_{1,\eps}(t \mid z, a, w)
    {S}_{\eps} (t- \mid z, a, w)  dt\bigg) 
    \frac{\gamma( z \mid a^*, w) }{\gamma( z \mid a, w)}
    \frac{\1\lbrace a=1\rbrace}{\bar{\pi}(1)}  \gamma(z \mid a, w)\pi( a \mid w)  d\mu (w) \label{eq:term:survival} \\
  & \quad +  \frac{d}{d\eps}
    \int_{\mathcal{W}} \sum_{a=0,1}  \sum_{z=0,1}  \bigg( \int_0^{\tau} {\lambda}_{1,\eps}(t \mid z, 1, w)
    {S}_{\eps} (t- \mid z, 1, w)  dt\bigg) 
    \gamma_{\eps}( z \mid a, w) \frac{\1\lbrace a=a^*\rbrace}{\pi(a \mid w) } \pi(a \mid w)
    \frac{\pi( 1 \mid w)}{\bar{\pi}(1)}  d\mu (w) \notag \\
  & \quad +  \frac{d}{d\eps}
    \int_{\mathcal{W}} \sum_{a=0,1}\sum_{z=0,1}  \bigg( \int_0^{\tau} {\lambda}_{1,\eps}(t \mid z, 1, w)
    {S}_{\eps} (t- \mid z, 1, w)  dt\bigg) 
    \gamma( z \mid a^*, w) 
    \frac{\pi_{\eps}( a \mid w)}{\bar{\pi}(1)} \1\lbrace a=1\rbrace d\mu (w) \notag \\
  & \quad +  \frac{d}{d\eps}
    \int_{\mathcal{W}} \sum_{a=0,1} \sum_{z=0,1}   \bigg( \int_0^{\tau} {\lambda}_{1,\eps}(t \mid z, 1, w)
    {S}_{\eps} (t- \mid z, 1, w)  dt\bigg) 
    \gamma( z \mid a^*, w) 
    \frac{\pi( 1 \mid w)}{P_{\eps}(A=a)} \1\lbrace a=1\rbrace  d\mu (w) \notag \\
  & \quad +  \frac{d}{d\eps}
    \int_{\mathcal{W}} \sum_{z=0,1}  \bigg( \int_0^{\tau} {\lambda}_{1,\eps}(t \mid z, 1, w)
    {S}_{\eps} (t- \mid z, 1, w)  dt\bigg) 
    \gamma( z \mid a^*, w) 
    \frac{\pi( 1 \mid w)}{P(A=0)}  d\mu_{\eps} (w) ; \notag
\end{align}
the only term for which the calculations differ from Section
\ref{app:eff:ic:binary} is that corresponding to
\eqref{eq:term:survival}. The computation of this term follows exactly
as in \cite[][Appendix V]{rytgaard2021estimation}, and we get that
\begin{align*}
  &\phi_{\tilde{T},\tilde{\Delta}, a^*}^*(P) (O) =
  \frac{\gamma( Z \mid a^*, W) }{\gamma( Z \mid A, W)}
  \frac{\1\lbrace A=1\rbrace}{\bar{\pi}(1)}
\bigg(  \int_0^{\tau}
    \big( S^c(t- \mid A,W)\big)^{-1} \\
  &\qquad\qquad \times \bigg( 1-
    \frac{F_1 (\tau \mid A,W) - F_1(t \mid A,W)}{S(t \mid A,W)}\bigg)\big(N_1 (dt) - \1\lbrace
    \tilde{T}\ge t \rbrace \lambda_1 (t \mid A,W) dt \big) \\
 &\qquad\qquad -\sum_{l\neq 1} \int_0^{\tau}
    \big( S^c(t- \mid A,W)\big)^{-1} \bigg( 
   \frac{F_1 (\tau \mid A,W)- F_1(t \mid A,W)}{S(t \mid A,W)}\bigg) \\[-0.3cm]
    &\qquad\qquad\qquad\qquad\qquad\qquad\qquad\qquad\qquad\qquad\qquad \times \big(N_l (dt) - \1\lbrace
  \tilde{T}\ge t \rbrace \lambda_l (t \mid A,W) dt \big) \bigg).
\end{align*}
For the remaining terms, \(F_1(\tau \mid z, a, w)\) is substituted for
\(Q(z,a,w)\).

\section{Second-order remainders}
\label{app:remainder}

\subsection{Binary outcome setting}
\label{app:remainder:binary}

We here compute
\(R_{a^*}(P, P_0) = \Psi_{a^*}( P) - \Psi_{a^*}(P_0) + P_0 \phi_{a^*}(P)\). For this purpose, recall that
\begin{align*}
  \Psi_{a^*}(P)
  &= \EE \bigg[ \sum_{z=0,1} \EE[ Y \mid  Z=z, A, W]   \gamma(z \mid a^*, W)  \bigg\vert A=1\bigg] \\
  &=   \int_{\mathcal{W}} \sum_{z=0,1}  Q (z, 1, w) 
    \gamma( z \mid a^*, w) 
    \frac{\pi( 1 \mid w)}{\bar{\pi}(1)}  d\mu (w) \\
  &=   \int_{\mathcal{W}} \sum_{z=0,1}  \int_{\mathcal{Y}} y \, dP_{Y} (y  \mid z, 1, w) 
    \gamma( z \mid a^*, w) 
    \frac{\pi( 1 \mid w)}{\bar{\pi}(1)}  d\mu (w), 
\end{align*}
and that
\begin{align*}
  \phi_{a^*} (P) (O)
  &  = \frac{\gamma( Z \mid a^*, W) }{\gamma( Z \mid 1, W)}
    \frac{\1\lbrace A=1\rbrace}{\bar{\pi}(1)}
    \Big( Y - Q(Z, A, W)\Big) \\
  &\qquad\quad  + \frac{\1\lbrace A=a^*\rbrace}{\pi(A \mid W) } \frac{\pi( 1 \mid W)}{\bar{\pi}(1)}
    \bigg( Q(Z, 1, W) - \sum_{z=0,1} Q(z, 1, W) \gamma(z \mid A , W) \bigg) \\
  &\qquad\quad + \frac{\1\lbrace A=1\rbrace }{\bar{\pi}(1)} \bigg(\sum_{z=0,1} Q(z, 1, W) \gamma(z \mid a^* , W)
    - \Psi_{a^*} (P) \bigg)  .
\end{align*}
Now, first note that
\begin{align*}
  &  \EE_{P_0}
    \bigg[ \frac{\gamma( Z \mid a^*, W) }{\gamma( Z \mid 1, W)}
    \frac{\1\lbrace A=1\rbrace}{\bar{\pi}(1)}
    \Big( Y - Q(Z, A, W)\Big)\bigg] \\
  & \qquad =  \EE_{P_0}  \bigg[ \EE_{P_0}
    \bigg[ \frac{\gamma( Z \mid a^*, W) }{\gamma( Z \mid 1, W)}
    \frac{\1\lbrace A=1\rbrace}{\bar{\pi}(1)}
    \Big( Y - Q(Z, A, W)\Big) \,\bigg\vert \, Z,A,W \bigg] \bigg]\\
  & \qquad =  \EE_{P_0}  \bigg[ \frac{\gamma( Z \mid a^*, W) }{\gamma( Z \mid 1, W)}
    \frac{\1\lbrace A=1\rbrace}{\bar{\pi}(1)}
    \Big( Q_0(Z,1,W) - Q(Z, 1, W)\Big)  \bigg] \\
    & \qquad = \EE_{P_0}  \bigg[ \EE_{P_0}  \bigg[ \frac{\gamma( Z \mid a^*, W) }{\gamma( Z \mid 1, W)}
    \frac{\1\lbrace A=1\rbrace}{\bar{\pi}(1)}
      \Big( Q_0(Z,1,W) - Q(Z, 1, W)\Big)  \,\bigg\vert \, A,W \bigg] \bigg]\\
      & \qquad = \EE_{P_0}   \bigg[ 
    \frac{\1\lbrace A=1\rbrace}{\bar{\pi}(1)}
        \bigg( \sum_{z=0,1} \frac{\gamma( z \mid a^*, W) }{\gamma( z \mid 1, W)}
        \Big( Q_0(z,1,W) - Q(z, 1, W) \Big)\gamma_0 (z \mid 1,W) \bigg)   \bigg]\\
    & \qquad = \EE_{P_0}  \bigg[  \EE_{P_0}  \bigg[  
    \frac{\1\lbrace A=1\rbrace }{\bar{\pi} (1)}
      \bigg( \sum_{z=0,1} \frac{\gamma( z \mid a^*, W) }{\gamma( z \mid 1, W)}
      \Big( Q_0(z,1,W) - Q(z, 1, W) \Big)\gamma_0 (z \mid 1,W) \bigg)
      \,\bigg\vert \, W \bigg] \bigg]\\
    & \qquad = \EE_{P_0}  \bigg[ 
    \frac{\pi_0(1 \mid W) }{\bar{\pi} (1)}
      \bigg( \sum_{z=0,1} \frac{\gamma( z \mid a^*, W) }{\gamma( z \mid 1, W)}
      \Big( Q_0(z,1,W) - Q(z, 1, W) \Big)\gamma_0 (z \mid 1,W) \bigg)
   \bigg] , 
\end{align*}
second that
\begin{align*}
  &  \EE_{P_0}
    \bigg[ \frac{\1\lbrace A=a^*\rbrace}{\pi(A \mid W) } \frac{\pi( 1 \mid W)}{\bar{\pi}(1)}
    \bigg( Q(Z, 1, W) - \sum_{z=0,1} Q(z, 1, W) \gamma(z \mid A , W) \bigg) \bigg] \\
  & \qquad =  \EE_{P_0}  \bigg[ \EE_{P_0}
    \bigg[ \frac{\1\lbrace A=a^*\rbrace}{\pi(A \mid W) } \frac{\pi( 1 \mid W)}{\bar{\pi}(1)}
    \bigg( Q(Z, 1, W) - \sum_{z=0,1} Q(z, 1, W) \gamma(z \mid A , W) \bigg)
    \,\bigg\vert \, A,W \bigg] \bigg]\\
  & \qquad =  \EE_{P_0}  \bigg[ \frac{\1\lbrace A=a^*\rbrace}{\pi( a^* \mid W) }
    \frac{\pi( 1 \mid W)}{\bar{\pi}(1)}
    \sum_{z=0,1} Q(z, 1, W) \big( \gamma_0(z \mid a^* , W)-  \gamma(z \mid a^* , W) \big)  \bigg]\\
  & \qquad =   \EE_{P_0}  \bigg[\EE_{P_0}  \bigg[ \frac{\1\lbrace A=a^*\rbrace}{\pi(a^* \mid W) }
    \frac{\pi( 1 \mid W)}{\bar{\pi}(1)}
    \sum_{z=0,1} Q(z, 1, W) \big( \gamma_0(z \mid a^* , W)-  \gamma(z \mid a^* , W) \big)
    \,\bigg\vert \, W \bigg] \bigg] \\
  & \qquad =   \EE_{P_0}  \bigg[ \frac{\pi_0(a^* \mid W)}{\pi(a^* \mid W) }
    \frac{\pi( 1 \mid W)}{\bar{\pi}(1)}
    \sum_{z=0,1} Q(z, 1, W) \big( \gamma_0(z \mid a^* , W)-  \gamma(z \mid a^* , W) \big)
    \bigg] ,
\end{align*}
and third that 
\begin{align*}
  &  \EE_{P_0}
    \bigg[ \frac{\1\lbrace A=1\rbrace }{\bar{\pi}(1)} \bigg(\sum_{z=0,1} Q(z, 1, W) \gamma(z \mid a^* , W)
    - \Psi_{a^*} (P) \bigg) \bigg] \\
  & \qquad =  \EE_{P_0}  \bigg[ \EE_{P_0}
    \bigg[  \frac{\1\lbrace A=1\rbrace }{\bar{\pi}(1)} \sum_{z=0,1} Q(z, 1, W) \gamma(z \mid a^* , W)
    \,\bigg\vert \, W \bigg] \bigg]
    - \EE_{P_0}  \bigg[ \frac{\1\lbrace A=1\rbrace }{\bar{\pi}(1)} \Psi_{a^*} (P) \bigg]  \\
  & \qquad =  
    \EE_{P_0}  \bigg[ \frac{\pi_0( 1 \mid W)}{ \bar{\pi}(1)}
\sum_{z=0,1} Q(z, 1, W) \gamma(z \mid a^* , W)\bigg]
    -     \frac{\bar{\pi}_0( 1)}{\bar{\pi}(1)} \Psi_{a^*} (P) .
\end{align*}
Collecting the above we have:
\begin{align*}
  R_{a^*}(P,P_0) & = \Psi_{a^*}(P) - \Psi_{a^*}(P_0) \\
           & \,\,
             +\EE_{P_0}  \bigg[ 
             \frac{\pi_0(1 \mid W) }{\bar{\pi} (1)}
             \bigg( \sum_{z=0,1} \frac{\gamma( z \mid a^*, W) }{\gamma( z \mid 1, W)}
             \Big( Q_0(z,1,W) - Q(z, 1, W) \Big)\gamma_0 (z \mid 1,W) \bigg)
             \bigg] \\
           & \,\, + \EE_{P_0}  \bigg[ \frac{\pi_0(a^* \mid W)}{\pi(a^* \mid W) }
             \frac{\pi( 1 \mid W)}{ \bar{\pi}(1) }
             \sum_{z=0,1} Q(z, 1, W) \big( \gamma_0(z \mid a^* , W)-  \gamma(z \mid a^* , W) \big)
             \bigg] \\
           & \,\, +       \EE_{P_0}  \bigg[ \frac{\pi_0( 1 \mid W)}{ \bar{\pi}(1)}
             \sum_{z=0,1} Q(z, 1, W) \gamma(z \mid a^* , W)\bigg]
             -     \frac{\bar{\pi}_0( 1)}{\bar{\pi}(1)} \Psi_{a^*} (P),
             \intertext{to which we add and subtract as follows}
           & \,\,  \pm  \EE_{P_0}  \bigg[ \frac{\pi_0( 1 \mid W)}{ \bar{\pi}(1)}
             \sum_{z=0,1} Q_0(z, 1, W) \gamma_0(z \mid a^* , W)\bigg] .
\end{align*}
Since
\begin{align*}
  &  \EE_{P_0}  \bigg[ \frac{\pi_0( 1 \mid W)}{ \bar{\pi}(1)}
    \sum_{z=0,1} Q_0(z, 1, W) \gamma_0(z \mid a^* , W)\bigg] \\
  &\qquad  = \frac{\bar{\pi}_0(1)}{ \bar{\pi}(1)}
    \EE_{P_0}  \bigg[  \frac{\pi_0( 1 \mid W)}{ \bar{\pi}_0(1)}
    \sum_{z=0,1} Q_0(z, 1, W) \gamma_0(z \mid a^* , W)\bigg] =
    \frac{\bar{\pi}_0(1)}{ \bar{\pi}(1)} \Psi_{a^*}(P_0), 
\end{align*}
we can now write 
\begin{align}
 R_{a^*}(P,P_0)       & = \EE_{P_0}  \bigg[ 
             \frac{\pi_0(1 \mid W) }{\bar{\pi} (1)}
             \bigg( \sum_{z=0,1} \frac{\gamma( z \mid a^*, W) }{\gamma( z \mid 1, W)}
             \big( Q_0(z,1,W) - Q(z, 1, W) \big)\gamma_0 (z \mid 1,W)  \bigg)
                  \bigg] 
  \label{eq:R:1} \\
           & \,\, + \EE_{P_0}  \bigg[ \frac{\pi_0(a^* \mid W)}{\pi(a^* \mid W) }
             \frac{\pi( 1 \mid W)}{ \bar{\pi}(1) }
             \sum_{z=0,1} Q(z, 1, W) \big( \gamma_0(z \mid a^* , W)-  \gamma(z \mid a^* , W) \big)
             \bigg]   \label{eq:R:2} \\
           & \,\, +       \EE_{P_0}  \bigg[ \frac{\pi_0( 1 \mid W)}{ \bar{\pi}(1)}
             \sum_{z=0,1} \Big(  Q(z, 1, W) \gamma(z \mid a^* , W) -
             Q_0(z, 1, W) \gamma_0(z \mid a^* , W) \Big) 
             \bigg]   \label{eq:R:3} \\
           & \,\, +
             \bigg( 1-     \frac{\bar{\pi}_0( 1)}{\bar{\pi}(1)} \bigg) \big( \Psi_{a^*} (P)
             - \Psi_{a^*} (P_0) \big) 
             . \label{eq:R:4}
\end{align}
Now, first consider \eqref{eq:R:1}+\eqref{eq:R:3}, which we shall
write as follows:
\begin{align}
  & \EE_{P_0}  \bigg[ 
    \frac{\pi_0(1 \mid W) }{\bar{\pi} (1)}
    \sum_{z=0,1} \frac{\gamma_0 (z \mid 1,W) }{\gamma( z \mid 1, W)}
    \big( Q_0(z,1,W) - Q(z, 1, W) \big)\gamma( z \mid a^*, W)
    \bigg] \notag\\
  & \qquad\quad +  \EE_{P_0}  \bigg[ \frac{\pi_0( 1 \mid W)}{ \bar{\pi}(1)}
    \sum_{z=0,1} \Big(  Q(z, 1, W) \gamma(z \mid a^* , W) -
    Q_0(z, 1, W) \gamma_0(z \mid a^* , W) \Big) 
    \bigg] \notag \\
  & \qquad\quad \pm \EE_{P_0}  \bigg[ 
    \frac{\pi_0(1 \mid W) }{\bar{\pi} (1)}
    \sum_{z=0,1} 
    \big( Q_0(z,1,W) - Q(z, 1, W) \big)\gamma( z \mid a^*, W) 
    \bigg] \notag \\
  & \qquad = \EE_{P_0}  \bigg[ 
    \frac{\pi_0(1 \mid W) }{\bar{\pi} (1)}
    \sum_{z=0,1} \bigg(  \frac{\gamma_0 (z \mid 1,W) }{\gamma( z \mid 1, W)}  - 1\bigg) 
    \big( Q_0(z,1,W) - Q(z, 1, W) \big)\gamma( z \mid a^*, W)  \bigg)
    \bigg] \label{eq:R:13:1} \\
  & \qquad\quad + \EE_{P_0}  \bigg[ 
    \frac{\pi_0(1 \mid W) }{\bar{\pi} (1)}
    \sum_{z=0,1} 
    Q_0(z,1,W) \big( \gamma( z \mid a^*, W)  -  \gamma_0( z \mid a^*, W) \big)
    \bigg] .  \label{eq:R:13:2}
\end{align}
Next consider \eqref{eq:R:2}+\eqref{eq:R:13:2}, which we shall write
as
\begin{align*}
& \EE_{P_0}  \bigg[ \frac{\pi_0(a^* \mid W)}{\pi(a^* \mid W) }
             \frac{\pi( 1 \mid W)}{ \bar{\pi}(1) }
             \sum_{z=0,1} Q(z, 1, W) \big( \gamma_0(z \mid a^* , W)-  \gamma(z \mid a^* , W) \big)
  \bigg] \\
  & \qquad\quad  \pm  \EE_{P_0}  \bigg[   \frac{\pi_0(1 \mid W) }{\bar{\pi} (1)}
             \sum_{z=0,1} Q(z, 1, W) \big( \gamma(z \mid a^* , W)-  \gamma_0(z \mid a^* , W) \big)
             \bigg] \\
  & \qquad\quad +  \EE_{P_0}  \bigg[ 
    \frac{\pi_0(1 \mid W) }{\bar{\pi} (1)}
    \sum_{z=0,1} 
    Q_0(z,1,W) \big( \gamma( z \mid a^*, W)  -  \gamma_0( z \mid a^*, W) \big)
    \bigg] \\
& \qquad =
  \EE_{P_0}  \bigg[  \bigg( \frac{\pi_0(1 \mid W) }{\bar{\pi} (1)} - \frac{\pi_0(a^* \mid W)}{\pi(a^* \mid W) }
             \frac{\pi( 1 \mid W)}{ \bar{\pi}(1) } \bigg)
             \sum_{z=0,1} Q(z, 1, W) \big( \gamma(z \mid a^* , W)-  \gamma_0(z \mid a^* , W) \big)
  \bigg] \\
   & \qquad\quad +  \EE_{P_0}  \bigg[\frac{\pi_0(1 \mid W) }{\bar{\pi} (1)}
    \sum_{z=0,1} 
    \big( Q_0(z,1,W) - Q(z,1,W)\big)  \big( \gamma( z \mid a^*, W)  -  \gamma_0( z \mid a^*, W) \big)
    \bigg].
\end{align*}
Collecting all terms
\eqref{eq:R:13:1}+\eqref{eq:R:2}+\eqref{eq:R:13:2}+\eqref{eq:R:4} above we see that
\begin{align*}
  & R_{a^*}(P,P_0)
   =
    \EE_{P_0}  \bigg[ 
    \frac{\pi_0(1 \mid W) }{\bar{\pi} (1)}
    \sum_{z=0,1} \bigg(  \frac{\gamma_0 (z \mid 1,W) - \gamma( z \mid 1, W) }{\gamma( z \mid 1, W)}  \bigg) 
    \big( Q_0(z,1,W) - Q(z, 1, W) \big)\gamma( z \mid a^*, W)  \bigg)
    \bigg] \\
  & \qquad +   \EE_{P_0}  \bigg[  \bigg( \frac{\pi_0(1 \mid W) }{\bar{\pi} (1)} - \frac{\pi_0(a^* \mid W)}{\pi(a^* \mid W) }
             \frac{\pi( 1 \mid W)}{ \bar{\pi}(1) } \bigg)
             \sum_{z=0,1} Q(z, 1, W) \big( \gamma(z \mid a^* , W)-  \gamma_0(z \mid a^* , W) \big)
  \bigg] \\
  & \qquad  + \EE_{P_0}  \bigg[\frac{\pi_0(1 \mid W) }{\bar{\pi} (1)}
    \sum_{z=0,1} 
    \big( Q_0(z,1,W) - Q(z,1,W)\big)  \big( \gamma( z \mid a^*, W)  -  \gamma_0( z \mid a^*, W) \big)
    \bigg] \\
  & \qquad + 
             \bigg( 1-     \frac{\bar{\pi}_0( 1)}{\bar{\pi}(1)} \bigg) \big( \Psi_{a^*} (P)
             - \Psi_{a^*} (P_0) \big) . 
\end{align*}
Note that
\begin{align*}
  & \frac{\pi_0(1 \mid W) }{\bar{\pi} (1)} -
    \frac{\pi_0(1 \mid W)}{\pi(1 \mid W) }
    \frac{\pi( 1 \mid W)}{ \bar{\pi}(1) } =  \frac{\pi_0(1 \mid W) }{\bar{\pi} (1)} -
    \frac{\pi_0(1 \mid W)}{ \bar{\pi}(1) } = 0
    \intertext{and,}
  & \frac{\pi_0(1 \mid W) }{\bar{\pi} (1)} -
    \frac{\pi_0(0 \mid W)}{\pi(0 \mid W) }
    \frac{\pi( 1 \mid W)}{ \bar{\pi}(1) }  \\
       & \quad = \frac{\pi_0(1 \mid W)  ( 1- \pi( 1 \mid W))
    -  (1-      \pi_0(1 \mid W)) \pi( 1 \mid W)}{\bar{\pi} (1) ( 1- \pi( 1 \mid W))}
    =
    \frac{\pi_0(1 \mid W)  - \pi( 1 \mid W)
    }{\bar{\pi} (1) ( 1- \pi( 1 \mid W))}
    , 
\end{align*}
as well as, 
\begin{align*}
  \bigg(  \frac{\gamma_0 (z \mid 1,W) - \gamma( z \mid 1, W) }{\gamma( z \mid 1, W)}  \bigg)
  \gamma( z \mid 1, W) = \gamma_0 (z \mid 1,W) - \gamma( z \mid 1, W),
\end{align*}
i.e.,
\begin{align*}
&  R_{1}(P,P_0)    =
                       \bigg( 1-     \frac{\bar{\pi}_0( 1)}{\bar{\pi}(1)} \bigg) \big( \Psi_{1} (P)
                       - \Psi_{1} (P_0) \big) , 
                       \intertext{and,}
                      & R_{0}(P,P_0)  
 =
    \EE_{P_0}  \bigg[ 
    \frac{\pi_0(1 \mid W) }{\bar{\pi} (1)}
    \sum_{z=0,1} \bigg(  \frac{\gamma_0 (z \mid 1,W) - \gamma( z \mid 1, W) }{\gamma( z \mid 1, W)}  \bigg) 
    \big( Q_0(z,1,W) - Q(z, 1, W) \big)\gamma( z \mid 0, W)  \bigg)
    \bigg] \\
  & \qquad +   \EE_{P_0}  \bigg[  \bigg(  \frac{\pi_0(1 \mid W)  - \pi( 1 \mid W)
                }{\bar{\pi} (1) ( 1- \pi( 1 \mid W))} \bigg)
             \sum_{z=0,1} Q(z, 1, W) \big( \gamma(z \mid 0 , W)-  \gamma_0(z \mid 0 , W) \big)
  \bigg] \\
  & \qquad  + \EE_{P_0}  \bigg[\frac{\pi_0(1 \mid W) }{\bar{\pi} (1)}
    \sum_{z=0,1} 
    \big( Q_0(z,1,W) - Q(z,1,W)\big)  \big( \gamma( z \mid 0, W)  -  \gamma_0( z \mid 0, W) \big)
    \bigg] \\
  & \qquad + 
             \bigg( 1-     \frac{\bar{\pi}_0( 1)}{\bar{\pi}(1)} \bigg) \big( \Psi_{0} (P)
             - \Psi_{0} (P_0) \big).
\end{align*}

\subsection{Event history setting}
\label{app:remainder:survival}

We here compute
\(R_{a^*}(P, P_0) = \Psi_{a^*}( P) - \Psi_{a^*}(P_0) + P_0
\phi_{a^*}(P)\) for the event history setting. For this purpose,
recall that
\begin{align*}
  \Psi_{a^*}(P)
  &= \EE \bigg[ \sum_{z=0,1} F_1 (\tau  \mid  Z=z, A, W)   \gamma(z \mid a^*, W)  \bigg\vert A=1\bigg] \\
  &=   \int_{\mathcal{W}} \sum_{z=0,1}  F_1 (\tau  \mid  Z=z, A, W) 
    \gamma( z \mid a^*, w) 
    \frac{\pi( 1 \mid w)}{\bar{\pi}(1)}  d\mu (w) \\
  &=   \int_{\mathcal{W}} \sum_{z=0,1} \bigg( \int_0^\tau \lambda_1 (t \mid z, 1, w) S(t-\mid z,1,w)dt\bigg) 
    \gamma( z \mid a^*, w) 
    \frac{\pi( 1 \mid w)}{\bar{\pi}(1)}  d\mu (w), 
\end{align*}
and that
\begin{align*}
  \phi_{a^*} (P) (O)
  &  = \frac{\gamma( Z \mid a^*, W) }{\gamma( Z \mid 1, W)}
    \frac{\1\lbrace A=1\rbrace}{\bar{\pi}(1)} \
  \bigg(  \int_0^{\tau}
    \big( S^c(t- \mid Z,A,W)\big)^{-1} \\
  &\qquad\qquad \times \bigg( 1-
    \frac{F_1 (\tau \mid Z,A,W) - F_1(t \mid Z,A,W)}{S(t \mid Z,A,W)}\bigg)\big(N_1 (dt) - \1\lbrace
    \tilde{T}\ge t \rbrace \lambda_1 (t \mid Z,A,W) dt \big) \\
 &\qquad\qquad -\sum_{l\neq 1} \int_0^{\tau}
    \big( S^c(t- \mid Z,A,W)\big)^{-1} \bigg( 
   \frac{F_1 (\tau \mid Z,A,W)- F_1(t \mid Z,A,W)}{S(t \mid Z,A,W)}\bigg) \\[-0.3cm]
    &\qquad\qquad\qquad\qquad\qquad\qquad\qquad\qquad\qquad\qquad\qquad \times \big(N_l (dt) - \1\lbrace
  \tilde{T}\ge t \rbrace \lambda_l (t \mid Z,A,W) dt \big) \bigg) \\[-0.1cm]
                    &\qquad\quad  + \frac{\1\lbrace A=a^*\rbrace}{\pi(A \mid W) } \frac{\pi( 1 \mid W)}{\bar{\pi}(1)}
                      \bigg( F_1(\tau \mid Z, 1, W) - \sum_{z=0,1} F_1 (\tau \mid z, 1, W) \gamma(z \mid A , W) \bigg) \\
                    &\qquad\quad + \frac{\1\lbrace A=1\rbrace }{\bar{\pi}(1)} \bigg(\sum_{z=0,1} F_1 (\tau \mid z, 1, W) \gamma(z \mid a^* , W)
                      - \Psi_{a^*} (P) \bigg)  .
\end{align*}
The computations go largely as in Section \ref{app:remainder:binary}.
By iterated expectations, we first write the term corresponding to
\eqref{eq:R:1} as follows
\begin{align*}
  &  \EE_{P_0}
    \bigg[ \frac{\gamma( Z \mid a^*, W) }{\gamma( Z \mid 1, W)}
    \frac{\1\lbrace A=1\rbrace}{\bar{\pi}(1)} 
  \bigg(  \int_0^{\tau}
    \big( S^c(t- \mid Z,A,W)\big)^{-1} \\
  &\qquad\qquad \times \bigg( 1-
    \frac{F_1 (\tau \mid Z,A,W) - F_1(t \mid Z,A,W)}{S(t \mid Z,A,W)}\bigg)\big(N_1 (dt) - \1\lbrace
    \tilde{T}\ge t \rbrace \lambda_1 (t \mid Z,A,W) dt \big) \\
 &\qquad\qquad -\sum_{l\neq 1} \int_0^{\tau}
    \big( S^c(t- \mid Z,A,W)\big)^{-1} \bigg( 
   \frac{F_1 (\tau \mid Z,A,W)- F_1(t \mid Z,A,W)}{S(t \mid Z,A,W)}\bigg) \\[-0.3cm]
    &\qquad\qquad\qquad\qquad\qquad\qquad\qquad\qquad\qquad\qquad\qquad \times \big(N_l (dt) - \1\lbrace
  \tilde{T}\ge t \rbrace \lambda_l (t \mid Z,A,W) dt \big) \bigg)\bigg] \\
  & \qquad =  \EE_{P_0}
    \bigg[\frac{\pi_0(1 \mid W)}{\bar{\pi}(1)} \sum_{z=0,1} \frac{\gamma_0( z \mid 1, W) }{\gamma( z \mid 1, W)}
      \gamma ( z \mid a^*, W)
  \bigg(  \int_0^{\tau}
      \frac{S_0^c(t- \mid z,1,W)}{ S^c(t- \mid z,1,W) } S_0(t- \mid z,1,W) \\
  &\qquad\qquad\qquad \times \bigg( 1-
    \frac{F_1 (\tau \mid z,1,W) - F_1(t \mid z,1,W)}{S(t \mid z,1,W)}\bigg)\big( \Lambda_{0,1} (dt \mid z,1,W)  -  \Lambda_1 (dt \mid z,1,W)  \big) \\
 &\qquad\qquad -\sum_{l\neq 1} \int_0^{\tau}
    \frac{S_0^c(t- \mid z,1,W)}{ S^c(t- \mid z,1,W) } S_0(t- \mid z,1,W)\bigg( 
   \frac{F_1 (\tau \mid z,1,W)- F_1(t \mid z,1,W)}{S(t \mid z,1,W)}\bigg) \\[-0.3cm]
  &\qquad\qquad\qquad\qquad\qquad\qquad\qquad\qquad\qquad\qquad\qquad \times \big(\Lambda_{0,l} (dt \mid z,1,W) -
    \Lambda_l (dt \mid z,1,W)  \big) \bigg)   \bigg].
\end{align*}
The term corresponding to \eqref{eq:R:3} is
\begin{align*}
& \EE_{P_0}
    \bigg[ \frac{\pi_0(1 \mid W)}{\bar{\pi}(1)}  \sum_{z=0,1} \big( F_1(\tau \mid z, 1, W)- F_{0,1}(\tau \mid z, 1, W)
    \big) \gamma_0(z \mid a^* , W) \bigg],
\intertext{to which we add and subtract,}
& \qquad \pm \, \EE_{P_0}
  \bigg[ \frac{\pi_0(1 \mid W)}{\bar{\pi}(1)}  \sum_{z=0,1}
  \big( F_1(\tau \mid z, 1, W)- F_{0,1}(\tau \mid z, 1, W)
    \big) \gamma(z \mid a^* , W) \bigg],
\end{align*}
Following \citet[][Appendix V]{rytgaard2021estimation}, we can rewrite
\( F_{1} (\tau \mid z, 1, W) - F_{0,1} (\tau \mid z, 1, W)\) as follows
\begin{align}
  &  F_1 (\tau \mid z, 1, W) - F_{0,1} (\tau \mid z, 1, W)\label{eq:F1:F0:plug}
  \\
    \begin{split}
  &\quad = \int_0^{\tmax} S_0(t- \mid z, 1, w) 
    \left(1-  \frac{
    F_{1} (\tmax\mid z, 1, w) - F_{1}(t \mid z, 1, w)}{
    S(t \mid z, 1, w)}\right) \big( \Lambda_{0,1}(dt \mid z, 1, w)
    - \Lambda_1( dt \mid z, 1, w)\big)\\
  &\qquad  
   - \int_0^{\tmax} S_0(t- \mid z, 1, w) 
    \left( \frac{
    F_{1} (\tmax\mid z, 1, w) - F_{1}(t \mid z, 1, w)}{
    S(t \mid z, 1, w)}\right) \big( \Lambda_{0,2}(dt \mid z, 1, w)
    - \Lambda_2( dt \mid z, 1, w)\big),
      \end{split}. \notag
\end{align}
Using this, the terms corresponding to \eqref{eq:R:1}+\eqref{eq:R:3}
can be written
\begin{align}
  & \EE_{P_0}
    \bigg[\frac{\pi_0(1 \mid W)}{\bar{\pi}(1)} \sum_{z=0,1} 
    \gamma ( z \mid a^*, W)
    \bigg(  \int_0^{\tau} \bigg( \frac{\gamma_0( z \mid 1, W) }{\gamma( z \mid 1, W)}
    \frac{S_0^c(t- \mid z,1,W)}{ S^c(t- \mid z,1,W) } - 1 \bigg) S_0(t- \mid z,1,W) \notag \\
  &\qquad\qquad\qquad \times \bigg( 1-
    \frac{F_1 (\tau \mid z,1,W) - F_1(t \mid z,1,W)}{S(t \mid z,1,W)}\bigg)\big( \Lambda_{0,1} (dt \mid z,1,W)  -  \Lambda_1 (dt \mid z,1,W)  \big) \notag \\
  &\qquad\qquad -\sum_{l\neq 1} \int_0^{\tau}
    \frac{S_0^c(t- \mid z,1,W)}{ S^c(t- \mid z,1,W) } S_0(t- \mid z,1,W)\bigg( 
    \frac{F_1 (\tau \mid z,1,W)- F_1(t \mid z,1,W)}{S(t \mid z,1,W)}\bigg) \notag \\[-0.3cm]
  &\qquad\qquad\qquad\qquad\qquad\qquad\qquad\qquad\qquad\qquad \times \big(\Lambda_{0,l} (dt \mid z,1,W) -
    \Lambda_l (dt \mid z,1,W)  \big) \bigg)   \bigg] \notag \\
  & \qquad +   \EE_{P_0}
    \bigg[ \frac{\pi_0(1 \mid W)}{\bar{\pi}(1)}  \sum_{z=0,1}
    F_{0,1}(\tau \mid z, 1, W) \big( \gamma(z \mid a^* , W) - \gamma_0(z \mid a^* , W)\big) \bigg]. \label{eq:R:13:S2}
\end{align}
The term corresponding to \eqref{eq:R:2} is
\begin{align*}
&  \EE_{P_0}  \bigg[ \frac{\pi_0(a^* \mid W)}{\pi(a^* \mid W) }
  \frac{\pi( 1 \mid W)}{ \bar{\pi}(1) }
  \sum_{z=0,1} F_{1}(\tau \mid z, 1, W) \big( \gamma_0(z \mid a^* , W)-  \gamma(z \mid a^* , W) \big)
  \bigg]   ,
  \intertext{to which we add and subtract}
  &  \pm  \EE_{P_0}  \bigg[   \frac{\pi_0(1 \mid W) }{\bar{\pi} (1)}
             \sum_{z=0,1} F_{1}(\tau \mid z, 1, W)  \big( \gamma(z \mid a^* , W)-  \gamma_0(z \mid a^* , W) \big)
             \bigg]
  \intertext{which together added to \eqref{eq:R:13:S2} becomes}
  &  \EE_{P_0}  \bigg[ \bigg(  \frac{\pi_0(1 \mid W)}{\bar{\pi}(1)} -
    \frac{\pi_0(a^* \mid W)}{\pi(a^* \mid W) }
  \frac{\pi( 1 \mid W)}{ \bar{\pi}(1) } \bigg) 
  \sum_{z=0,1} F_{1}(\tau \mid z, 1, W) \big( \gamma(z \mid a^* , W)-  \gamma_0(z \mid a^* , W) \big)
    \bigg] \\
& \qquad +
  \EE_{P_0}  \bigg[ \frac{\pi_0(1 \mid W) }{\bar{\pi} (1)} 
  \sum_{z=0,1}  \big( F_{0,1}(\tau \mid z, 1, W)- F_{1}(\tau \mid z, 1, W) \big) \big( \gamma(z \mid a^* , W)-  \gamma_0(z \mid a^* , W) \big)
    \bigg]
\end{align*}
Collecting all terms we now see that
\begin{align*}
  & R_{a^*} (P, P_0)  =
 \EE_{P_0}
    \bigg[\frac{\pi_0(1 \mid W)}{\bar{\pi}(1)} \sum_{z=0,1} 
    \gamma ( z \mid a^*, W)
    \bigg(  \int_0^{\tau} \bigg( \frac{\gamma_0( z \mid 1, W) }{\gamma( z \mid 1, W)}
    \frac{S_0^c(t- \mid z,1,W)}{ S^c(t- \mid z,1,W) } - 1 \bigg) S_0(t- \mid z,1,W) \notag \\
  &\qquad\qquad\qquad \times \bigg( 1-
    \frac{F_1 (\tau \mid z,1,W) - F_1(t \mid z,1,W)}{S(t \mid z,1,W)}\bigg)\big( \Lambda_{0,1} (dt \mid z,1,W)  -  \Lambda_1 (dt \mid z,1,W)  \big) \notag \\
  &\qquad\qquad\qquad\qquad -\sum_{l\neq 1} \int_0^{\tau}
    \frac{S_0^c(t- \mid z,1,W)}{ S^c(t- \mid z,1,W) } S_0(t- \mid z,1,W)\bigg( 
    \frac{F_1 (\tau \mid z,1,W)- F_1(t \mid z,1,W)}{S(t \mid z,1,W)}\bigg) \notag \\[-0.3cm]
  &\qquad\qquad\qquad\qquad\qquad\qquad\qquad\qquad\qquad\qquad \times \big(\Lambda_{0,l} (dt \mid z,1,W) -
    \Lambda_l (dt \mid z,1,W)  \big) \bigg)   \bigg] \\
  & \qquad + \EE_{P_0}  \bigg[ \bigg(  \frac{\pi_0(1 \mid W)}{\bar{\pi}(1)} -
    \frac{\pi_0(a^* \mid W)}{\pi(a^* \mid W) }
  \frac{\pi( 1 \mid W)}{ \bar{\pi}(1) } \bigg) 
  \sum_{z=0,1} F_{1}(\tau \mid z, 1, W) \big( \gamma(z \mid a^* , W)-  \gamma_0(z \mid a^* , W) \big)
    \bigg] \\
& \qquad +
  \EE_{P_0}  \bigg[ \frac{\pi_0(1 \mid W) }{\bar{\pi} (1)} 
  \sum_{z=0,1}  \big( F_{0,1}(\tau \mid z, 1, W)- F_{1}(\tau \mid z, 1, W) \big) \big( \gamma(z \mid a^* , W)-  \gamma_0(z \mid a^* , W) \big)
  \bigg]  \\
    &\qquad +  
                         \bigg( 1-     \frac{\bar{\pi}_0( 1)}{\bar{\pi}(1)} \bigg) \big( \Psi_{a^*} (P)
      - \Psi_{a^*} (P_0) \big), 
\end{align*}
so that, 
\begin{align*}
  &  R_{1}(P,P_0)    =
    \EE_{P_0}
    \bigg[\frac{\pi_0(1 \mid W)}{\bar{\pi}(1)} \sum_{z=0,1} 
    \gamma_0 ( z \mid 1, W)
    \bigg(  \int_0^{\tau} \bigg( \frac{ S_0^c(t- \mid z,1,W) -   S^c(t- \mid z,1,W)}{ S^c(t- \mid z,1,W) }  \bigg)  \notag \\
  &\qquad\qquad \times S_0(t- \mid z,1,W) \bigg( 1-
    \frac{F_1 (\tau \mid z,1,W) - F_1(t \mid z,1,W)}{S(t \mid z,1,W)}\bigg)\big( \Lambda_{0,1} (dt \mid z,1,W)  -  \Lambda_1 (dt \mid z,1,W)  \big) \notag \\
  &\qquad\qquad\qquad\qquad -\sum_{l\neq 1} \int_0^{\tau}
    \frac{S_0^c(t- \mid z,1,W)}{ S^c(t- \mid z,1,W) } S_0(t- \mid z,1,W)\bigg( 
    \frac{F_1 (\tau \mid z,1,W)- F_1(t \mid z,1,W)}{S(t \mid z,1,W)}\bigg) \notag \\[-0.3cm]
  &\qquad\qquad\qquad\qquad\qquad\qquad\qquad\qquad\qquad\qquad \times \big(\Lambda_{0,l} (dt \mid z,1,W) -
    \Lambda_l (dt \mid z,1,W)  \big) \bigg)   \bigg] \\
  &\qquad +  
    \bigg( 1-     \frac{\bar{\pi}_0( 1)}{\bar{\pi}(1)} \bigg) \big( \Psi_{1} (P)
    - \Psi_{1} (P_0) \big) , 
    \intertext{and,}
  & R_{0}(P,P_0)  
    =
    \EE_{P_0}
    \bigg[\frac{\pi_0(1 \mid W)}{\bar{\pi}(1)} \sum_{z=0,1} 
    \gamma ( z \mid 0, W)
    \bigg(  \int_0^{\tau} \bigg( \frac{\gamma_0( z \mid 1, W) S_0^c(t- \mid z,1,W) - \gamma( z \mid 1, W) S^c(t- \mid z,1,W) }{\gamma( z \mid 1, W) S^c(t- \mid z,1,W)}
    \bigg) \notag \\
  &\qquad\qquad \times  S_0(t- \mid z,1,W)\bigg( 1-
    \frac{F_1 (\tau \mid z,1,W) - F_1(t \mid z,1,W)}{S(t \mid z,1,W)}\bigg)\big( \Lambda_{0,1} (dt \mid z,1,W)  -  \Lambda_1 (dt \mid z,1,W)  \big) \notag \\
  &\qquad\qquad\qquad\qquad -\sum_{l\neq 1} \int_0^{\tau}
    \frac{S_0^c(t- \mid z,1,W)}{ S^c(t- \mid z,1,W) } S_0(t- \mid z,1,W)\bigg( 
    \frac{F_1 (\tau \mid z,1,W)- F_1(t \mid z,1,W)}{S(t \mid z,1,W)}\bigg) \notag \\[-0.3cm]
  &\qquad\qquad\qquad\qquad\qquad\qquad\qquad\qquad\qquad\qquad \times \big(\Lambda_{0,l} (dt \mid z,1,W) -
    \Lambda_l (dt \mid z,1,W)  \big) \bigg)   \bigg] \\
  & \qquad + \EE_{P_0}  \bigg[  \bigg(  \frac{\pi_0(1 \mid W)  - \pi( 1 \mid W)
    }{\bar{\pi} (1) ( 1- \pi( 1 \mid W))} \bigg) 
    \sum_{z=0,1} F_{1}(\tau \mid z, 1, W) \big( \gamma(z \mid 0 , W)-  \gamma_0(z \mid 0 , W) \big)
    \bigg] \\
  & \qquad +
    \EE_{P_0}  \bigg[ \frac{\pi_0(1 \mid W) }{\bar{\pi} (1)} 
    \sum_{z=0,1}  \big( F_{0,1}(\tau \mid z, 1, W)- F_{1}(\tau \mid z, 1, W) \big) \big( \gamma(z \mid 0 , W)-  \gamma_0(z \mid 0 , W) \big)
    \bigg]  \\
  &\qquad +  
    \bigg( 1-     \frac{\bar{\pi}_0( 1)}{\bar{\pi}(1)} \bigg) \big( \Psi_{0} (P)
    - \Psi_{0} (P_0) \big). 
\end{align*}
Note that we used for \(R_1(P,P_0)\) that
\begin{align*}
  & \frac{\gamma_0( z \mid 1, W) S_0^c(t- \mid z,1,W) -  \gamma( z \mid 1, W) S^c(t- \mid z,1,W)}{\gamma( z \mid 1, W) S^c(t- \mid z,1,W) }
    \gamma( z \mid 1, W) \\
  & \qquad +
    \frac{ S^c(t- \mid z,1,W) \big( \gamma( z \mid 1, W)  -  \gamma_0( z \mid 1, W) }{\gamma( z \mid 1, W)  S^c(t- \mid z,1,W) } \\ 
  & \qquad\qquad = 
    \frac{\gamma_0( z \mid 1, W) S_0^c(t- \mid z,1,W) -  \gamma_0( z \mid 1, W) S^c(t- \mid z,1,W)}{ S^c(t- \mid z,1,W) }     \\
  & \qquad\qquad = 
    \frac{\gamma_0( z \mid 1, W) \big(  S_0^c(t- \mid z,1,W) -  S^c(t- \mid z,1,W) \big)}{ S^c(t- \mid z,1,W) }    , 
\end{align*}
as well as \eqref{eq:F1:F0:plug}.

\end{document}